\documentclass[lettersize,journal]{IEEEtran}
\usepackage{amsmath,amsfonts,amssymb}
\usepackage{algorithmic}
\usepackage{algorithm}
\usepackage{array}
\usepackage[caption=false,font=normalsize,labelfont=sf,textfont=sf]{subfig}
\usepackage{textcomp}
\usepackage{stfloats}
\usepackage{url}
\usepackage{verbatim}
\usepackage{graphicx}
\usepackage{cite}
\usepackage{color}
\newtheorem{remark}{Remark} 
\newenvironment*{proof}{{\it Proof.}}{\hfill $\square$\par} 
\newtheorem{lemma}{Lemma} 
\newtheorem{my_claim}{Claim} 
\newtheorem{thm}{Theorem} 
\newtheorem{corollary}{Corollary}
\usepackage{parskip}
\usepackage{makecell}
\usepackage{float}

\let\appendices\relax
\usepackage[title]{appendix} 

\allowdisplaybreaks[1]

\begin{document}

\title{Rapid Boundary Stabilization of Two-Dimensional Elastic Plates with In-Domain Aeroelastic Instabilities}

\author{Xingzhi Huang, Ji wang,
\thanks{Xingzhi Huang is with the School of Aerospace Engineering, Xiamen University, Xiamen, Fujian, China,
34520241151610@xmu.edu.cn}
\thanks{Ji wang is with the School of Aerospace Engineering, Xiamen University, Xiamen, Fujian, China,
jiwang9024@gmail.com}
}

\markboth{IEEE Transactions on Control Systems Technology}%
{Shell \MakeLowercase{\textit{et al.}}: A Sample Article Using IEEEtran.cls for IEEE Journals}

\IEEEpubid{0000--0000/00\$00.00~\copyright~2021 IEEE}

\maketitle

\begin{abstract}
Motivated by active wing flutter suppression in high-Mach-number flight, this paper presents a rapid boundary stabilization strategy for a two-dimensional PDE-modeled elastic plate with in-domain instabilities, where the exponential stability is achieved with a decay rate that can be arbitrarily assigned by the users. First, the aeroelastic system is modeled as two-dimensional coupled wave PDEs with internal anti-damping terms, derived by Piston theory and Hamilton's principle. Using Fourier series expansion, the 2-D problem is decomposed into a large-scale 1-D system, based on which full-state boundary feedback control is designed via PDE backstepping transformation. To enable output-feedback implementation, a state observer is further designed to estimate the distributed states over the two-dimensional spatial domain using the available boundary measurements. Through Lyapunov analysis, the exponential stability of the 2-D elastic plate PDE under the proposed boundary control is established with a designer-tunable decay rate. Numerical simulations verify the effectiveness of the control strategy in suppressing flow-induced vibrations in a 2-D elastic plate. 
\end{abstract}

\begin{IEEEkeywords}
Boundary control; Two-dimensional hyperbolic PDEs; Backstepping; Aeroelastic flutter suppression.
\end{IEEEkeywords}

\section{Introduction}

\subsection{Motivation}
\IEEEPARstart{M}{odern} flying-wing aircraft are characterized by low mass and low wing bending natural frequencies. In high-Mach-number flight regimes, unsteady aerodynamic loads strongly interact with the elastic wing dynamics, resulting in aeroelastic flutter. This instability severely constrains the flight envelope and degrades the mission capability of the aircraft. The underlying physics of the flexible wing under aerodynamic loading can be accurately captured by a two-dimensional elastic plate model featuring spatially destabilizing flow-induced terms, i.e., two-dimensional PDEs with in-domain instabilities. Most existing results on active wing vibration suppression approximate the two-dimensional PDE by a one-dimensional model \cite{flexible_wing,Nonhomogeneous,wing_unsteady,membrane,Adaptive}. In this paper, we treat the wing as a two-dimensional PDE system in both the dynamic modeling and the control system design.

\subsection{Boundary control of two-dimensional elastic plates}
Early work by Lagnese in 1989 proposed boundary actuation to stabilize elastic plates \cite{1989Plates}. Subsequently, Rao demonstrated that boundary control can effectively suppress transverse vibrations of elastic plates under certain boundary conditions \cite{BoundaryDamping}. Using Rao’s approach, Liu et al. \cite{LIU2005353} arrived at the same conclusions as Lagnese, but they only achieved asymptotic stability rather than exponential stability. Recent research by Bouhamed et al. \cite{BOUHAMED2024} presents the problem of optimal control of a nonlinear Kirchhoff plate equation by a bilinear control on the boundary. He and Zhang \cite{flexible_wing} studied vibration control of a nonlinear flexible wing. Kar et al. \cite{torsional} addressed bending-torsional vibration control of plates using $H_\infty$ methods. Robu et al. \cite{simultaneousH} presented active control of a plane wing's vibrations induced by the sloshing of
large masses of fuel inside a partly full tank. He et al. \cite{3dflexible} later developed a trajectory-tracking control for 3-D flexible wings. Heining et al. \cite{optimalQ} investigated optimal actuator placement in control of quasi-static elastic plates. However, most existing results focus on systems with inherent internal damping and therefore do not address the more challenging case of plates with in-domain instabilities that are unmatched with boundary control inputs. Boundary control of flexible or compliant structures exhibiting instabilities has been studied using the backstepping method. However, the available results are largely restricted to one-dimensional structures, such as compliant cables and flexible beams, as shown in Table \ref{tab:1}. Boundary control designs for 2-D plate structures with in-domain instability are rare.
\begin{table}[h!]
	\begin{center}
		\caption{Backstepping boundary control for compliant or flexible structures.}
        \label{tab:1}
		\begin{tabular}{lc} 
			\hline
			Category & References\\
			\hline
			Compliant Cables/Strings & \cite{remote_sensing},\cite{drilling},\cite{wave_drilling},\cite{Cable_elevator},\cite{deep_sea},\cite{Cable}\\
			Flexible Beams & \cite{Rapid},\cite{shear_beam},\cite{beam_design},\cite{uncontrolled_boundary},\cite{2025arXivW}\\
			Flexible Plates & This paper\\
			\hline
		\end{tabular}
	\end{center}
\end{table}

\IEEEpubidadjcol

\subsection{Higher-dimensional backstepping control of PDEs}

In recent decades, significant progress has been made in PDE control. Among various established methodologies, the backstepping approach has demonstrated notable effectiveness, offering a systematic framework for boundary feedback design, primarily for one-dimensional parabolic and hyperbolic systems. Extending backstepping to higher spatial dimensions presents considerable challenges, largely due to the increased complexity of the resulting kernel equations. Advances in this direction have been achieved by utilizing specific geometric symmetries and boundary conditions to simplify these equations. Initial breakthroughs in multidimensional control, particularly in fluid flow applications, made use of spatial invariance \cite{2002Distributed}. This approach transformed the original system into families of parameterized one-dimensional PDEs via Fourier transforms \cite{2007A}, a technique later employed in the control of convection loops \cite{2010Boundaryobserver} and magnetohydrodynamic flows \cite{2008Stabilization}. For domains exhibiting radial symmetry, such as disks \cite{2016Disk}, spheres \cite{2019Sphere}, and n-dimensional balls \cite{2015Ball}, explicit backstepping controllers have been derived using spherical harmonics and Bessel functions. More recent developments address systems with spatially varying coefficients on arbitrary-dimensional balls \cite{2016Explicit} and extend the methodology to three-dimensional multi-agent systems \cite{2015Multi,2024Multi,2024Delay}, as well as to PDEs coupled with lower-dimensional boundary dynamics \cite{2019Stabilization}. Besides, \cite{GUAN2023111242,GUAN2025} developed a bilateral delay-compensation control strategy for an unstable 2-D reaction-diffusion system with distinct input delays.
Despite these advances, existing theoretical results are largely confined to parabolic PDEs. In contrast, the active suppression of wing flutter in high-Mach-number flight regimes calls for boundary control frameworks of high-dimensional coupled wave PDEs with in-domain unstable sources, which remain largely unexplored.

\subsection{Main Contribution}

1) The work \cite{Rapid} addressed state-feedback boundary control of one-dimensional beam systems by rewriting it into a class of coupled hyperbolic PDEs \cite{local_exponential,Heterodirectional,n+1,2plus2,finite_2plus2}. Here, we address output-feedback boundary control of two-dimensional plate models, accounting for more complex couplings among three-directional vibrations.

2) Different from the boundary control of high-dimensional PDEs investigated in \cite{2015Multi,2024Multi,2024Delay,2007A,2016Explicit,2019Stabilization,GUAN2023111242,GUAN2025} for parabolic PDEs, this work deals with a group of coupled wave PDEs with instability sources in two dimensions.

3) To the best of our knowledge, this is the first result of rapid boundary stabilizing a two-dimensional PDE-modeled elastic plate with anti-stable sources in the spatial domain, where the decay rate can be arbitrarily assigned by users. 

\subsection{Notation}

\begin{itemize}
	\item[$\bullet$] The notation $|\cdot|$ denotes Euclidean norm. The notation $\dot{z}(t)$ denotes the time derivative of $z$. The notation $c^{(i)}(x)$ denote the $i$ times derivatives of $c$.
	\item[$\bullet$] Let $\Omega \subset \mathbb{R}^n$ be an open set, and let $x = (x_1,\dots,x_n) \in \Omega$ denote the spatial variable.
	The space $L^2(\Omega)$ consists of all measurable functions 
	$f : \Omega \to \mathbb{R}$ such that
	$\int_\Omega |f(x)|^2 \, dx < \infty$,
	with norm
	$
	\|f\|_{L^2(\Omega)}
	=\left( \int_\Omega |f(x)|^2 dx \right)^{1/2}$. The Sobolev space $H^1(\Omega)$ is defined as
	$H^1(\Omega)=\left\{f \in L^2(\Omega) :
	\partial_{x_i} f \in L^2(\Omega), i=1,\dots,n\right\}$, where $\partial_{x_i} f$ denotes the weak partial derivative with respect to $x_i$.
	The norm in $H^1(\Omega)$ is given by
	$\|f\|_{H^1(\Omega)}=\left(\|f\|_{L^2(\Omega)}^2+\sum_{i=1}^n
	\|\partial_{x_i} f\|_{L^2(\Omega)}^2
	\right)^{1/2}$.
\end{itemize}

\section{Modeling}
\subsection{Modeling of aerodynamic forces}
We consider an aeroelastic system shown in Fig.\ref{fig:model}, with the domain $0 < x^* < L_1^*, 0 < y^* < L_2^*, z^* = 0$, where $L_1^*$ and $L_2^*$ are the lengths of the plate in $x^*$, $y^*$ direction, respectively, and its thickness is $h^*$, so that it represents a two-dimensional elastic plate with flow in the $x^*$ direction.

The physical models used in treating fluid-structure interaction phenomena vary enormously in their complexity and range of applicability. The simplest model is the very popular "Piston theory". Based on Piston theory \cite{Machnumber}, the local aerodynamic pressure exerted by local fluid velocity normal to the elastic plate is given by
\begin{align}
	p^*(t^*,x^*,y^*) = \tfrac{\rho_f^* U^*}{M^*}\left(	w^*_{t^*} + U^*w^*_{x^*}\right),
\end{align}
where $w^*(t^*,x^*,y^*)$ is the displacement of the plate in $z^*$ direction at the axial location $(x^*, y^*)$ at time $t^*$, and ${\rho_f}^*$, $U^*$ and $M^*$ are the free stream density, velocity, and Mach number \cite{Machnumber}, respectively.
\begin{figure}
	\begin{center}
		\includegraphics[height=10.5cm, width = 8.5cm]{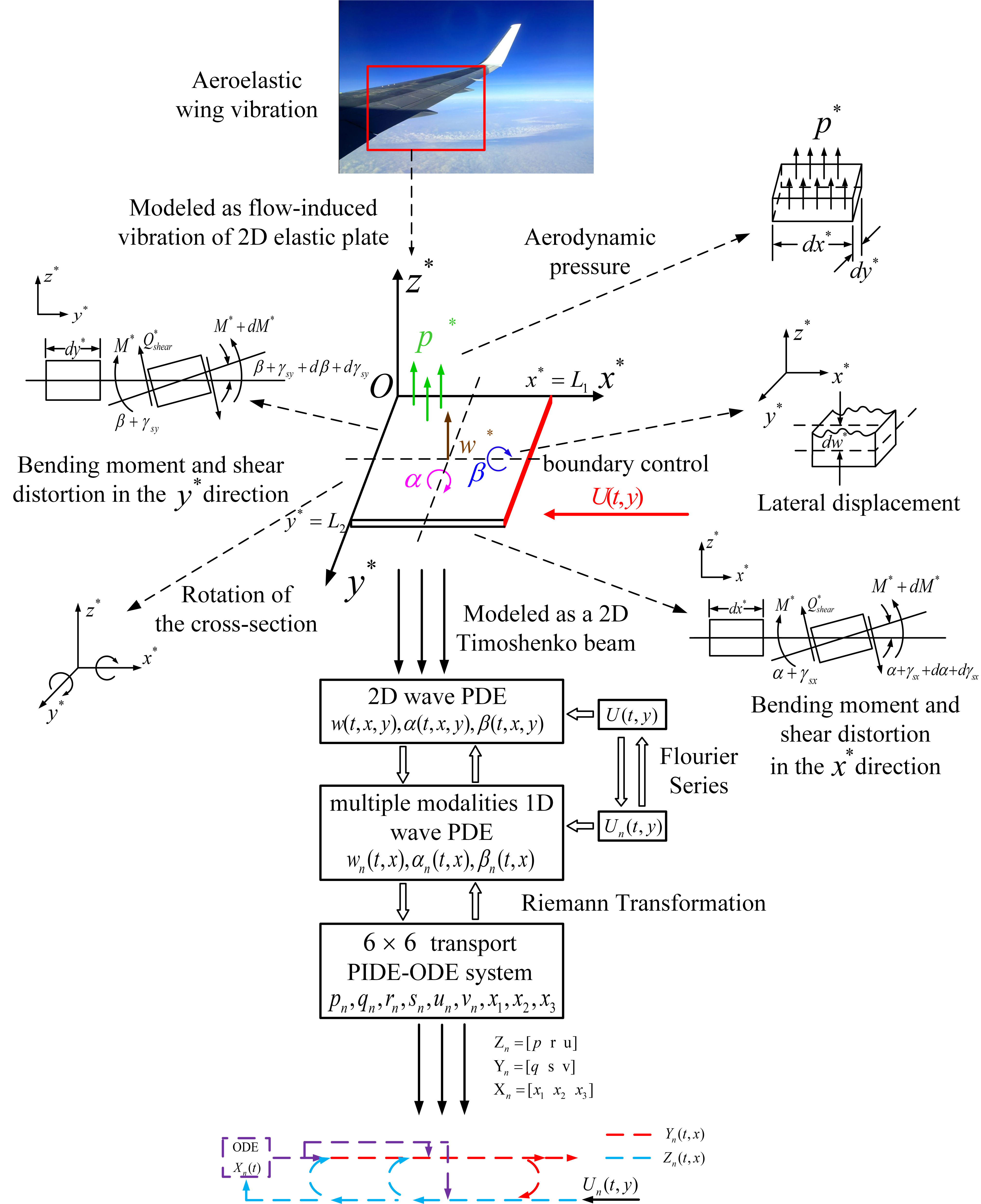}    
		\caption{Flow-induced vibration wing: from the physical model to the mathematical plant.}  
		\label{fig:model}                                 
	\end{center}                                 
\end{figure}

\subsection{Equations of motion and boundary conditions via Hamilton's Principle}

The modeling process follows the approach in \cite{dynamics}. The equation of motion is obtained using Hamilton's variation principle. According to Kirchhoff Plate Theory in \cite{plates} and ignore the influence of Poisson's ratio, the potential energy due to bending is given by
\begin{align}\label{eq_bend}
	\nonumber
	PE^{*}_{bending} &= \tfrac{D^*}{2}\smallint\nolimits_{0}^{L_1^*} \smallint\nolimits_{0}^{L_2^*}  [(w^*_{x^*x^*})^2 + (w^*_{y^*y^*})^2 \\
	&\quad + (w^*_{x^*y^*} + w^*_{y^*x^*})^2] dx^*dy^*,
\end{align}
where $D^* = \tfrac{E^*{h^*}^3}{12}$ presents the stiffness of a homogeneous plate of orthotropic material which can be defined as $E^*I^*$, where $E^*$ is the modulus of elasticity and $I^* = \tfrac{{h^*}^3}{12}$ is the equivalent cross-sectional moment of inertia about the neutral axis.
This model adds the effect of shear distortion (but not rotary inertia). We introduce variables $\alpha$ and $\beta$, representing the angle of rotation of the cross-section due to the bending moment in $x^*$ direction and $y^*$ direction respectively, and $\gamma_{sx}, \gamma_{sy}$, the angle of distortion due to shear in $x^*$ and $y^*$ direction.
The total angle  of rotation in $x^*$ direction is the sum of $\alpha$ and $\gamma_{sx}$ and the sum of $\beta$ and $\gamma_{sy}$ in $y^*$ direction, and is approximately the first derivative of the deflection:
\begin{align}
	\alpha + \gamma_{sx} = w^*_{x^*},\quad \beta + \gamma_{sy} = w^*_{y^*}.
\end{align}
Therefore, the potential energy due to bending given in equation \eqref{eq_bend} is slightly modified in this case such that
\begin{align}
	\nonumber
	PE^{*}_{bending} &= \tfrac{E^*I^*}{2}\smallint\nolimits_{0}^{L_1^*}\smallint\nolimits_{0}^{L_2^*}  [(\alpha_{x^*})^2 + (\beta_{y^*})^2\\
	&\quad + (\alpha_{y^*} + \beta_{x^*})^2] dx^*dy^*.
\end{align}
The potential energy due to shear is given by
\begin{align}
	\nonumber
	PE^*_{shear} &= \tfrac{k^{\prime}G^*h^*}{2}\smallint\nolimits_{0}^{L_1^*}\smallint\nolimits_{0}^{L_2^*} [(w^*_{x^*} - \alpha)^2\\
	&\quad + (w^*_{y^*} - \beta)^2] dx^*dy^*.
\end{align}  
where $k^{\prime}$ is the shape factor, $G^*$ is the shear modulus and $h^*$ is the thickness of the plate.
The kinetic energy due to displacement is given by
\begin{align}
	KE^*_{trans} = \tfrac{\rho^*h^*}{2}\smallint\nolimits_{0}^{L_1^*}\smallint\nolimits_{0}^{L_2^*} (w^*_{t^*})^2 dx^*dy^*,
\end{align}
where $\rho^*$ is the density of the elastic plate.
The kinetic energy due to the rotation of the cross-section is given by
\begin{align}
	\scalebox{0.9}{${KE}^*_{rot} = \tfrac{{\rho}^*}{2} \smallint\nolimits_{0}^{L_1^*}\smallint\nolimits_{0}^{L_2^*}[I_1^*(w^*_{x^*t^*})^2 + I_2^* (w^*_{y^*t^*})^2] dx^*dy^*$},
\end{align}
where ${I}_1^*$ and ${I}_2^*$ are the area moments of inertia of the plate in the $x$ and $y$ directions, respectively.

In this model we assume that there is no rotational kinetic energy associated with shear distortion,but only with rotation due to bending. Therefore, the kinetic energy term due to rotation is modified to include only the angle of rotation due to bending by replacing $w^*_{x^*}$ with $\alpha$ and $w^*_{y^*}$ with $\beta$:
\begin{align}
	{KE}^*_{rot} = \tfrac{{\rho}^*}{2} \smallint\nolimits_{0}^{L_1^*}\smallint\nolimits_{0}^{L_2^*}[I_1^*(\alpha_{t^*})^2 + I_2^*(\beta_{t^*})^2]dx^*dy^*.
\end{align}
The Lagrangian, defined by kinetic energy-potential energy, is obtained as follows
\begin{align}
	L_{lagr}^* = KE_{trans}^* + KE_{rot}^* - PE_{beding} - PE_{shear}^*.
\end{align}
The virtual work due to the non-conservative transverse pressure is given by
\begin{align}
	\delta {W_{nc}^*} = \smallint\nolimits_{0}^{L_1^*}\smallint\nolimits_{0}^{L_2^*} p^*\delta w^* dx^*dy^*.
\end{align}
Introducing the following dimensionless parameters
\begin{align}\label{eq_dimensionless}
	\nonumber
	x &= \tfrac{x^*}{L_1^*}, y = \tfrac{y^*}{L_1^*}, L = \tfrac{L_2^*}{L_1^*}, h = \tfrac{h^*}{L_1^*}, t = \tfrac{t^*}{1}, G = \tfrac{G^*{L_1^*}^3}{E^*I^*},\\
	\nonumber
	\rho &= \tfrac{\rho^*{L_1^*}^5}{E^*I^*}, I = \tfrac{I^*}{{L_1^*}^3}, I_1 = \tfrac{I_1^*}{{L_1^*}^3}, I_2= \tfrac{I_2^*}{{L_1^*}^3},\\
	\rho_f &= \tfrac{\rho_f^*{L_1^*}^5}{E^*I^*}, U = \tfrac{U^*}{L_1^*}, M = \tfrac{M^*}{1}, p = \tfrac{p^*{L_1^*}^3}{E^*I^*}
\end{align} 
In terms of these dimensionless length scales, the dimensionless $L_{lagr}$ defined by $\tfrac{L_{lagr}^*}{E^*I^*}$ is given by
\begin{align}\label{eq_lagrangian}
	\nonumber
	&L_{lagr} = \tfrac{1}{2}\smallint\nolimits_{0}^{1}\smallint\nolimits_{0}^{L} [\rho h w_{t}^{2} + \rho I_1 \alpha_{t}^{2} + \rho I_2 \beta_{t}^{2} - \alpha_{x}^{2} - \beta_{y}^{2}\\
	& - (\alpha_y + \beta_x)^2 - k^{\prime}Gh(w_x - \alpha + w_y - \beta)^2] dxdy,
\end{align}
the dimensionless non-conservative work $W_{nc}$ defined by $\tfrac{W_{nc}^*}{E^*I^*}$ is given by
\begin{align}
	\delta W_{nc} = \smallint\nolimits_{0}^{1}\smallint\nolimits_{0}^{L} \tfrac{\rho_f U}{M}(w_t + U w_x)\delta w dxdy.
\end{align}	
Using the extended Hamilton's principle, by including the non-conservative forcing, the governing differential equation of motion is given by
 \begin{align}
	 	&\scalebox{0.9}{$\rho h w_{tt} = k^{\prime}Gh(w_{xx} - \alpha_x + w_{yy} - \beta_y) + \tfrac{\rho_f U}{M}(w_t + U w_x)$},\\
	 	&\rho I_1 \alpha_{tt} = \alpha_{xx} + k^{\prime}Gh(w_x - \alpha) + \alpha_{yy} + \beta_{xy},\\
	 	&\rho I_2 \beta_{tt} = \beta_{yy} + k^{\prime}Gh(w_y - \beta) + \beta_{xx} + \alpha_{yx},
	 \end{align}
and the boundary conditions are given by
 \begin{align}
	 	&\scalebox{0.9}{$k^{\prime}Gh(w_x-\alpha)\delta w \big|_{x=0}^{x=1} = 0, k^{\prime}Gh(w_y-\beta)\delta w \big|_{y=0}^{y=L} = 0$},\\
	 	&\alpha_x \delta\alpha \big|_{x=0}^{x=1} = 0, \quad \left(\alpha_y + \beta_x \right)(\delta \alpha)\big|_{y=0}^{y=L} = 0,\\
	 	&\left(\beta_x + \alpha_y\right) \delta \beta \big|_{x=0}^{x=1},\quad \beta_y \delta \beta \big|_{y=0}^{y=L} = 0.
	 \end{align}
Set $\epsilon=\frac{\rho h}{k^{\prime}Gh}$, $\mu_1=\rho I_1$, $\mu_2=\rho I_2$, $a=\rho h$, $\theta=\frac{\rho_f U}{Mk^{\prime}Gh}$, $\xi=\frac{\rho_f U^2}{Mk^{\prime}Gh}$, the equations of motion are turned into
\begin{align}\label{eq_2dwave}
	\epsilon w_{tt} &= w_{xx} - \alpha_x + w_{yy} - \beta_y + \theta w_t + \xi w_x,\\
	\mu_1 \alpha_{tt} &= \alpha_{xx} + \frac{a}{\epsilon}(w_x - \alpha) + \alpha_{yy} + \beta_{xy},\\
	\mu_2 \beta_{tt} &= \beta_{yy} + \frac{a}{\epsilon}(w_y - \beta) + \beta_{xx} + \alpha_{yx},
\end{align}
For the free end $x=0$ and $y=1$, the displacement $w$ and the rotation angle $\alpha$, $\beta$ maintain their natural boundary conditions
\begin{align}
	w_x(t,0,y) &= \alpha(t,0,y),\quad \alpha_x(t,0,y) = 0,\\
	\beta_x(t,0,y) &= -\alpha_y(t,0,y),
\end{align}
At the edge $y=0$, a clamping boundary is considered, and thus the lateral displacement $w(t,x,0)=0$ and the rotational displacement about the $y-$axis $\alpha(t,x,0)=0$. Between the supports and the elastic plate, cylindrical rollers aligned along the $x-$direction are installed. These rollers allow the plate's edge to rotate freely about the $x-$axis via rolling motion, thereby satisfying the zero-moment condition $\beta_y(t,x,0)=0$, i.e.,
\begin{align}
	w(t,x,0) = 0,\quad \alpha(t,x,0) = 0,\quad \beta_y(t,x,0) = 0.
\end{align}
At $x=1$, active control inputs are implemented, modifying the boundary conditions to 
\begin{align}
	w_x(t,1,y) &= \alpha(t,1,y) + \tfrac{1}{k^{\prime}Gh}U_1(t,y),\\
	\alpha_x(t,1,y) &= U_2(t,y),\\
	\beta_x(t,1,y) &= -\alpha_y(t,1,y) + U_3(t,y), \label{eq_2dw}
\end{align}
where $U_1(t,y)$, and $U_2(t,y)$, $U_3(t,y)$ are boundary control force and bending moments, respectively. Next, we conduct the control based on the 2D elastic plate model \eqref{eq_2dwave}--\eqref{eq_2dw}.
\subsection{Decompose the 2D problem into infinite many 1D problems through Fourier Series}
Specifically, the boundary conditions in $y$ motivate expanding the solution and control inputs in sine and cosine series, respectively:
\begin{align}
	\label{eq_w} w(t,x,y) &= \scalebox{0.9}{$\sum_{n=0}^{\infty}$} w_n(t,x) \sin(n\pi\tfrac{y}{L}),\\
	\label{eq_alpha} \alpha(t,x,y) &= \scalebox{0.9}{$\sum_{n=0}^{\infty}$} \alpha_n(t,x) \sin(n\pi\tfrac{y}{L}),\\
	\label{eq_beta} \beta(t,x,y) &= \scalebox{0.9}{$\sum_{n=0}^{\infty}$} \beta_n(t,x) \cos(n\pi\tfrac{y}{L}),\\
	\label{eq_U1} U_1(t,y) &= \scalebox{0.9}{$\sum_{n=0}^{\infty}$} U_{1,n}(t) \sin(n\pi\tfrac{y}{L}),\\
	\label{eq_U2} U_2(t,y) &= \scalebox{0.9}{$\sum_{n=0}^{\infty}$} U_{2,n}(t) \sin(n\pi\tfrac{y}{L}),\\
	\label{eq_U3} U_3(t,y) &= \scalebox{0.9}{$\sum_{n=0}^{\infty}$} U_{3,n}(t) \cos(n\pi\tfrac{y}{L}).
\end{align}
This expansion effectively decomposes the 2D problem into infinite many 1D problems, each corresponding to a different sine or cosine mode. For each mode $n$:
\begin{align}\label{eq_1dwave}
	&\scalebox{0.9}{$\epsilon w_{n,tt}$} = \scalebox{0.9}{$w_{n,xx} - \alpha_{n,x} - \tfrac{n^2\pi^2}{L^2} w_n + \tfrac{n\pi}{L} \beta_n + \theta w_{n,t} + \xi w_{n,x}$},\\
	&\scalebox{0.9}{$\mu_1 \alpha_{n,tt}$} = \scalebox{0.9}{$\alpha_{n,xx} + \frac{a}{\epsilon}(w_{n,x} - \alpha_n) - \tfrac{n^2\pi^2}{L^2} \alpha_n - \tfrac{n\pi}{L} \beta_{n,x}$},\\
	&\scalebox{0.9}{$\mu_2 \beta_{n,tt}$} = \scalebox{0.9}{$\beta_{n,xx} - \tfrac{n^2\pi^2}{L^2} \beta_n + \frac{a}{\epsilon}(\tfrac{n\pi}{L} w_n - \beta_n) + \tfrac{n\pi}{L} \alpha_{n,x}$},
\end{align}
with boundary conditions:
\begin{align}
	w_{n,x}(t,0) &= \alpha_n(t,0),\quad w_{n,x}(t,1) = U_{4,n}(t),\\
	\alpha_{n,x}(t,0) &= 0,\quad \alpha_{n,x}(t,1) = U_{2,n}(t),\\
	\label{eq_1dw} \beta_{n,x}(t,0) &= -\tfrac{n\pi}{L} \alpha_n(t,0),\quad \beta_{n,x}(t,1) = U_{5,n}(t),
\end{align}
where $U_{4,n}(t) = U_{1,n}(t)/(k^{\prime}Gh) + \alpha_n(t,1)$ and $U_{5,n}(t) = U_{3,n}(t) - \tfrac{n\pi}{L} \alpha_n(t,1)$.
Therefore, we obtain the 1D Timoshenko beam model represented by the PDE system \eqref{eq_1dwave}--\eqref{eq_1dw} for each mode $n$.
\begin{remark}\label{re:1}
\normalfont In practical modeling and control of flexible plates under specific operating conditions, modal truncation is well justified from an engineering standpoint. We assume a truncation number $N$ for mode $n$ in the following control design, where $N$ is a known and arbitrarily positive integer. From a theoretical perspective, a rigorously justified choice of $N$ for general cases can also be obtained when internal structural damping, which is not included in the present plant model to emphasize the treatment of in-domain instability, is incorporated into the model, under which the uncontrolled high-frequency modes are inherently stable.
\end{remark}
\section{Controller Design}
We present the control design for the large-scale wave PDEs \eqref{eq_1dwave}--\eqref{eq_1dw} with $n=0,1,\cdots,N$ where $N$ is an arbitrarily positive integer.

\subsection{Transformation to coupled transport PIDEs}
Following the classical Riemann transformation, the Timoshenko beam for each mode $n$ can be mapped into a first-order hyperbolic integro-differential system coupled with ODEs. Furthermore, in order to remove the diagonal coupling terms, we use a change of coordinates as presented in \cite{Adaptive}. The system becomes a $6\times6$ system of 1D hyperbolic PDEs coupled with three ODEs without diagonal coupling terms by using the following transformation:
\begin{align}
	\label{eq_riemann} p_n &= \exp(\sqrt{\epsilon}\bar{c}_1 x)(w_{n,x} + \sqrt{\epsilon}w_{n,t}),\\
	q_n &= \exp(\sqrt{\epsilon}\bar{c}_2 x)(w_{n,x} - \sqrt{\epsilon}w_{n,t}),\\
	r_n &= \alpha_{n,x} + \sqrt{\mu_1}\alpha_{n,t},\quad s_n = \alpha_{n,x} - \sqrt{\mu_1}\alpha_{n,t},\\
	u_n &= \beta_{n,x} + \sqrt{\mu_2}\beta_{n,t},\quad v_n = \beta_{n,x} - \sqrt{\mu_2}\beta_{n,t},\\
	\label{eq_rie} x_{1,n} &= w_n(t,0),\quad x_{2,n} = \alpha_n(t,0),\quad x_{3,n} = \beta_n(t,0).
\end{align}
where $\bar{c}_1 = \frac{\xi}{2\sqrt{\epsilon}} + \frac{\theta}{2\epsilon}, \bar{c}_2 = \frac{\xi}{2\sqrt{\epsilon}} - \frac{\theta}{2\epsilon}$. Define
\begin{align}
	\nonumber
	Z_n &= \begin{bmatrix}
		p_n&r_n&u_n
	\end{bmatrix}^\top,
	Y_n = \begin{bmatrix}
		q_n&s_n&v_n
	\end{bmatrix}^\top,\\
	X_n &= \begin{bmatrix}
		x_{1,n}&x_{2,n}&x_{3,n}
	\end{bmatrix}^\top,\\
	U_{\mathrm{in},n} &= \begin{bmatrix}
		\exp(\sqrt{\epsilon}\bar{c}_1)U_{p,n}&U_{r,n}&U_{u,n}
	\end{bmatrix}^\top,
\end{align}
where $U_{p,n}(t) = U_{4,n}(t) + \sqrt{\epsilon}w_{n,t}(t,1)$, $U_{r,n}(t) = U_{2,n}(t) + \sqrt{\mu_1}\alpha_{n,t}(t,1)$ and $U_{u,n}(t) = U_{5,n}(t) + \sqrt{\mu_2}\beta_{n,t}(t,1)$ are redefined control variables for this plant. Then, \eqref{eq_1dwave}--\eqref{eq_1dw} is equivalent to the PDE-ODE system in the matrix form:
\begin{align}
	\nonumber
	Z_{n,t} &= \Sigma Z_{n,x} + F_{11,n}(x)(Z_n+Y_n) + F_{12}(x)(Z_n-Y_n)\\
	\nonumber
	&\quad + F_{13,n}(x) X_n + \smallint\nolimits_{0}^{x} F_{14,n}(x,y)Z_n(t,y)dy\\
	&\quad + \smallint\nolimits_{0}^{x} F_{15,n}(x,y)Y_n(t,y) dy,\label{eq_matrix}\\
	Y_{n,t} &= -\Sigma Y_{n,x} + F_{21,n}(x)(Z_n+Y_n) + F_{22}(x)(Z_n-Y_n)\nonumber\\
	\nonumber
	&\quad + F_{23,n}(x) X_n + \smallint\nolimits_{0}^{x} F_{24,n}(x,y)Z_n(t,y) dy\\
	&\quad + \smallint\nolimits_{0}^{x} F_{25,n}(x,y)Y_n(t,y) dy,\label{eq_matrix2}\\
	\dot{X}_n &= AX_n + \Sigma Z_n(t,0),\label{eq_matrix3}
\end{align}
with boundary conditions
\begin{align}
	Z_n(t,1) = U_{\mathrm{in},n}, \quad Y_n(t,0) = CZ_n(t,0) + DX_n,\label{eq_matrix5}
\end{align}	
where the definition of $\Sigma$, $F_{11,n}(x)$, $F_{12}(x)$, $F_{13,n}(x)$, $F_{14,n}(x,y)$, $F_{15,n}(x,y)$, $F_{21,n}(x)$, $F_{22}(x)$, $F_{23,n}(x)$, $F_{24,n}(x,y)$, $F_{25,n}(x,y)$, $A$, $C$ and $D$ are shown in Appendix-\ref{adx:mat_coeff}. The system \eqref{eq_matrix}--\eqref{eq_matrix5} contains integral coupling terms and the states of ODEs appearing inside the domain of the PDEs. In what follows, without loss of generality, we assume $\tfrac{1}{\sqrt{\epsilon}} < \tfrac{1}{\sqrt{\mu_1}} < \tfrac{1}{\sqrt{\mu_2}}$, the other cases can be treated analogously by switching the order of the states $p_n$, $r_n$, $u_n$ in all subsequent steps.

\subsection{Backstepping Transformation and Target System}
We introduce the following backstepping transformation:
\begin{align}
	\label{eq_bactr} \nonumber
	\sigma_n &= Z_n - \smallint\nolimits_{0}^{x} K_n(x,y)Z_n(t,y) dy - \smallint\nolimits_{0}^{x} L_n(x,y)Y_n(t,y) dy\\
	&\quad - \Phi_n(x)X_n(t),\\
	\label{eq_bactr2} \psi_n &= Y_n.
\end{align}
The gain kernels are $3 \times 3$ matrices, i.e., $K_n(x,y) = \{k_{ij,n}(x,y)\}_{1\leq i,j \leq 3}$, $L_n(x,y) = \{l_{ij,n}(x,y)\}_{1\leq i,j \leq 3}$ and $\Phi_n(x) = \{\phi_{ij,n}(x)\}_{1\leq i,j \leq 3}$,
where the kernels $K_n(x,y)$ and $L_n(x,y)$ are both defined in the triangle domain $\Gamma:\left\{(x,y) \in \mathbb{R}^2| 0 \leq y \leq x \leq 1\right\}$, and where $\Phi_n(x)$ is defined in $[0,1]$. They satisfy following equations:
\begin{align}\label{eq_kernel}
	\nonumber
	&\Sigma L_{n,x}(x,y) - L_{n,y}(x,y)\Sigma = K_n(x,y)(F_{11,n}(y)-F_{12}(y))\\
	\nonumber
	&\quad + L_n(x,y)(F_{21,n}(y)-F_{22}(y))\\
    &\quad - \Omega(x)L_n(x,y) - F_{15,n}(x,y)\\
	&\quad + \smallint\nolimits_{y}^{x} \left[K_n(x,s)F_{15,n}(s,y) + L_n(x,s)F_{25,n}(s,y)\right] ds,\\
	\nonumber
	&\Sigma K_{n,x}(x,y) + K_{n,y}(x,y)\Sigma = K_n(x,y)(F_{11,n}(y)+F_{12}(y))\\
	\nonumber
	&\quad + L_n(x,y)(F_{21,n}(y)+F_{22}(y))\\
    &\quad - \Omega(x)K_n(x,y) - F_{14,n}(x,y)\\
	&\quad + \smallint\nolimits_{y}^{x} \left[K_n(x,s)F_{14,n}(s,y) + L_n(x,s)F_{24,n}(s,y)\right] ds,\\
	\nonumber
	&\Phi_{n,x}(x) = {\Sigma}^{-1}\Phi_n(x) A - \Sigma^{-1}F_{13,n}(x)\\
	\nonumber
	&\quad - \Sigma^{-1}\Omega_n(x)\Phi_n(x) + \Sigma^{-1}L_n(x,0)\Sigma D\\
	&\quad + \smallint\nolimits_{0}^{x} \Sigma^{-1} (K_n(x,y)F_{13,n}(y) + L_n(x,y)F_{23,n}(y)) dy
\end{align}
with boundary conditions for $K$ and $L$:
\begin{align}
	&\Sigma L_n(x,x) + L_n(x,x)\Sigma = -(F_{11,n}(x) - F_{12}(x)),\\
	\label{eq_bcomega} &\Sigma K_n(x,x) - K_n(x,x)\Sigma = -(F_{11,n}(x) + F_{12}(x)) + \Omega_n(x),\\
	&K_n(x,0)\Sigma - L_n(x,0)\Sigma C = \Phi_n(x) \Sigma,
\end{align}
and with initial conditions for $\Phi_n(x)$:
\begin{align}\label{eq_ker}
	\Phi_n(0) = 
	\begin{bmatrix}
            - \delta_1 \sqrt{\epsilon} & 1 & 0\\
			0 & -\delta_2\sqrt{\mu_1} & 0\\
			0 & -n\pi & -\delta_3\sqrt{\mu_2}
	\end{bmatrix}
\end{align} 
where
\begin{align}
	\Omega_n(x) = 
	\begin{bmatrix}
            0 & \omega_{12} & \omega_{13}\\
			0 & 0 & \omega_{23,n}\\
			0 & 0 & 0
	\end{bmatrix}
\end{align}
with $\omega_{12}(x) = (\frac{1}{\sqrt{\epsilon}} - \frac{1}{\sqrt{\mu_1}})k_{12,n}(x,x) + c_2(x)$, $\omega_{13}(x) = (\frac{1}{\sqrt{\epsilon}}-\frac{1}{\sqrt{\mu_2}})k_{13,n}(x,x)$, $\omega_{23,n}(x) = (\frac{1}{\sqrt{\mu_1}} - \frac{1}{\sqrt{\mu_2}})k_{23,n}(x,x) - \frac{n\pi}{2L\sqrt{\mu_1}}$. 
As will be seen in Appendix-\ref{adx:lyapunov}, the parameters $\delta_1, \delta_2$ and $\delta_3$ in \eqref{eq_ker} are positive design parameters which determine the decay rate of the closed-loop controlled Timoshenko beam.

Applying the above transformation \eqref{eq_bactr}, \eqref{eq_bactr2}, choosing the control law in the boundary \eqref{eq_matrix5} as
\begin{align}\label{eq_Uin}
	\nonumber
	U_{\mathrm{in},n} &= \smallint\nolimits_{0}^{1} K_n(1,y)Z_n(t,y) dy + \smallint\nolimits_{0}^{1} L_n(1,y)Y_n(t,y) dy\\
	&\quad + \Phi_n(1)X_n(t),
\end{align}
we convert \eqref{eq_matrix}--\eqref{eq_matrix5} to the target system:
\begin{align}\label{eq_target}
	\sigma_{n,t} &= \Sigma\sigma_{n,x} + \Omega_n(x)\sigma_n,\\
	\nonumber
	\psi_{n,t} &= -\Sigma\psi_{n,x} + (F_{21,n}(x) + F_{22}(x))\sigma_n\\
    \nonumber
    &\quad + (F_{21,n}(x) - F_{22}(x))\psi_n + \smallint\nolimits_{0}^{x} \Xi_{2,n}(x,y)\sigma_n(t,y) dy\\
    &\quad + \smallint\nolimits_{0}^{x} \Xi_{3,n}(x,y)\psi_n(t,y) dy + \Xi_{1,n}(x) X_n,\label{eq_target2}\\
	\dot X_n &= E_{1,n} X_n + \Sigma\sigma_n(t,0),
\end{align}
with boundary conditions
\begin{align}\label{eq_bc}
	\sigma_n(t,1) = 0,\quad \psi_n(t,0) = E_{2,n}X_n + C\sigma_n(t,0)
\end{align}
where
\begin{align}\label{eq_E}
	E_{1,n} = \Sigma\Phi_n(0) + A,\quad E_{2,n} = C\Phi_n(0) + D.
\end{align}
The well-posedness of the kernel equations is given in the following theorem.
\begin{thm}\label{thm:1}
	There exist unique bounded solutions $k_{ij,n}(x,y), l_{ij,n}(x,y)$ and $\phi_{ij,n}(x,y), i = 1,2,3; j = 1,2,3$ to the kernel equations \eqref{eq_kernel}--\eqref{eq_ker}; in particular, there exists a positive number $M$ such that for $i,j = 1,2,3$
	\begin{align}
		|k_{ij,n}(x,y)|, |l_{ij,n}(x,y)|,|\phi_{ij,n}(x,y)| \leq Me^{Mx}.
	\end{align}
\end{thm}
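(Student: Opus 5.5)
This is a standard well-posedness result for coupled hyperbolic backstepping kernels with an ODE-coupled boundary. I would prove it by the usual method of characteristics plus successive approximations, in the style of the $n+m$ heterodirectional kernel results cited earlier \cite{Heterodirectional,n+1,Rapid}. Since $n$ ranges over the finite set $\{0,\dots,N\}$ (Remark \ref{re:1}), it suffices to treat each $n$ separately and take $M$ as the maximum of the resulting constants.

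\textbf{Decoupling of the kernel equations.} Writing \eqref{eq_kernel} entrywise with $\Sigma=\mathrm{diag}(1/\sqrt{\epsilon},1/\sqrt{\mu_1},1/\sqrt{\mu_2})$ gives 18 scalar first-order hyperbolic equations.
\begin{itemize}
\item Each $k_{ij,n}$ is transported with speeds $(\sigma_i,-\sigma_j)$.
\item Each $l_{ij,n}$ is transported with speeds $(\sigma_i,\sigma_j)$.
\end{itemize}
The boundary data on the diagonal $y=x$ then determine the remaining kernel quantities.
\begin{itemize}
\item The $L$-condition gives $l_{ij,n}(x,x)=-(F_{11,n}-F_{12})_{ij}/(\sigma_i+\sigma_j)$ explicitly.
\item For $i>j$, the $K$-condition gives $k_{ij,n}(x,x)$ explicitly, since $\sigma_i\neq\sigma_j$ by the ordering assumption $\tfrac{1}{\sqrt{\epsilon}}<\tfrac{1}{\sqrt{\mu_1}}<\tfrac{1}{\sqrt{\mu_2}}$.
\item For $i=j$ and for $i<j$, the entries are determined by the free design functions $\Omega_n$. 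The upper-triangular diagonal values are not prescribed, so they receive artificial boundary data at $y=0$ from the third condition, which is solved for $K_n(x,0)$ in terms of $L_n(x,0)$ and $\Phi_n(x)$ because $\Sigma$ is invertible. The $\omega$'s are then defined a posteriori from $k_{ij,n}(x,x)$.
\end{itemize}

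\textbf{Integral equation.} Along each characteristic line I would integrate back to the boundary of $\Gamma$ (either the diagonal or the edge $y=0$). This yields a coupled system of Volterra-type integral equations for the vector $G=(K_n,L_n,\Phi_n)$, together with the ODE $\Phi_{n,x}=\dots$ with initial value \eqref{eq_ker}. The system has the form $G=G_0+\mathcal{T}[G]$, where $G_0$ is bounded by the sup norms of the $F$'s and of $\Phi_n(0)$. The operator $\mathcal{T}$ is linear, and its integrals run over a domain whose extent is controlled by $x$, including the double integrals $\int_y^x\!\int$ coming from the $F_{14},F_{15},F_{24},F_{25}$ terms.

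Two couplings need care.
\begin{itemize}
\item $\Phi_n$ enters the $K$ boundary data at $y=0$, and $L_n(x,0)$ enters $\Phi_{n,x}$. This loop closes because $\Phi_n(x)$ depends only on values at abscissae $\le x$.
\item The characteristics of $K$ with speeds of opposite sign hit $y=0$ at a point $x_0<x$, so causality in $x$ is preserved.
\end{itemize}

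\textbf{Successive approximations and bound.} I would set $G^0=G_0$ and $G^{k+1}=G_0+\mathcal{T}[G^k]$, and prove by induction the estimate
$$|\Delta G^k(x,y)|\le \bar M \frac{(\bar M x)^k}{k!}.$$
The series then converges uniformly, giving a continuous bounded solution with $|G|\le \bar M e^{\bar M x}$. Uniqueness follows by applying the same estimate to the difference of two solutions.

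\textbf{Main obstacle.} The key difficulty is verifying that every characteristic integral, after changing variables, is bounded by $\int_0^x(\cdot)$ in the first argument, so the factorial estimate holds. This is delicate for the $K$ entries whose characteristics pass through $y=0$ and pick up the $\Phi_n$/$L_n(\cdot,0)$ feedback, and for the off-diagonal entries whose boundary values enter $\Omega_n$ inside the equations ($\Omega_n K_n$, $\Omega_n L_n$, $\Omega_n\Phi_n$). That last feature makes the system quadratic in the kernels, so a purely linear Volterra argument fails. I would first exploit the strictly upper-triangular structure of $\Omega_n$: compute the last row, then the second, then the first. In each step $\Omega_n$ involves only already-determined rows, so every step is a linear Volterra problem. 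If that ordering does not fully close, I would fall back on the piecewise-domain technique of \cite{Heterodirectional,n+1}, using a weighted sup norm with $e^{-\lambda x}$ and a contraction argument.
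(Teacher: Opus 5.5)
Your plan is correct and follows the paper's proof in Appendix C. That proof uses characteristics to turn the kernel equations into integral equations, bounds the successive approximations by $\bar{\varphi}M^q x^q/q!$, and proves well-posedness row by row starting from the last row via the strictly upper-triangular $\Omega_n$; the paper embeds the $\Phi_n$ ODE into $\Gamma$ through the indicator $I_{\{y=0\}}$ rather than integrating it separately in $x$, which is only cosmetic.

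Two small slips need fixing. First, your transport directions for $K$ and $L$ are swapped: $\Sigma K_{n,x}+K_{n,y}\Sigma$ has same-sign speeds, and its backward characteristics reach $y=0$ when $i\le j$, whereas $\Sigma L_{n,x}-L_{n,y}\Sigma$ has opposite-sign speeds and always reaches the diagonal. Second, $\omega_{ip}$ depends on the current row through $k_{ip,n}(x,x)$, so $\Omega_n$ is not itself known at step $i$; the row-$i$ problem is linear because $\omega_{ip}$ multiplies only entries of the already-solved rows $p>i$, and the trace term $(\lambda_i-\lambda_p)k_{ip,n}$ must stay in the linear operator (as the paper does, using $\|L_{pj}\|\le\bar{S}$).
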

\begin{proof}
	\normalfont
	The proof of well-posedness of the kernel equations essentially follows the line in \cite{minimum}, but with the differences that our kernel equations incorporate additional integral terms and ODE to be solved. For the treatment of the ODEs, we draw inspiration from \cite{coupled}. The complete proof is presented in Appendix-\ref{adx:wellposed}. 
\end{proof}
Since the kernels in \eqref{eq_bactr} are bounded, the transformation is invertible from the theory of Volterra integral equation, and the inverse backstepping transformation is denoted as:
\begin{align}
	\label{eq_invtr} \nonumber
	Z_n &= \sigma_n + \smallint\nolimits_{0}^{x} \breve{K}_n(x,y)\sigma_n(t,y) dy + \smallint\nolimits_{0}^{x} \breve{L}_n(x,y)\psi_n(t,y) dy\\
	&\quad + \breve{\Phi}_n(x)X_n,\\
	\label{eq_invtr2} Y_n &= \psi_n
\end{align}
where the kernels $\breve{K}_n(x,y)$ $\breve{L}_n(x,y)$ and $\breve{\Phi}_n(x)$ are also $3 \times 3$ matrices,  defined in the triangle domain $\Gamma:\left\{(x,y) \in \mathbb{R}^2| 0 \leq y \leq x \leq 1\right\}$, and in $[0,1]$, respectively.
The functions $\Xi_{1,n}(x)$, $\Xi_{2,n}(x,y)$ and $\Xi_{3,n}(x,y)$ in \eqref{eq_target2} are given by
\begin{align}
	\nonumber
	\Xi_{1,n}(x) &= (F_{21,n}(x) + F_{22}(x))\breve{\Phi}_n(x) + F_{23,n}(x)\\
	&\quad + \smallint\nolimits_{0}^{x} F_{24,n}(x,y)\breve{\Phi}_n(y) dy,\\
	\nonumber
	\Xi_{2,n}(x,y) &= (F_{21,n}(x) + F_{22}(x))\breve{K}_n(x,y) + F_{24,n}(x,y)\\
	&\quad + \smallint\nolimits_{y}^{x} F_{24,n}(x,s)\breve{K}_n(s,y) ds,\\
	\nonumber
	\Xi_{3,n}(x,y) &= (F_{21,n}(x) + F_{22}(x))\breve{L}_n(x,y) + F_{25,n}(x,y)\\
	&\quad + \smallint\nolimits_{y}^{x} F_{24,n}(x,s)\breve{L}_n(s,y) ds,
\end{align}
where $\Xi_{2,n}(x,y)$ and $\Xi_{3,n}(x,y)$ are both defined in the triangle domain $\Gamma$, and $\Xi_{1,n}(x)$ is defined in $[0,1]$.

\subsection{Stabilizing control law and main result}
Expressing \eqref{eq_Uin} in terms of the Timoshenko beam variables for each mode $n$ and recalling that $U_{4,n}(t) = \frac{1}{k^{\prime}Gh}U_{1,n}(t) + \alpha_n(t,1)$ and $U_{5,n}(t) = U_{3,n}(t) - \frac{n\pi}{L}\alpha_n(t,1)$, we have
\vspace{-0.2cm}
\begin{align}
	\label{eq_U1n} \nonumber
	&U_{1,n}(t) = k^{\prime}Gh[(\smallint\nolimits_{0}^{1} (\mathcal{F}_{11}(\xi) w_n(t,\xi) + \mathcal{F}_{12}(\xi) w_{n,t}(t,\xi)\\
	\nonumber
	& - \mathcal{F}_{13}(\xi) \alpha_n(t,\xi) + \mathcal{F}_{14}(\xi) \alpha_{n,t}(t,\xi)\\
	\nonumber
	& - \mathcal{F}_{15}(\xi) \beta_n(t,\xi) + \mathcal{F}_{16}(\xi) \beta_{n,t}(t,\xi)) d\xi\\
	\nonumber
	& + \mathcal{D}_{11} w_n(t,1) - \mathcal{D}_{12} w_n(t,0) + \mathcal{D}_{13}\alpha_n(t,1)\\
	\nonumber
	& - \mathcal{D}_{14}\alpha_n(t,0) + \mathcal{D}_{15}\beta_n(t,1) - \mathcal{D}_{16}\beta_n(t,0))\\
	&\times\exp(-\sqrt{\epsilon}\bar{c}_1) - \sqrt{\epsilon}w_{n,t}(t,1) - \alpha_n(t,1)],\\
	\label{eq_U2n} \nonumber
	&U_{2,n}(t) = \smallint\nolimits_{0}^{1} (\mathcal{F}_{21}(\xi) w_n(t,\xi) + \mathcal{F}_{22}(\xi) w_{n,t}(t,\xi)\\
	\nonumber
	& - \mathcal{F}_{23}(\xi) \alpha_n(t,\xi) + \mathcal{F}_{24}(\xi) \alpha_{n,t}(t,\xi)\\
	\nonumber
	& - \mathcal{F}_{25}(\xi) \beta_n(t,\xi) + \mathcal{F}_{26}(\xi) \beta_{n,t}(t,\xi)) d\xi\\
	\nonumber
	& + \mathcal{D}_{21} w_n(t,1) - \mathcal{D}_{22} w_n(t,0)\\
	\nonumber
	& + \mathcal{D}_{23}\alpha_n(t,1) - \mathcal{D}_{24}\alpha_n(t,0) - \sqrt{\mu_1}\alpha_{n,t}(t,1)\\
	& + \mathcal{D}_{25}\beta_n(t,1) - \mathcal{D}_{26}\beta_n(t,0),\\
	\label{eq_U3n} \nonumber
	&U_{3,n}(t)  = \smallint\nolimits_{0}^{1} (-\mathcal{F}_{31}(\xi) w_n(t,\xi) + \mathcal{F}_{32}(\xi) w_{n,t}(t,\xi)\\
	\nonumber
	& - \mathcal{F}_{33}(\xi) \alpha_n(t,\xi) + \mathcal{F}_{34}(\xi) \alpha_{n,t}(t,\xi)\\
	\nonumber
	& - \mathcal{F}_{35}(\xi) \beta_n(t,\xi) + \mathcal{F}_{36}(\xi) \beta_{n,t}(t,\xi)) d\xi\\
	\nonumber
	& + \mathcal{D}_{31} w_n(t,1) - \mathcal{D}_{32} w_n(t,0)\\
	\nonumber
	& + \mathcal{D}_{33} \alpha_n(t,1) - \mathcal{D}_{34} \alpha_n(t,0)\\
	& + \mathcal{D}_{35}\beta_n(t,1) - \mathcal{D}_{36}\beta_n(t,0) - \sqrt{\mu_2}\beta_{n,t}(t,1),
\end{align}
where the expressions of $\mathcal{F}{ij}(\xi), \mathcal{D}_{ij}(i=1,2,3,\text{ }j=1,2,3,4,5,6)$ are given in Appendix-\ref{adx:input_coeff}. The main result for each mode is stated next:
\begin{thm}\label{thm:2}
	Consider system \eqref{eq_1dwave}--\eqref{eq_1dw} for $n=0,1,\cdots,N$, with initial conditions $w_{n,0}, \alpha_{n,0}, \beta_{n,0} \in H^{1}(0,1)$, $w_{n,0t}, \alpha_{n,0t}, \beta_{n,0t} \in L^2$, under the control law \eqref{eq_U1n}--\eqref{eq_U3n}, for $\delta_1, \delta_2, \delta_3$ satisfying $c_n = 2\min\left\{\delta_1, \delta_2, \delta_3\right\} - 1>0$, the exponential stability is obtained in the sense of
	\begin{align}\label{eq_thm2_n}
		\Omega_n(t) \leq C_ne^{-c_n t}\Omega_n(0),~n=0,1,\cdots,N
	\end{align}
	for some positive $C_n$, where $\Omega_n(t) = {\Vert w_n(t,\cdot)\Vert}_{H^1}^{2} + {\Vert \alpha_n(t,\cdot)\Vert}_{H^1}^{2} + {\Vert \beta_n(t,\cdot)\Vert}_{H^1}^{2} + {\Vert w_{n,t}(t,\cdot)\Vert}_{L^2}^{2} + {\Vert \alpha_{n,t}(t,\cdot)\Vert}_{L^2}^{2} + {\Vert \beta_{n,t}(t,\cdot)\Vert}_{L^2}^{2}$. 
\end{thm}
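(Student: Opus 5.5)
The plan is the standard backstepping route. We prove exponential stability of the target system \eqref{eq_target}--\eqref{eq_bc} via a weighted Lyapunov functional. We then transfer the estimate back to the original variables using the invertibility of \eqref{eq_bactr}--\eqref{eq_bactr2} (guaranteed by Theorem \ref{thm:1}) and the Riemann transformation \eqref{eq_riemann}--\eqref{eq_rie}. Everything is done for fixed $n\le N$, so constants may depend on $n$; this is why $C_n$ appears.

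\textbf{Step 1 (target system).} The $\sigma_n$-subsystem is autonomous, with strictly upper-triangular $\Omega_n(x)$ and boundary condition $\sigma_n(t,1)=0$. Hence $\sigma_n$ vanishes identically after a finite time $t_F=\sqrt{\epsilon}+\sqrt{\mu_1}+\sqrt{\mu_2}$ (or similar); solving the cascade from the third component upward gives this. For an exponential estimate valid for all $t$, we take
\[
V_n=\int_0^1 e^{\lambda x}\sigma_n^\top P\sigma_n\,dx+a_1\int_0^1 e^{-\lambda x}\psi_n^\top Q\psi_n\,dx+a_2 X_n^\top X_n .
\]
Here $P,Q$ are diagonal and compensate $\Sigma$, and $\lambda,a_1,a_2>0$. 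By \eqref{eq_E} and \eqref{eq_ker}, $E_{1,n}=\Sigma\Phi_n(0)+A$. Its diagonal should equal $-\delta_i$ (the $\Sigma$ entries $1/\sqrt{\cdot}$ cancel the $\sqrt{\cdot}$ factors in $\Phi_n(0)$), and its structure should be triangular or nilpotent once $A$ is added. We verify this from the Appendix definitions. Then $\frac{d}{dt}X_n^\top X_n\le -(2\min\delta_i-1)|X_n|^2+|\Sigma\sigma_n(t,0)|^2$, using Young's inequality on the cross term and on the off-diagonal entries of $E_{1,n}$. This is exactly where $c_n=2\min\{\delta_i\}-1$ arises.

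Differentiating and integrating by parts gives a negative boundary term $-\sigma_n(t,0)^\top P\Sigma\sigma_n(t,0)$ from the first integral, which absorbs $|\Sigma\sigma_n(0)|^2$ once $P$ is large relative to $a_2$. The $\psi$ integral produces $+a_1\psi_n(t,0)^\top Q\Sigma\psi_n(t,0)$. Using $\psi_n(t,0)=E_{2,n}X_n+C\sigma_n(t,0)$, this is bounded by $|X_n|^2$ and $|\sigma_n(0)|^2$ terms; choosing $a_1$ small relative to $a_2$ makes them absorbable. The in-domain couplings ($\Omega_n\sigma_n$, the $F$ and $\Xi$ terms) are bounded using the kernel bounds of Theorem \ref{thm:1} and Cauchy--Schwarz. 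They are dominated by choosing $\lambda$ large, which generates $-\lambda V$-type terms for both PDE parts. The result is $\dot V_n\le -c_nV_n$, possibly with the PDE parts decaying at rate at least $c_n$ once $\lambda$ is large enough.

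\textbf{Step 2 (equivalence of norms).} Bounded direct and inverse kernels give $\|Z_n\|^2+\|Y_n\|^2+|X_n|^2$ equivalent to $\|\sigma_n\|^2+\|\psi_n\|^2+|X_n|^2$. Next we relate this to $\Omega_n(t)$. The Riemann variables give $w_{n,x},w_{n,t},\alpha_{n,x},\dots$ in $L^2$ up to the bounded exponential weights $e^{\sqrt\epsilon\bar c_i x}$. The $L^2$ norms of $w_n,\alpha_n,\beta_n$ themselves are recovered from $X_n$, i.e. the values at $x=0$, plus the $x$-derivatives, via $w_n(x)=w_n(0)+\int_0^x w_{n,x}$. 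Conversely, $|X_n|$ is bounded by the $H^1$ norm through the Sobolev trace inequality. This yields $\Omega_n(t)\le C_n e^{-c_nt}\Omega_n(0)$.

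\textbf{Main obstacle.} The delicate point is Step 1's handling of the ODE block. We must confirm that $E_{1,n}$ has the claimed structure; the $n\pi$ entries of $\Phi_n(0)$ and of $A$ create off-diagonal couplings that must be controlled with Young's inequality without eroding the rate below $2\min\delta_i-1$. We must also order the weights correctly ($P$ versus $a_2$ versus $a_1$) so that the boundary trace terms close. I would first compute $E_{1,n}$ explicitly. If it is not triangular, I would use a weighted norm $X_n^\top R X_n$ with $R$ diagonal and rapidly scaled, which makes off-diagonal terms arbitrarily small.
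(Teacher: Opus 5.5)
Your plan is essentially the paper's proof. It uses the same Lyapunov functional (a multiple of $X_n^\top X_n$, an $e^{\delta x}$-weighted $\sigma_n$ integral and an $e^{-\delta x}$-weighted $\psi_n$ integral), the same Young-inequality treatment of $\psi_n(t,0)=E_{2,n}X_n+C\sigma_n(t,0)$, a large weight $\delta$, and then the inverse transformation plus the Riemann relations \eqref{convert} to return to $\Omega_n$. Two bookkeeping caveats: if you spend the full unit in the Young split of $2X_n^\top\Sigma\sigma_n(t,0)$, nothing is left to absorb the $a_1$-weighted $|X_n|^2$ terms, so split with $\eta|X_n|^2+\eta^{-1}|\Sigma\sigma_n(t,0)|^2$, $\eta<1$ (the paper instead sets $\zeta_1=M_1+1$ so that $2\zeta_1c-(M_1+1)=\zeta_1(2c-1)$); and your fallback for a non-diagonal $E_{1,n}$ is genuinely needed with the printed $\Phi_n(0)$, whose $(3,2)$ entry $-n\pi$ leaves $E_{1,n}$ with the entry $\frac{n\pi}{\sqrt{\mu_2}}\bigl(\frac{1}{L}-1\bigr)$ unless it is read as $-n\pi/L$.
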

The proof of Theorem \ref{thm:2} is given in Appendix-\ref{adx:lyapunov}. 

\begin{corollary}\label{cor:1}
Under the assumption in Remark \ref{re:1}, 
 consider system \eqref{eq_2dwave}--\eqref{eq_2dw} with initial conditions $w_0$, $\alpha_0$, $\beta_0$ $\in H^1((0,1)^2)$ and $w_{0,t}$, $\alpha_{0,t}$, $\beta_{0,t}$ $\in L^2((0,1)^2)$ under the control law \begin{align}
		\label{eq_reU1}U_1(t,y) &= \scalebox{0.8}{$\sum_{n=0}^{N}$} U_{1,n}(t) \sin(n\pi\tfrac{y}{L}),\\
		\label{eq_reU2}U_2(t,y) &= \scalebox{0.8}{$\sum_{n=0}^{N}$} U_{2,n}(t) \sin(n\pi\tfrac{y}{L}),\\
		\label{eq_reU3}U_3(t,y) &= \scalebox{0.8}{$\sum_{n=0}^{N}$} U_{3,n}(t) \cos(n\pi\tfrac{y}{L}).
	\end{align} Let $\Omega_a(t)$ denote the total norm of the 2-D system:
	\begin{align}
		\nonumber
		&\Omega_a(t) = {\Vert w(t,\cdot,\cdot)\Vert}_{H^1}^{2} + {\Vert \alpha(t,\cdot,\cdot)\Vert}_{H^1}^{2} + {\Vert \beta(t,\cdot,\cdot)\Vert}_{H^1}^{2}\\
		& + {\Vert w_{t}(t,\cdot,\cdot)\Vert}_{L^2}^{2} + {\Vert \alpha_{t}(t,\cdot,\cdot)\Vert}_{L^2}^{2} + {\Vert \beta_{t}(t,\cdot,\cdot)\Vert}_{L^2}^{2},
	\end{align}
then there exists a constant $D_1>0$ and an arbitrary positive number $D_2$, which only depends on the arbitrarily positive design parameters $\delta_1, \delta_2$, and $\delta_3$, such that
	\begin{align}\label{eq_thm2_totalOmega}
		\Omega_a(t) \leq D_1 e^{-D_2 t}\Omega_a(0).
	\end{align}
\end{corollary}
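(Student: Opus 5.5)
The plan is to reduce Corollary~\ref{cor:1} to Theorem~\ref{thm:2} by Parseval's identity for the sine and cosine expansions \eqref{eq_w}--\eqref{eq_beta}. Under the truncation assumption of Remark~\ref{re:1}, the state is represented by the modes $n=0,\dots,N$. The control laws \eqref{eq_reU1}--\eqref{eq_reU3} then produce, mode by mode, exactly the boundary inputs $U_{1,n},U_{2,n},U_{3,n}$ of \eqref{eq_U1n}--\eqref{eq_U3n}. This follows from the orthogonality of $\{\sin(n\pi y/L)\}$ and $\{\cos(n\pi y/L)\}$ on $(0,L)$, so each coefficient triple $(w_n,\alpha_n,\beta_n)$ obeys \eqref{eq_1dwave}--\eqref{eq_1dw} in closed loop. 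Theorem~\ref{thm:2} therefore applies to every mode.

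\textbf{Step 1: equivalence of norms.} Using Parseval in $y$, I will express $\Omega_a(t)$ as a weighted sum of modal energies. For instance,
\[
\|w\|_{L^2}^2=\tfrac{L}{2}\sum_n\|w_n\|_{L^2}^2,\qquad
\|w_x\|_{L^2}^2=\tfrac{L}{2}\sum_n\|w_{n,x}\|_{L^2}^2 ,
\]
and analogously for $\alpha$, $\beta$ and the time derivatives. The $y$-derivative produces
\[
\|w_y\|_{L^2}^2=\tfrac{L}{2}\sum_n \tfrac{n^2\pi^2}{L^2}\|w_n\|_{L^2}^2 ,
\]
and similarly for $\alpha_y$ and $\beta_y$. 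The $n=0$ cosine mode of $\beta$ carries weight $L$ rather than $L/2$; only the constants change. Since $n\le N$, the weights $1+n^2\pi^2/L^2$ lie in $[1,1+N^2\pi^2/L^2]$. This gives constants $0<m_1\le m_2$, depending only on $L$ and $N$, with
\[
m_1\sum_{n=0}^N\Omega_n(t)\le\Omega_a(t)\le m_2\sum_{n=0}^N\Omega_n(t).
\]

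\textbf{Step 2: combine the modal estimates.} Summing \eqref{eq_thm2_n} over $n$ and using Step~1 twice yields
\[
\Omega_a(t)\le m_2\max_n C_n\, e^{-\min_n c_n t}\sum_n\Omega_n(0)\le \tfrac{m_2}{m_1}\max_n C_n\, e^{-\min_n c_n t}\,\Omega_a(0).
\]
The rate $c_n=2\min\{\delta_1,\delta_2,\delta_3\}-1$ does not depend on $n$, so $D_2=2\min\{\delta_i\}-1$ is arbitrarily large by choice of the $\delta_i$. The constant is $D_1=\tfrac{m_2}{m_1}\max_{0\le n\le N}C_n$, which is finite because only finitely many modes are involved.

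\textbf{Main obstacle.} I expect the main difficulty to be justifying that the $C_n$ from Theorem~\ref{thm:2} are finite and that the modal regularity matches the 2-D regularity. The $C_n$ depend on the kernel bounds of Theorem~\ref{thm:1}, whose constant $M$ may grow with $n$. Truncation at $N$ makes the maximum finite, although not uniform in $N$. I would also check that $H^1((0,1)^2)$ initial data give $w_{n,0},\alpha_{n,0},\beta_{n,0}\in H^1(0,1)$. This follows since each coefficient is an $L^2$ projection in $y$, and differentiation in $x$ commutes with that projection.
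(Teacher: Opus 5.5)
Your route is the paper's: Parseval in $y$ over the retained modes $n\le N$, Theorem~\ref{thm:2} applied mode by mode, then $\max_n C_n$ together with the $n$-independent rate $2\min\{\delta_1,\delta_2,\delta_3\}-1$. Your version is more careful than the paper's write-up. The paper records only the upper bound $\Omega_a(t)\le\sum_n M_n\Omega_n(t)$ and then uses $\sum_n M_n\Omega_n(0)\le\Omega_a(0)$ without comment. That step is exactly your lower bound with $m_1$, and it is why the constant should be $D_1=\tfrac{m_2}{m_1}\max_n C_n$ rather than just $\max_n C_n$.

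One point in Step~1 needs repair. You flagged the $n=0$ cosine mode of $\beta$, but the $n=0$ \emph{sine} modes are worse than a change of constant. Since $\sin(0)\equiv 0$, the coefficients $w_0,\alpha_0$ and their time derivatives contribute nothing to $\Omega_a$. So $m_1\Omega_0(t)\le\Omega_a(t)$ with $m_1>0$ fails unless $w_0=\alpha_0=0$. For the same reason, $U_{1,0}$ and $U_{2,0}$ never act on the 2-D plant.

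The natural convention $w_0\equiv\alpha_0\equiv 0$ must therefore be shown compatible with the mode-$0$ closed loop of Theorem~\ref{thm:2}. It is:
\begin{itemize}
\item At $n=0$, every $n\pi/L$ term in $A$, $D$, the $F_{ij}$ and $\Phi_0(0)$ vanishes. Nothing then links the first two components (the $w_0,\alpha_0$ variables) with the third (the $\beta_0$ variables).
\item The cross entries of $K_0$, $L_0$, $\Phi_0$, and with them $\omega_{13}$ and $\omega_{23,0}$, satisfy homogeneous problems with zero data. They vanish by the uniqueness in Theorem~\ref{thm:1}.
\item Hence $U_{3,0}$ depends only on $\beta_0$, and zero $(w_0,\alpha_0)$-data remain zero.
\end{itemize}
The 2-D $\beta_0$-component is then a genuine Theorem~\ref{thm:2} trajectory, and your two-sided estimate holds with $\Omega_0$ reduced to its $\beta_0$ terms. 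The paper's proof also passes over this point, although its output-feedback sums for $\hat U_1,\hat U_2$ start at $n=1$.
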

\begin{proof}
	\normalfont
	Recalling the Fourier series \eqref{eq_w}--\eqref{eq_beta}, which can be truncated by modal number $N$ under the assumption of Remark \ref{re:1}, by the Parseval's identity, there exist constants $M_n>0$ such that $\Omega_a(t) \le  \scalebox{0.8}{$\sum_{n=0}^{N}$} M_n\Omega_n(t)$. Applying \eqref{eq_thm2_n}, we obtain
	\begin{align}
		\scalebox{0.95}{$\Omega_a(t) \leq D_1e^{-D_2t}\scalebox{0.8}{$\sum_{n=0}^{N}$} M_n \Omega_n(0) \leq D_1e^{-D_2t}\Omega_a(0)$}.
	\end{align}
	where $D_1\ge \max\{C_n\}$ and $D_2\le \min\{c_n\}$, for all $n \leq N$.
	
	We know from Theorem \ref{thm:2} that the constant $c_n$ only depends on the design parameters $\delta_1, \delta_2$ and $\delta_3$. 
	From Appendix-\ref{adx:lyapunov}, we know that the controller design parameters $\delta_1, \delta_2, \delta_3$ are chosen independently of the mode index $n$. Therefore, $D_2$ only depends on the arbitrarily positive design parameters $\delta_1, \delta_2$ and $\delta_3$, and it can be set as large as desired by adjusting $\delta_1, \delta_2$ and $\delta_3$.
	The proof is complete.
\end{proof}	

\section{Observer Design}
\subsection{Observer structure}
Next, we present the state observer design for distributed states of the $2-D$ elastic plates \eqref{eq_2dwave}--\eqref{eq_2dw} by using the boundary measurements $w_x(t,0,y)$, $w_t(t,0,y)$, $\alpha_x(t,0,y)$, $\alpha_t(t,0,y)$, $\beta_x(t,0,y)$, $\beta_t(t,0,y)$, $w(t,0,y)$, $\alpha(t,0,y)$ and $\beta(t,0,y)$. This implies that $Z_n(t,0)$ and $X_n(t)$ are accessible in the equivalent model \eqref{eq_matrix}--\eqref{eq_matrix5} via:
\begin{align}
	\scalebox{0.93}{$p_n(t,0)$} &= \scalebox{0.93}{$2\smallint\nolimits_{0}^{L} (w_x(t,0,y) + w_t(t,0,y))\sin(n\pi\tfrac{y}{L}) dy$},\label{eq:pn0}\\
	\scalebox{0.93}{$r_n(t,0)$} &= \scalebox{0.93}{$2\smallint\nolimits_{0}^{L} (\alpha_x(t,0,y) + \alpha_t(t,0,y))\sin(n\pi \tfrac{y}{L}) dy$},\\
	\scalebox{0.93}{$u_n(t,0)$} &= \scalebox{0.93}{$2\smallint\nolimits_{0}^{L} (\beta_x(t,0,y) + \beta_t(t,0,y))\cos(n\pi \tfrac{y}{L}) dy$},\\
	\scalebox{0.93}{$x_{1,n}(t)$} &= \scalebox{0.93}{$2\smallint\nolimits_{0}^{L} w(t,0,y)\sin(n\pi \tfrac{y}{L}) dy$},\\
	\scalebox{0.93}{$x_{2,n}(t)$} &= \scalebox{0.93}{$2\smallint\nolimits_{0}^{L} \alpha(t,0,y)\sin(n\pi \tfrac{y}{L}) dy$},\\
	\scalebox{0.93}{$x_{3,n}(t)$} &= \scalebox{0.93}{$2\smallint\nolimits_{0}^{L} \beta(t,0,y)\cos(n\pi \tfrac{y}{L}) dy$},\label{eq:x3}
\end{align}
by estimating \eqref{eq_riemann}--\eqref{eq_rie} at $x=0$.
In what follows, the state estimates are denoted by a hat. Relying on the measurements \eqref{eq:pn0}--\eqref{eq:x3}, recalling that $U_{p,n}(t) = U_{4,n}(t) + \sqrt{\epsilon}{\hat w}_{n,t}(t,1)$, $U_{r,n}(t) = U_{2,n}(t) + \sqrt{\mu}{\hat \alpha}_{n,t}(t,1)$ and $U_{u,n}(t) = U_{5,n}(t) + \sqrt{\mu}{\hat \beta}_{n,t}(t,1)$, and defining 
\begin{align*}
	\hat{Z}_n &= \begin{bmatrix}
		\hat{p}_n& \hat{r}_n& \hat{u}_n
	\end{bmatrix}^\top,
	\hat{Y}_n = \begin{bmatrix}
		\hat{q}_n& \hat{s}_n& \hat{v}_n
	\end{bmatrix}^\top,\\
	\hat{X}_n &= \begin{bmatrix}
		\hat{x}_{1,n}& \hat{x}_{2,n}& \hat{x}_{3,n}
	\end{bmatrix}^\top,
\end{align*}
we build an observer for the equivalent model \eqref{eq_matrix}--\eqref{eq_matrix5} as:
\begin{align}\label{eq_hat}
	\nonumber
	\hat{Z}_{n,t} &= \Sigma \hat{Z}_{n,x} + F_{11,n}(x)(\hat{Z}_n+\hat{Y}_n) + F_{12}(x)(\hat{Z}_n-\hat{Y}_n)\\
	\nonumber
	& + \smallint\nolimits_{0}^{x} F_n\left[F_{14,n}(x,y)\hat{Z}_n(t,y) + F_{15,n}(x,y)\hat{Y}_n(t,y)\right] dy\\
	& + P_n^{-}(x)(Z_n(t,0)-\hat{Z}_n(t,0)),\\
	\nonumber
	\hat{Y}_{n,t} &= -\Sigma \hat{Y}_{n,x} + F_{21,n}(x)(\hat{Z}_n+\hat{Y}_n) + F_{22}(x)(\hat{Z}_n-\hat{Y}_n)\\
	\nonumber
	& + \smallint\nolimits_{0}^{x} F_n\left[F_{24,n}(x,y)\hat{Z}_n(t,y) + F_{25,n}(x,y)\hat{Y}_n(t,y)\right] dy\\
	& + P_n^{+}(x)(Z_n(t,0)-\hat{Z}_n(t,0)),\\
	\dot{\hat{X}}_n &= AX_n + \Sigma Z_n(t,0) + L_{n,x} (X_n - \hat{X}_n),
\end{align}
with boundary conditions
\begin{align}
    \hat{Z}_n(t,1) &= U_{o,n}(t) + R_1 \hat{Y}_n(t,1),\\
	\label{eq_hatbc} \hat{Y}_n(t,0) &= CZ_n(t,0) + DX_n,
\end{align}
and where
\begin{align}
	U_{o,n}(t) &= 
	\begin{bmatrix}
			h_1 U_{4,n}(t)\\
			2U_{2,n}(t)\\
			2U_{5,n}(t)
	\end{bmatrix},
	R_1 = \begin{bmatrix}
			h_2 & 0 & 0\\
			0 & -1 & 0\\
			0 & 0 & -1
	\end{bmatrix},
\end{align}
where $h_1 = \frac{2\exp(-\sqrt{\epsilon}\bar{c}_1)}{2-\exp(-\sqrt{\epsilon}\bar{c}_1)}$, $h_2 = \frac{-\exp(-\sqrt{\epsilon}\bar{c}_1)}{2-\exp(-\sqrt{\epsilon}\bar{c}_1)}$ and $\Sigma$, $F_{11,n}(x)$, $F_{12}(x)$, $F_{21,n}(x)$, $F_{22}(x)$, $F_{14,n}(x)$, $F_{15,n}(x)$, $F_{24,n}(x)$, $F_{25,n}(x)$, $C$ and $D$ are defined in Appendix-\ref{adx:mat_coeff}, 
and where $P_n^{-}(\cdot), P_n^{+}(\cdot)$ and $L_{n,x}$ are output injection gain matrices yet to be designed.
Recalling the transformation \eqref{eq_riemann}--\eqref{eq_rie}, the estimate of original wave PDEs \eqref{eq_1dwave}--\eqref{eq_1dw} are obtained as 
\begin{align}\label{eq:w1}
	\left\{
	\begin{array}{l}
		\hat{\alpha}_n(t,x) = \left[\smallint\nolimits_{0}^{x} \tfrac{\hat{r}_n(t,y) + \hat{s}_n(t,y)}{2} dy + \hat{x}_2(t)\right],\\
		\hat{\beta}_n(t,x) = \left[\smallint\nolimits_{0}^{x} \tfrac{\hat{u}_n(t,y) + \hat{v}_n(t,y)}{2} dy + \hat{x}_3(t)\right],\\
		\hat{w}_n(t,x) = \smallint\nolimits_{0}^{x} \tfrac{k_2(y)\hat{p}_n(t,y) + k_1(y)\hat{q}_n(t,y)}{2k_1(y)\cdot k_2(y)} dy + \hat{x}_1(t),\\
		{\hat \alpha}_{n,t}(t,x) = \tfrac{\hat{r}_n(t,x) - \hat{s}_n(t,x)}{2\sqrt{\mu_1}},
		{\hat \beta}_{n,t}(t,x) = \tfrac{\hat{u}_n(t,x) - \hat{v}_n(t,x)}{2\sqrt{\mu_2}},\\
		{\hat w}_{n,t}(t,x) = \tfrac{k_2(x)\hat{p}_n(t,x)-k_1(x)\hat{q}_n(t,x)}{2\sqrt{\epsilon}(k_1(x)\cdot k_2(x))},\\
		\hat{\alpha}_{n,x}(t,x) = \tfrac{\hat{r}_n(t,x) + \hat{s}_n(t,x)}{2}, \hat{\beta}_{n,x}(t,x) = \tfrac{\hat{u}_n(t,x) + \hat{v}_n(t,x)}{2},\\
		\hat{w}_{n,x}(t,x) = \tfrac{k_2(x)\hat{p}_n(t,x) + k_1(x)\hat{q}_n(t,x)}{2k_1(x)\cdot k_2(x)},
	\end{array}
	\right.
\end{align}
where functions $k_1(x) = \exp(\sqrt{\epsilon}\bar{c}_1x)$, $k_2(x) = \exp(\sqrt{\epsilon}\bar{c}_2x)$, and where $\hat{p}_n, \hat{q}_n, \hat{r}_n, \hat{s}_n, \hat{u}_n$, $\hat{v}_n$ are computed from \eqref{eq_hat}--\eqref{eq_hatbc}. 
\subsection{Observer gains and error systems}
Defining the observer errors
\begin{align}
	{\tilde Z}_n=Z_n-{\hat Z}_n, \tilde{Y}_n=Y_n-\hat{Y}_n, \tilde{X}_n = X_n - \hat{X}_n.
\end{align}
Subtracting the \eqref{eq_hat}--\eqref{eq_hatbc} from \eqref{eq_matrix}--\eqref{eq_matrix5}, we get the observer error system:
\begin{align}
	\label{eq_tildez} \nonumber
	\tilde{Z}_{n,t} &= \Sigma \tilde{Z}_{n,x} + F_{11,n}(x)(\tilde{Z}_n+\tilde{Y}_n) + F_{12}(x)(\tilde{Z}_n-\tilde{Y}_n)\\
	\nonumber
	&\quad + \smallint\nolimits_{0}^{x} F_n\left[F_{14,n}(x,y)\tilde{Z}_n(t,y) + F_{15,n}(x,y)\tilde{Y}_n(t,y)\right] dy\\
	&\quad - P_n^{-}(x)\tilde{Z}_n(t,0),\\
	\label{eq_tildey} \nonumber
	\tilde{Y}_{n,t} &= -\Sigma \tilde{Y}_{n,x} + F_{21,n}(x)(\tilde{Z}_n+\tilde{Y}_n) + F_{22}(x)(\tilde{Z}_n-\tilde{Y}_n)\\
	\nonumber
	&\quad + \smallint\nolimits_{0}^{x} F_n\left[F_{24,n}(x,y)\tilde{Z}_n(t,y) + F_{25,n}(x,y)\tilde{Y}_n(t,y)\right] dy\\
	&\quad - P_n^{+}(x)\tilde{Z}_n(t,0),\\
	\dot{\tilde{X}}_n &= - L_{n,x} \tilde{X}_n,
\end{align}
with boundary conditions
\begin{align}\label{eq_tildebc}
	\tilde Z_n(t,1) &= R_1 \tilde{Y}_n(t,1),\quad \tilde{Y}_n(t,0) = 0,
\end{align}
where observer gain $L_{n,x}$ is
\begin{align}\label{eq_lx}
	L_{n,x} = 
	\begin{bmatrix}
            \frac{L_1}{\sqrt{\epsilon}} & 0 & 0\\
			0 & \frac{L_2}{\sqrt{\mu_1}} & 0\\
			0 & 0 & \frac{L_3}{\sqrt{\mu_2}}	
	\end{bmatrix}.
\end{align}
The parameters $L_1, L_2$, and $L_3$ are positive design parameters that determine the decay rate of the state $\tilde{X}_n$.
To determine the other two observer gains $P_n^-$ and $P_n^+$, we introduce the following Volterra transformation:
\begin{align}
	\label{eq_tibacz} \tilde{Z}_n(t,x) &= \tilde{\sigma}_n(t,x) + \smallint\nolimits_{0}^{x} N_n(x,y)\tilde{\sigma}_n(t,y) dy,\\
	\label{eq_tibacy} \tilde{Y}_n(t,x) &= \tilde{\psi}_n(t,x) + \smallint\nolimits_{0}^{x} M_n(x,y)\tilde{\sigma}_n(t,y) dy,
\end{align}
where the kernels $N_n$ and $M_n$ defined on $\Gamma=\{(x,y)\in\mathbb{R}^2|0\leq y \leq x \leq 1\}$ satisfy following kernel equations:
\begin{align}\label{eq_obkernel}
	\nonumber
	&\Sigma M_x(x,y) - M_y(x,y) \Sigma = (F_{21}(x) - F_{22}(x))M(x,y)\\
	\nonumber
	& + (F_{21}(x) + F_{22}(x))N(x,y) - M(x,y)\Omega(y) + F_{24}(x,y)\\
	& + \smallint\nolimits_{y}^{x} (F_{24}(x,s)N(s,y) + F_{25}(x,s)M(s,y)) ds,\\
	\nonumber
	&\Sigma N_x(x,y) + N_y(x,y) \Sigma = -(F_{11}(x) + F_{12}(x))N(x,y)\\
	\nonumber
	& - (F_{11}(x) - F_{12}(x))M(x,y) + N(x,y)\Omega(y) - F_{14}(x,y)\\
	& - \smallint\nolimits_{y}^{x} (F_{14}(x,s)N(s,y) + F_{15}(x,s)M(s,y)) ds,
\end{align}
with boundary conditions:
\begin{align}
	&\scalebox{0.9}{$\Sigma N_n(x,x) - N_n(x,x)\Sigma$} = \scalebox{0.9}{$-(F_{11,n}(x) + F_{12}(x)) + \Omega_n(x)$},\\
	&\scalebox{0.9}{$\Sigma M_n(x,x) + M_n(x,x)\Sigma$} = \scalebox{0.9}{$F_{21,n}(x) + F_{22}(x)$},\label{eq_obkernelbc}
\end{align}
Applying the backstepping transformation \eqref{eq_tibacz},\eqref{eq_tibacy}, and choosing the observer gains $P_n^{+}$, $P_n^{-}$ as
\begin{align}\label{eq_P}
	P_n^{+}(x) = M_n(x,0)\Sigma,\quad P_n^{-}(x) = N_n(x,0)\Sigma,
\end{align}
we map the observer error system \eqref{eq_tildez}--\eqref{eq_tildebc} into the following target system:
\begin{align}
	\label{eq_tisigma} \nonumber
	\tilde{\sigma}_{n,t} &= \Sigma\tilde{\sigma}_{n,x} + (F_{11,n}(x) - F_{12}(x))\tilde{\psi}_n + \Omega_n(x)\tilde{\sigma}_n(t,x)\\
	&\quad + \smallint\nolimits_{0}^{x} D_n^{-}(x,y)\tilde{\psi}_n(t,y) dy,\\
	\label{eq_tipsi} \nonumber
	\tilde{\psi}_{n,t} &= -\Sigma\tilde{\psi}_{n,x} + (F_{21,n}(x) - F_{22}(x))\tilde{\psi}_n\\
	&\quad + \smallint\nolimits_{0}^{x} D_n^{+}(x,y)\tilde{\psi}_n(t,y) dy,\\
	\label{eq_x} \dot{\tilde{X}}_n &= -L_{n,x}\tilde{X}_n,
\end{align}
with boundary conditions
\begin{align}\label{eq_tibc}
	\tilde{\sigma}_n(t,1) &= R_1 \tilde{\psi}_n(t,1),\quad \tilde{\psi}_n(t,0) = 0,
\end{align}
where
\begin{align}
	\Omega_n(x) = 
	\begin{bmatrix}
            0 & \omega_{12} & \omega_{13}\\
			0 & 0 & \omega_{23,n}\\
			0 & 0 & 0
	\end{bmatrix},
\end{align}
and where
$D_n^{+}, D_n^{-}$ are given by
\begin{align}
	\nonumber
	D_n^{+}(x,y) &= -M_n(x,y)(F_{11,n}(y)-F_{12}(y))\\
	&\quad - \smallint\nolimits_{y}^{x} M_n(x,s)D_n^{-}(s,y) ds + F_{25,n}(x,y),\\
	\label{eq_obker} \nonumber
	D_n^{-}(x,y) &= -N_n(x,y)(F_{11,n}(y)-F_{12}(y))\\
	&\quad - \smallint\nolimits_{y}^{x} N_n(x,s)D_n^{-}(s,y) ds + F_{15,n}(x,y).
\end{align}
\begin{lemma}\label{lem:2}
	There exists a unique bounded solution $m_{ij,n}(x,y), n_{ij,n}(x,y)$, $i = 1,2,3; j = 1,2,3$ to the kernel equations \eqref{eq_obkernel}--\eqref{eq_obkernelbc}. In particular, there exists some positive number $\mathcal{\phi}$ and $D$ such that for $i,j = 1,2,3$
	\begin{align}
		|m_{ij,n}(x,y)|, |n_{ij,n}(x,y)| \leq \varphi e^{Dx}.
	\end{align}
\end{lemma}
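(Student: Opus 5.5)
The observer kernel equations \eqref{eq_obkernel}--\eqref{eq_obkernelbc} have the same structure as the controller kernel equations treated in Theorem \ref{thm:1}, with three changes. The coefficient matrices multiply the kernels from the left rather than the right. The coupling enters through $\Omega_n(y)$ instead of $\Omega_n(x)$. There is no ODE component $\Phi_n$. I would therefore follow the route of Appendix-\ref{adx:wellposed}: characteristics, then integral equations, then successive approximations. The cleanest way to reuse it is to set $\bar N_n(x,y)=N_n(1-y,1-x)^\top$ and $\bar M_n(x,y)=M_n(1-y,1-x)^\top$. This reflection is an involution of $\Gamma$ that swaps the roles of $x$ and $y$ and transposes the matrix products. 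It turns \eqref{eq_obkernel}--\eqref{eq_obkernelbc} into a system of the same form as \eqref{eq_kernel}, with the coefficients $F_{ij,n}$ replaced by their reflected transposes, with $\Omega_n$ evaluated at the new ``$x$'' variable, and with no $\Phi$-coupling at $y=0$.

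\textbf{Step 1 (componentwise characteristics).} Write each of the 18 scalar equations in the form $\lambda_i\partial_x k_{ij}\pm\lambda_j\partial_y k_{ij}=(\cdots)$, where $\Sigma=\mathrm{diag}(\lambda_1,\lambda_2,\lambda_3)$ and $\lambda_1<\lambda_2<\lambda_3$ by the standing assumption. Along each characteristic, integrate from the boundary where data are prescribed.
\begin{itemize}
\item For $M_n$ (operator $\lambda_i\partial_x-\lambda_j\partial_y$), the characteristics reach the diagonal, where $(\lambda_i+\lambda_j)m_{ij}(x,x)$ is given explicitly by \eqref{eq_obkernelbc}.
\item For the diagonal entries $n_{ii}$, the equation reduces to a transport along $x=y$-parallel lines. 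The diagonal condition $\Sigma N-N\Sigma$ gives no information for $i=j$, so I would supply an artificial boundary value, e.g.\ $n_{ii}(1,y)=0$, following \cite{minimum}.
\item For $i<j$, the diagonal condition gives $n_{ij}(x,x)(\lambda_i-\lambda_j)=\omega_{ij}(x)-(F_{11}+F_{12})_{ij}$. Since $\omega_{ij}$ itself depends on the unknown $k_{ij,n}(x,x)$ (here, on the corresponding observer kernel trace), this equation is implicit. I would instead treat $\omega_{ij}$ as defined by this relation and prescribe the upper-triangular $n_{ij}$ on $x=1$ (or use the freedom as in \cite{minimum}).
\item For $i>j$, the characteristics of $\lambda_i\partial_x+\lambda_j\partial_y$ exit through the diagonal, where $n_{ij}(x,x)$ is determined explicitly.
\end{itemize}

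\textbf{Step 2 (integral equation).} Collect all entries into a vector $G=(N_n,M_n)$. The integration in Step 1 yields $G=G_0+\mathcal T[G]$. Here $G_0$ is bounded and built from $F_{ij,n}$ and the boundary data. The operator $\mathcal T$ is linear and consists of integrals along characteristics of the bounded coefficients times $G$, plus the double integrals $\int_y^x(\cdots)\,ds$ inherited from $F_{14},F_{15},F_{24},F_{25}$, plus the trace terms coming from $\Omega_n(y)$, which involve $G$ evaluated on the diagonal.

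\textbf{Step 3 (successive approximations and the bound).} Set $G^{k+1}=\mathcal T[G^k]$ with $G^0=G_0$. The aim is an estimate of the form
\[
|G^{k}(x,y)|\le \bar M\,\frac{(\bar C x)^k}{k!}
\]
on a weighted domain. Summing the series then gives existence and the bound $|m_{ij,n}|,|n_{ij,n}|\le\varphi e^{Dx}$; uniqueness follows from the same estimate applied to the difference of two solutions.

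\textbf{Main obstacle.} The difficulty is the trace terms. Because $\Omega_n(y)$ contains $k$-type diagonal values, $\mathcal T$ is not a pure Volterra operator in the characteristic variables: a characteristic picks up values of $G$ on $x=y$, which are themselves defined by integrals over other characteristics. The characteristic lengths also vary among the different $\lambda$ pairs. Following \cite{minimum}, I would handle this in three moves:
\begin{itemize}
\item Order the $i<j$ entries so that the trace terms enter only through strictly upper-triangular couplings, which is exactly why $\Omega_n$ is strictly upper triangular.
\item Introduce the weighted norm $\sup e^{-Dx}|G|$ with $D$ large.
\item Show that $\mathcal T$ is then a contraction, since each trace contribution comes with a factor $\int e^{D s}ds\sim 1/D$.
\end{itemize}
The boundedness of the $n$-dependent coefficients for fixed $n\le N$ gives constants $\varphi$ and $D$ that are uniform over the finitely many modes.
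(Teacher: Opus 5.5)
Your overall route is the same as the paper's. Its proof also applies the reflection $(x,y)\mapsto(1-y,1-x)$ and then defers to the characteristics/successive-approximation argument of Appendix-\ref{adx:wellposed}. Your weighted-norm contraction is an equivalent packaging of the bound $\bar\varphi M^qx^q/q!$, and your explicit transpose is more careful than the paper's.

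\textbf{Gap 1: the boundary condition at $x=1$.} You never impose the condition the paper's proof obtains by evaluating \eqref{eq_tibacz}, \eqref{eq_tibacy} at $x=1$ and comparing \eqref{eq_tildebc} with \eqref{eq_tibc}: $N_n(1,y)=R_1M_n(1,y)$, i.e.\ $n_{ij,n}(1,y)=r_i\,m_{ij,n}(1,y)$ for all $i,j$. After the reflection this is the counterpart of the controller condition $K_n(x,0)\Sigma-L_n(x,0)\Sigma C=\Phi_n(x)\Sigma$ with $\Phi_n=0$. So the values of $N_n$ on $x=1$ are not yours to choose. With your artificial data ($n_{ii,n}(1,y)=0$, arbitrary upper-triangular values on $x=1$), the transformation gives $\tilde\sigma_n(t,1)=R_1\tilde\psi_n(t,1)+\int_0^1\big(R_1M_n(1,y)-N_n(1,y)\big)\tilde\sigma_n(t,y)\,dy$. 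The target \eqref{eq_tibc} and the finite-time argument of Theorem~\ref{thm:3} are then lost. ``Uniqueness'' would also hold only relative to data you picked.

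\textbf{Gap 2: characteristic bookkeeping in Step~1.} Fixing this is not cosmetic.
\begin{itemize}
\item For $i>j$, trace the characteristic of $\lambda_i\partial_x+\lambda_j\partial_y$ through $(x,y)$ toward decreasing $x$. It reaches the diagonal only if $y>\lambda_jx/\lambda_i$. Below that line it reaches $y=0$, where nothing is prescribed, so those values must come from $x=1$.
\item For $i<j$, characteristics leaving $(b,0)$ with $b\le 1-\lambda_i/\lambda_j$ end on the diagonal without meeting $x=1$. Data on $x=1$ therefore leave $n_{ij,n}$ undetermined there.
\end{itemize}
Once $N_n(1,\cdot)=R_1M_n(1,\cdot)$ is imposed on every entry, each characteristic must carry exactly one datum. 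With the strictly upper-triangular $\Omega_n$ this fails in both directions. The $i>j$ segments from the diagonal to $x=1$ receive two data. The $i<j$ segments from $y=0$ to the diagonal receive none.

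This is also where your claim that the reflected system has ``the same form as \eqref{eq_kernel}'' breaks. For $\bar N_n(x,y)=N_n(1-y,1-x)^\top$, the term $N_n\Omega_n$ becomes $\Omega_n^\top(1-x)\bar N_n(x,y)$, which is strictly lower triangular acting from the left. The reduction to Appendix-\ref{adx:wellposed} closes only if the observer target uses a strictly lower-triangular coupling matrix:
\begin{itemize}
\item free $\omega_{ij}$ for $i>j$;
\item prescribed traces $n_{ij,n}(x,x)$ for $i<j$;
\item induction over columns $j=3,2,1$ instead of rows.
\end{itemize}
The paper's one-line ``same structure'' claim passes over this as well, but a proof along your lines has to get it right.
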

The proof of Lemma \ref{lem:2} is shown in Appendix-\ref{adx:observer}.

\subsection{Result of the observer}
The following lemma assesses the convergence of the target system to zero.
\begin{thm}\label{thm:3}
	Consider system \eqref{eq_tisigma}--\eqref{eq_x} for $n=0,1,\cdots,N$, with initial conditions for $\tilde{X}_n(0)$ and the output injection kernels given by $P_n^{-}(x), P_n^{+}(x)$ and $L_{n,x}$, where $P_n^{-}(x)$ and $P_n^{+}(x)$ are obtained from \eqref{eq_P} for each mode $n$. Choosing the values of the output injection gains $L_1, L_2, L_3$ to be positive, the observer error system is exponentially stable in the sense of:
	\begin{align}\label{eq:thm3}\left\{ 
		\begin{array}{l}
			\tilde{X}_n(t) \leq e^{-c_n^{\prime}t} \tilde{X}_n(0),t\geq 0\\
			\tilde{\sigma}_n \equiv \tilde{\psi}_n \equiv 0,\quad t \geq \tfrac{2}{\lambda_1}.
		\end{array}
		\right.
	\end{align}
	for $n=0,1,\cdots,N$, where $c_n^{\prime} = \min\{L_1, L_2, L_3\}$, and where $\tilde{\sigma}_n$ and $\tilde{\psi}_n$ are bounded in $t\in[0,\frac{2}{\lambda_1}]$.
\end{thm}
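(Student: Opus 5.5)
The argument splits into the decoupled ODE part and the cascaded hyperbolic part of the target system \eqref{eq_tisigma}--\eqref{eq_tibc}.

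\emph{The ODE estimate.} Equation \eqref{eq_x} is $\dot{\tilde X}_n=-L_{n,x}\tilde X_n$ with $L_{n,x}$ diagonal as in \eqref{eq_lx}. Each component therefore evolves independently as
\begin{align*}
\tilde x_{1,n}(t)=e^{-L_1 t/\sqrt{\epsilon}}\tilde x_{1,n}(0),
\end{align*}
and analogously for $\tilde x_{2,n}$ and $\tilde x_{3,n}$. The componentwise exponential bound then follows directly. The rates are $L_1/\sqrt{\epsilon}$, $L_2/\sqrt{\mu_1}$ and $L_3/\sqrt{\mu_2}$. These dominate $\min\{L_1,L_2,L_3\}=c_n'$ provided $\epsilon,\mu_1,\mu_2\le 1$. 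Otherwise I would absorb the factors by redefining $c_n'$ with those scalings. I note this as a minor normalization issue rather than a real obstacle.

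\emph{Finite-time extinction of $\tilde\psi_n$.} The $\tilde\psi_n$-subsystem \eqref{eq_tipsi} is autonomous. It is a leftward-convecting system $\tilde\psi_{n,t}=-\Sigma\tilde\psi_{n,x}+\dots$ with speeds $\lambda_i$, the diagonal entries of $\Sigma$, with $\lambda_1=1/\sqrt{\epsilon}$ the slowest under the ordering assumption. Its inflow condition at $x=0$ is homogeneous, $\tilde\psi_n(t,0)=0$. I would first note that the local term $(F_{21,n}-F_{22})\tilde\psi_n$ is diagonal-free (the change of coordinates removed the diagonal couplings) or otherwise triangular. If it is not triangular, I would argue via characteristics anyway.

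Along characteristics the solution satisfies a Volterra-type integral equation. The integral term $\smallint_0^x D_n^+\tilde\psi_n\,dy$ only involves values at smaller $x$. Using the method of characteristics, I would show by induction in space that $\tilde\psi_n(t,x)=0$ for $x\le\lambda_1(t)$-fronts, i.e.\ that the zero region propagates from $x=0$ with the slowest speed $\lambda_1$. Consequently $\tilde\psi_n\equiv0$ for $t\ge 1/\lambda_1$.

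The main obstacle is making the step involving the cross-couplings between components with different speeds rigorous. The slowest component governs the front, but coupling can inject nonzero data from faster components only from regions that are themselves already zero. A Gronwall argument on the domain of dependence $\{x\le \lambda_1 t\}$ handles this. Concretely, I would use the energy $\int_0^{\lambda_1 t}|\tilde\psi_n|^2dx$ restricted to the moving domain and show it stays zero.

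\emph{Finite-time extinction of $\tilde\sigma_n$.} For $t\ge1/\lambda_1$, the $\tilde\sigma_n$-subsystem \eqref{eq_tisigma} reduces to
\begin{align*}
\tilde\sigma_{n,t}=\Sigma\tilde\sigma_{n,x}+\Omega_n\tilde\sigma_n,\qquad \tilde\sigma_n(t,1)=0 .
\end{align*}
This system convects rightward-to-left with speeds at least $\lambda_1$, and $\Omega_n$ is strictly upper triangular. I would solve it in cascade. The third component $\tilde\sigma_{3}$ is a pure transport equation with zero inflow, so it vanishes after time $1/\lambda_3\le 1/\lambda_1$. Next, $\tilde\sigma_2$ is driven by $\omega_{23,n}\tilde\sigma_3$, which is now zero, so it vanishes after a further transit time; the same argument applies to $\tilde\sigma_1$. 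Tracking the characteristics more carefully, I would show that the domains of dependence already give extinction by a total time of $1/\lambda_1$ after the $\tilde\psi_n$ extinction, which yields $t\ge 2/\lambda_1$. If the cascade only gave a cruder bound, I would track characteristics as described to tighten it.

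\emph{Boundedness on $[0,2/\lambda_1]$.} This follows from well-posedness of the linear hyperbolic system with bounded coefficients. The kernels $N_n,M_n$ are bounded by Lemma \ref{lem:2}, hence $D_n^\pm$ are bounded. A standard $L^2$ energy estimate then gives $\|(\tilde\sigma_n,\tilde\psi_n)(t)\|\le e^{Ct}\|(\tilde\sigma_n,\tilde\psi_n)(0)\|$.
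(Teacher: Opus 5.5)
Your proposal is correct and follows the same route as the paper. The paper's proof has three steps: the decoupled diagonal ODE for $\tilde X_n$; extinction of the autonomous $\tilde\psi_n$-subsystem for $t\ge 1/\lambda_1$, by characteristics from the homogeneous inflow at $x=0$; and extinction of $\tilde\sigma_n$ after a further $1/\lambda_1$, using the zero inflow at $x=1$ and the strictly upper-triangular (fast-into-slow) structure of $\Omega_n$. Your hedge in the $\tilde\sigma_n$ step is unnecessary, since the same moving-domain energy argument on $\{x\ge 1-\lambda_1(t-1/\lambda_1)\}$ gives the $1/\lambda_1$ bound directly.

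Your observation about the ODE rates is a valid correction that the paper's proof does not address. The rates are really $L_1/\sqrt{\epsilon}$, $L_2/\sqrt{\mu_1}$ and $L_3/\sqrt{\mu_2}$, so $c_n'=\min\{L_1,L_2,L_3\}$ is only justified when $\epsilon,\mu_1,\mu_2\le 1$.
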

\begin{proof}
	\normalfont
	Noting \eqref{eq_tisigma},\eqref{eq_tipsi}, we find that the system consists in a cascade of the $\tilde{\psi}_n$-system into the $\tilde{\sigma}_n$-system. Therefore, by using the method of characteristics, we can easily find that $\tilde{\psi}_n$ is identically zero for $t \geq \frac{1}{\lambda_1}$. When $t \geq \frac{1}{\lambda_1}$, the $\tilde{\sigma}_n$-system  becomes:
	\begin{align}
		\scalebox{0.97}{$\tilde{\sigma}_{n,t}(t,x) - \Sigma \tilde{\sigma}_{n,x}(t,x) = \Omega_n(x)\tilde{\sigma}_n(t,x),  \tilde{\sigma}_n(t,1) = 0$}.
	\end{align}
	Noting the particular structure of $\Omega_n(x)$, the $\tilde{\sigma}_n$-system is in fact a cascade of its fast states into its slow states. So one can obtain that $\tilde{\sigma}_n$ eventually identically vanishes for
	$t \geq \tfrac{1}{\lambda_1} + \tfrac{1}{\lambda_1} = \tfrac{2}{\lambda_1}$.
	This concludes the proof.
\end{proof}
Defining the 2-D variables
\begin{align}\label{eq:waveobe}
	\tilde{w} = w - \hat{w}, \tilde{\alpha} = \alpha - \hat{\alpha}, \tilde{\beta} = \beta - \hat{\beta},
\end{align}
the results about the observer error for the original 2-D PDE are given below.  
\begin{corollary}\label{cor:2}
	Under the assumption of Remark \ref{re:1}, considering the observer \eqref{eq_hat}--\eqref{eq_hatbc} together with \eqref{eq:w1} for each Fourier mode $n$, constructing the 2D state estimates $\hat{w}(t,x,y) =\scalebox{0.75}{$\sum_{n=0}^{N}$}\hat{w}_n(t,x)\sin(\tfrac{n\pi y}{L})$, $\hat{\alpha}(t,x,y) = \scalebox{0.75}{$\sum_{n=0}^{N}$}\hat{\alpha}_n(t,x)\sin(\tfrac{n\pi y}{L})$ and $\hat{\beta}(t,x,y) = \scalebox{0.75}{$\sum_{n=0}^{N}$}\hat{\beta}_n(t,x)\cos(\tfrac{n\pi y}{L})$, for initial conditions of the resulting observer errors $\tilde{w}_0$, $\tilde{\alpha}_0$, $\tilde{\beta}_0$ $\in H^1((0,1)^2)$ and $\tilde{w}_{0,t}$, $\tilde{\alpha}_{0,t}$, $\tilde{\beta}_{0,t}$ $\in L^2((0,1)^2)$, the estimates exponentially track the states in the plant \eqref{eq_2dwave}--\eqref{eq_2dw} in the sense that
	there exist a constant $D_3 > 0$ and an arbitrary positive number $D_4$ such that   \begin{align}\label{eq_thm3_totalOmega}
		\Omega_f(t) \leq D_3 e^{-D_4 t}\Omega_f(0)
	\end{align}
	where 
	\begin{align}
		\nonumber
		&\Omega_f(t) = {\Vert \tilde{w}(t,\cdot,\cdot)\Vert}_{H^1}^{2} + {\Vert \tilde{\alpha}(t,\cdot,\cdot)\Vert}_{H^1}^{2} + {\Vert \tilde{\beta}(t,\cdot,\cdot)\Vert}_{H^1}^{2}\\
		& + {\Vert \tilde{w}_{t}(t,\cdot,\cdot)\Vert}_{L^2}^{2} + {\Vert \tilde{\alpha}_{t}(t,\cdot,\cdot)\Vert}_{L^2}^{2} + {\Vert \tilde{\beta}_{t}(t,\cdot,\cdot)\Vert}_{L^2}^{2}.
	\end{align}
\end{corollary}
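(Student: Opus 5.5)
The plan mirrors the proof of Corollary~\ref{cor:1}. First, at the level of each mode $n\le N$, I convert the convergence of the observer-error target system in Theorem~\ref{thm:3} into an exponential estimate for
\[
\tilde\Omega_n(t)=\|\tilde w_n\|_{H^1}^2+\|\tilde\alpha_n\|_{H^1}^2+\|\tilde\beta_n\|_{H^1}^2+\|\tilde w_{n,t}\|_{L^2}^2+\|\tilde\alpha_{n,t}\|_{L^2}^2+\|\tilde\beta_{n,t}\|_{L^2}^2 .
\]
Then I sum over the truncated modes by Parseval's identity, exactly as for $\Omega_a$.

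For the per-mode step, the observer-error analogue of \eqref{eq:w1} expresses $\tilde w_{n,x},\tilde w_{n,t},\tilde\alpha_{n,x},\tilde\alpha_{n,t},\tilde\beta_{n,x},\tilde\beta_{n,t}$ as linear combinations of $\tilde Z_n,\tilde Y_n$. The coefficients involve $k_1,k_2$ and their inverses, which are bounded above and below on $[0,1]$. The zeroth-order terms $\tilde w_n,\tilde\alpha_n,\tilde\beta_n$ equal $\tilde X_n$ plus integrals of $\tilde Z_n,\tilde Y_n$. Hence there are constants $m_1,m_2>0$ with
\[
m_1\bigl(\|\tilde Z_n\|^2+\|\tilde Y_n\|^2+|\tilde X_n|^2\bigr)\le \tilde\Omega_n\le m_2\bigl(\|\tilde Z_n\|^2+\|\tilde Y_n\|^2+|\tilde X_n|^2\bigr),
\]
where the lower bound uses the Poincaré-type bound through the boundary value $\tilde X_n$.

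Next, Lemma~\ref{lem:2} gives bounded kernels $N_n,M_n$, so the Volterra transformation \eqref{eq_tibacz}--\eqref{eq_tibacy} and its inverse are bounded on $L^2$. This makes $\|\tilde Z_n\|^2+\|\tilde Y_n\|^2$ equivalent to $\|\tilde\sigma_n\|^2+\|\tilde\psi_n\|^2$. By Theorem~\ref{thm:3}, $\tilde\sigma_n,\tilde\psi_n$ vanish for $t\ge 2/\lambda_1$, while $|\tilde X_n(t)|\le e^{-c_n' t}|\tilde X_n(0)|$.

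The main obstacle is the finite interval $[0,2/\lambda_1]$, since Theorem~\ref{thm:3} only asserts boundedness there. On that interval I would use a standard energy estimate for the linear target system \eqref{eq_tisigma}--\eqref{eq_tibc}. Take $V=\int_0^1(e^{-bx}\tilde\psi_n^\top\Sigma^{-1}\tilde\psi_n+a e^{x}\tilde\sigma_n^\top\Sigma^{-1}\tilde\sigma_n)\,dx$, with $a$ small enough to absorb the boundary term from $\tilde\sigma_n(t,1)=R_1\tilde\psi_n(t,1)$. Gronwall then gives $V(t)\le e^{\kappa t}V(0)$ for some $\kappa>0$.

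Combining the two regimes, for any chosen $D_4\le\min_n c_n'$ we get, on $[0,2/\lambda_1]$,
\[
\|\tilde\sigma_n\|^2+\|\tilde\psi_n\|^2\le e^{(\kappa+D_4)2/\lambda_1}e^{-D_4t}\bigl(\|\tilde\sigma_n(0)\|^2+\|\tilde\psi_n(0)\|^2\bigr),
\]
and these norms are identically zero afterwards. This yields $\tilde\Omega_n(t)\le \bar C_n e^{-D_4 t}\tilde\Omega_n(0)$ with $D_4=\min\{L_1,L_2,L_3\}$ arbitrary by choice of the gains, and it is independent of $n$. Since $D_4$ is capped by the $\tilde X_n$ rate, the finite-time part costs only a constant.

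Finally, the estimates are built with the same truncated sine/cosine bases, so $\tilde w,\tilde\alpha,\tilde\beta$ are finite Fourier sums under Remark~\ref{re:1}. Parseval's identity in $y$ gives constants with $\underline M\sum_{n\le N}\tilde\Omega_n\le \Omega_f\le \bar M\sum_{n\le N}\tilde\Omega_n$, where the $y$-derivative contributes factors $n^2\pi^2/L^2\le N^2\pi^2/L^2$. Taking $D_3=(\bar M/\underline M)\max_{n\le N}\bar C_n$ gives \eqref{eq_thm3_totalOmega}.
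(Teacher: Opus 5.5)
Your proposal is correct and follows the paper's proof. Both obtain a per-mode bound $\Omega_{nf}(t)\le H_n e^{-c_n' t}\Omega_{nf}(0)$ from Theorem~\ref{thm:3}, the Volterra map \eqref{eq_tibacz}--\eqref{eq_tibacy} and \eqref{eq:w1}, and then sum over the finitely many modes by Parseval. You also make explicit two steps the paper only asserts: the Gronwall bound on $[0,2/\lambda_1]$, and the lower Parseval bound the paper uses silently in $\sum_n Q_n\Omega_{nf}(0)\le\Omega_f(0)$.

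Two minor points, both shared with the paper. First, that lower bound must omit the $n=0$ sine coefficients $\tilde w_0,\tilde\alpha_0$, which multiply $\sin 0\equiv 0$ and so are invisible in $\Omega_f$; this is harmless because at $n=0$ the $\beta$-block decouples. Second, $\dot{\tilde X}_n=-L_{n,x}\tilde X_n$ actually decays at $\min\{L_1/\sqrt{\epsilon},L_2/\sqrt{\mu_1},L_3/\sqrt{\mu_2}\}$ rather than $\min\{L_1,L_2,L_3\}$, which still leaves $D_4$ arbitrary.
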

\begin{proof}
	According to Theorem \ref{thm:3},  \eqref{eq_tibacz},\eqref{eq_tibacy}, \eqref{eq:w1}, and \eqref{convert}, we have
	\begin{align}\label{eq_thm3_n}
		\Omega_{nf}(t) \leq H_ne^{-c^{\prime}_n t}\Omega_{nf}(0),
	\end{align}
	for some positive $H_n$, where $\Omega_{nf}(t) = {\Vert \tilde{w}_n(t,\cdot)\Vert}_{H^1}^{2} + {\Vert \tilde{\alpha}_n(t,\cdot)\Vert}_{H^1}^{2} + {\Vert \tilde{\beta}_n(t,\cdot)\Vert}_{H^1}^{2} + {\Vert \tilde{w}_{n,t}(t,\cdot)\Vert}_{L^2}^{2} + {\Vert \tilde{\alpha}_{n,t}(t,\cdot)\Vert}_{L^2}^{2} + {\Vert \tilde{\beta}_{n,t}(t,\cdot)\Vert}_{L^2}^{2}$. It is obtained from \eqref{eq_w}--\eqref{eq_beta} and Remark \ref{re:1} that
	$\tilde{w} = \scalebox{0.8}{$\sum_{n=0}^{N}$} \tilde{w}_n(t,x)\sin(n\pi\tfrac{y}{L})$, $\tilde{\alpha} = \scalebox{0.8}{$\sum_{n=0}^{N}$} \tilde{\alpha}_n(t,x)\sin(n\pi\tfrac{y}{L})$, $\tilde{\beta} = \scalebox{0.8}{$\sum_{n=0}^{N}$} \tilde{\beta}_n(t,x)\cos(n\pi\tfrac{y}{L})$, with similar expansions for their time derivatives $\tilde{w}_t$, $\tilde{\alpha}_t$, $\tilde{\beta}_t$. Using Parseval's identity, there exist constants $Q_n>0$ such that $\Omega_f(t) \le  \scalebox{0.8}{$\sum_{n=0}^{N}$} Q_n\Omega_{nf}(t)$. Applying \eqref{eq_thm3_n}, we obtain
	\begin{align}
		\scalebox{0.93}{$\Omega_f(t) \leq D_3e^{-D_4t}\scalebox{0.8}{$\sum_{n=0}^{N}$} Q_n \Omega_{nf}(0) \leq D_3e^{-D_4t}\Omega_f(0)$},
	\end{align}
	where $D_3 \ge \max\{H_n\}$ and $D_4\le \min\{c^{\prime}_n\}$, for all $n \leq N$. The proof is complete.
	We know from Theorem \ref{thm:3} that the constant $c^{\prime}_n$ only depends on the observer gain parameters $L_1, L_2$ and $L_3$ and they can be chosen independently of the mode index $n$. Therefore, $D_4$ only depends on the arbitrarily positive design parameters $L_1, L_2$, $L_3$, and it can be set as large as desired by adjusting $L_1, L_2$, $L_3$.
\end{proof}

\section{Output-Feedback Control}
Combining the full state feedback law with the observer estimates, we obtain an output-feedback law for each mode $n$:
\begin{align}
	\label{eq_U1nob} \nonumber
	&\hat U_{1,n}(t) = k^{\prime}Gh[(\smallint\nolimits_{0}^{1} (\mathcal{F}_{11}(\xi) \hat{w}_n(t,\xi) + \mathcal{F}_{12}(\xi) \hat{w}_{n,t}(t,\xi)\\
	\nonumber
	& - \mathcal{F}_{13}(\xi) \hat{\alpha}_n(t,\xi) + \mathcal{F}_{14}(\xi)\hat{\alpha}_{n,t}(t,\xi)\\
	\nonumber
	& - \mathcal{F}_{15}(\xi) \hat{\beta}_n(t,\xi) + \mathcal{F}_{16}(\xi) \hat{\beta}_{n,t}(t,\xi)) d\xi\\
	\nonumber
	& + \mathcal{D}_{11} \hat{w}_n(t,1) - \mathcal{D}_{12} {w}_n(t,0) + \mathcal{D}_{13} \hat{\alpha}_n(t,1)\\
	\nonumber
	& - \mathcal{D}_{14} {\alpha}_n(t,0) + \mathcal{D}_{15} \hat{\beta}_n(t,1) - \mathcal{D}_{16} {\beta}_n(t,0))\\
	&\times\exp(-\sqrt{\epsilon}\bar{c}_1) - \sqrt{\epsilon}\hat{w}_{n,t}(t,1) - \hat{\alpha}_n(t,1)],\\
	\label{eq_U2nob} \nonumber
	&\hat U_{2,n}(t) = \smallint\nolimits_{0}^{1} (\mathcal{F}_{21}(\xi) \hat{w}_n(t,\xi) + \mathcal{F}_{22}(\xi) \hat{w}_{n,t}(t,\xi)\\
	\nonumber
	& - \mathcal{F}_{23}(\xi) \hat{\alpha}_n(t,\xi) + \mathcal{F}_{24}(\xi) \hat{\alpha}_{n,t}(t,\xi)\\
	\nonumber
	& - \mathcal{F}_{25}(\xi) \hat{\beta}_n(t,\xi) + \mathcal{F}_{26}(\xi) \hat{\beta}_{n,t}(t,\xi)) d\xi\\
	\nonumber
	& + \mathcal{D}_{21} \hat{w}_n(t,1) - \mathcal{D}_{22} {w}_n(t,0)\\
	\nonumber
	& + \mathcal{D}_{23}\hat{\alpha}_n(t,1) - \mathcal{D}_{24} {\alpha}_n(t,0) - \sqrt{\mu_1}\hat{\alpha}_{n,t}(t,1)\\
	& + \mathcal{D}_{25} \hat{\beta}_n(t,1) - \mathcal{D}_{26} {\beta}_n(t,0),\\
	\label{eq_U3nob} \nonumber
	&\hat U_{3,n}(t) = \smallint\nolimits_{0}^{1} (-\mathcal{F}_{31}(\xi) \hat{w}_n(t,\xi) + \mathcal{F}_{32}(\xi) \hat{w}_{n,t}(t,\xi)\\
	\nonumber
	& - \mathcal{F}_{33}(\xi) \hat{\alpha}_n(t,\xi) + \mathcal{F}_{34}(\xi) \hat{\alpha}_{n,t}(t,\xi)\\
	\nonumber
	& - \mathcal{F}_{35}(\xi) \hat{\beta}_n(t,\xi) + \mathcal{F}_{36}(\xi) \hat{\beta}_{n,t}(t,\xi)) d\xi\\
	\nonumber
	& + \mathcal{D}_{31} \hat{w}_n(t,1) - \mathcal{D}_{32} {w}_n(t,0)\\
	\nonumber
	&+ \mathcal{D}_{33} \hat{\alpha}_n(t,1) - \mathcal{D}_{34} {\alpha}_n(t,0)\\
	& + \mathcal{D}_{35} \hat{\beta}_n(t,1) - \mathcal{D}_{36} {\beta}_n(t,0) - \sqrt{\mu_2}\hat{\beta}_{n,t}(t,1),
\end{align}
where the estimated states ${\hat{w}}_n, {\hat{\alpha}}_n, {\hat{\beta}}_n, {\hat{w}}_{n,t}, {\hat{\alpha}}_{n,t}$ and ${\hat{\beta}}_{n,t}$ are obtained from \eqref{eq:w1}.
Based on \eqref{eq_U1nob}--\eqref{eq_U3nob} for each mode $n$ and under the assumption of Remark \ref{re:1}, the boundary controller for the original 2-D PDE \eqref{eq_2dwave}--\eqref{eq_2dw}  is
\begin{align}
	\hat U_1(t,y) &= \scalebox{0.9}{$\sum_{n=1}^{N}$} \hat U_{1,n}(t)\sin(n\pi \tfrac{y}{L}),\label{eq:hatU1}\\
	\hat U_2(t,y) &= \scalebox{0.9}{$\sum_{n=1}^{N}$}\hat  U_{2,n}(t)\sin(n\pi \tfrac{y}{L}),\\
	\hat U_3(t,y) &= \scalebox{0.9}{$\sum_{n=0}^{N}$}\hat  U_{3,n}(t)\cos(n\pi \tfrac{y}{L}).\label{eq:hatU3}
\end{align}
The result of the obtained output-feedback closed-loop system is provided as follows.
\begin{thm}
    Consider system \eqref{eq_1dwave}--\eqref{eq_1dw} for $n=0,1,\cdots, N$, with initial conditions $w_{n,0}$, $\alpha_{n,0}$, $\beta_{n,0} \in H^{1}(0,1)$, $w_{n,0t}$, $\alpha_{n,0t}$, $\beta_{n,0t} \in L^2$, under the control law \eqref{eq_U1nob}--\eqref{eq_U3nob} with the observer designed as \eqref{eq_hat}--\eqref{eq:w1}. For $\delta_1, \delta_2, \delta_3$ satisfying $ 2\min\left\{\delta_1, \delta_2, \delta_3\right\} - 1>0$ and output injection gains $L_1$, $L_2$, $L_3$ to be positive, the exponential stability is obtained in the sense of
	\begin{align}\label{eq_thm4_n}
		\Omega_{nd}(t) \leq \check C_ne^{-\check c_n t}\Omega_{nd}(0),~n=0,1,\cdots,N
	\end{align}
	for some positive $\check C_n$, where $\check c_n=\min\{2\min\left\{\delta_1, \delta_2, \delta_3\right\} - 1,L_1,L_2,L_3\}$ and where $\Omega_{nd}(t) = {\Vert w_n(t,\cdot)\Vert}_{H^1}^{2} + {\Vert \alpha_n(t,\cdot)\Vert}_{H^1}^{2} + {\Vert \beta_n(t,\cdot)\Vert}_{H^1}^{2} + {\Vert w_{n,t}(t,\cdot)\Vert}_{L^2}^{2} + {\Vert \alpha_{n,t}(t,\cdot)\Vert}_{L^2}^{2} + {\Vert \beta_{n,t}(t,\cdot)\Vert}_{L^2}^{2} + {\Vert \hat{w}_n(t,\cdot)\Vert}_{H^1}^{2} + {\Vert \hat{\alpha}_n(t,\cdot)\Vert}_{H^1}^{2} + {\Vert \hat{\beta}_n(t,\cdot)\Vert}_{H^1}^{2} + {\Vert \hat{w}_{n,t}(t,\cdot)\Vert}_{L^2}^{2} + {\Vert \hat{\alpha}_{n,t}(t,\cdot)\Vert}_{L^2}^{2} + {\Vert \hat{\beta}_{n,t}(t,\cdot)\Vert}_{L^2}^{2}$. 
\end{thm}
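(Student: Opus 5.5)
The plan is a separation-principle argument. We work with the pair (observer-state, observer-error) rather than (plant, observer), and show that the closed loop is a cascade: the error system drives the estimator system, and the error system is autonomous and exponentially stable by Theorem \ref{thm:3}.

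\textbf{Step 1 (estimator dynamics in target coordinates).} Apply the backstepping transformation \eqref{eq_bactr}--\eqref{eq_bactr2} to the observer states $(\hat Z_n,\hat Y_n)$, with $\hat X_n$ replaced by the measured $X_n$ wherever the control law uses the measurement. Call the results $\hat\sigma_n,\hat\psi_n$. Since the control law \eqref{eq_U1nob}--\eqref{eq_U3nob} is exactly \eqref{eq_Uin} evaluated on the estimates, the same computations that produced \eqref{eq_target}--\eqref{eq_bc} give the structure
\[
\hat\sigma_{n,t}=\Sigma\hat\sigma_{n,x}+\Omega_n(x)\hat\sigma_n+\mathcal{G}_1[\tilde Z_n(t,0),\tilde Y_n(t,1),\tilde X_n],
\]
with $\hat\sigma_n(t,1)$ equal to a term linear in $\tilde Y_n(t,1)$ (coming from $R_1$ and the $h_1,h_2$ structure). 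The $\hat\psi_n$ equation and the ODE take the form of \eqref{eq_target2} plus perturbations that are linear in the output-injection terms $P_n^{\pm}(x)\tilde Z_n(t,0)$ and in $L_{n,x}\tilde X_n$. The kernels are bounded by Theorem \ref{thm:1}, and the injection gains are bounded by Lemma \ref{lem:2} through \eqref{eq_P}. Hence every perturbation is bounded by a constant times the error state, including its boundary traces $\tilde Z_n(t,0)$ and $\tilde Y_n(t,1)$.

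\textbf{Step 2 (Lyapunov cascade).} Take the weighted $L^2$ Lyapunov functional $V_n$ used for Theorem \ref{thm:2} in Appendix-\ref{adx:lyapunov}, with weights $e^{\pm\delta x}$ on $\hat\sigma_n,\hat\psi_n$ and a quadratic term in the ODE state. Along the perturbed dynamics it satisfies
\[
\dot V_n\le -c_nV_n+K\big(|\tilde Z_n(t,0)|^2+|\tilde Y_n(t,1)|^2+|\tilde X_n|^2+\|\tilde\sigma_n\|^2+\|\tilde\psi_n\|^2\big).
\]
By Theorem \ref{thm:3}, $\tilde\sigma_n\equiv\tilde\psi_n\equiv0$ for $t\ge 2/\lambda_1$, so the boundary traces vanish as well after that time, and $|\tilde X_n|\le e^{-c_n't}|\tilde X_n(0)|$. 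On $[0,2/\lambda_1]$ the error system is a linear hyperbolic system on a finite interval. There its states and traces are bounded in $L^2$, and in $L^2(0,2/\lambda_1)$ for the traces, by a constant times the initial error norm. This follows from the characteristics representation used in Theorem \ref{thm:3}. Integrating the differential inequality with the comparison lemma then gives
\[
V_n(t)\le C e^{-\min\{c_n,c_n'\}t}\big(V_n(0)+\|\tilde\cdot(0)\|^2\big),
\]
where the decay rate is $\check c_n$. If the rates coincide, a slightly smaller rate absorbs the factor $t$.

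\textbf{Step 3 (back to original variables).} Invertibility of the Volterra transformations \eqref{eq_invtr} and \eqref{eq_tibacz}--\eqref{eq_tibacy}, together with Riemann-variable relations of the type \eqref{eq:w1} (which make $H^1\times L^2$ norms of $(w_n,\alpha_n,\beta_n)$ equivalent to $L^2$ norms of $(Z_n,Y_n)$ plus $|X_n|$), gives norm equivalence between $\Omega_{nd}$ and $V_n+\|(\tilde\sigma_n,\tilde\psi_n)\|^2+|\tilde X_n|^2$. The plant norm is recovered through $w_n=\hat w_n+\tilde w_n$. This yields \eqref{eq_thm4_n}.

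\textbf{Main obstacle.} The hardest step is Step 1, specifically the boundary condition at $x=1$. The observer boundary $\hat Z_n(t,1)=U_{o,n}+R_1\hat Y_n(t,1)$ must combine with the actual input so that $\hat\sigma_n(t,1)$ contains only error terms. I would first check this by writing $Z_n(t,1)=U_{\mathrm{in},n}$ and verifying that the $h_1,h_2$ and $\pm2$ coefficients make $\hat Z_n(t,1)-U_{\mathrm{in},n}(\hat\cdot)$ equal to $R_1\tilde Y_n(t,1)$ up to sign. The resulting boundary-trace perturbation must then be handled in the Lyapunov estimate. This is possible because the weighted functional produces a negative term in $\hat\sigma_n(t,1)$ that can dominate it after Young's inequality. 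Since $\tilde Y_n(t,1)$ vanishes in finite time, a crude bound on $[0,2/\lambda_1]$ suffices.
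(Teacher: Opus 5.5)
Your proposal is correct in outline, but it is organized differently from the paper's proof.

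\textbf{The two routes.} The paper stays in plant coordinates. It argues that $U_{i,n}-\hat U_{i,n}$ decays because the observer error does (Theorem~\ref{thm:3}). It then treats the plant as the state-feedback closed loop of Theorem~\ref{thm:2} driven by this vanishing boundary mismatch. Finally it recovers the estimates from $\hat w_n=w_n-\tilde w_n$, $\hat\alpha_n=\alpha_n-\tilde\alpha_n$, $\hat\beta_n=\beta_n-\tilde\beta_n$. You instead push the estimator $(\hat Z_n,\hat Y_n,X_n)$ through the controller kernels and obtain a target system forced by the error. This is the observer-side form of the separation cascade.

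\textbf{What each buys.} The paper's route needs no new kernel computation. But what it actually uses is an input-to-state version of Theorem~\ref{thm:2}: a nonzero $\sigma_n(t,1)$ adds $+\zeta_2e^{\delta}|\sigma_n(t,1)|^2$ to $\dot V$. It also needs a bound on $U_{i,n}-\hat U_{i,n}$, which contains error traces such as $\tilde w_{n,t}(t,1)$. The $L^2$ error does not control these pointwise in time before $t=2/\lambda_1$. Both points are left implicit there. Your route pays with extra bookkeeping:
the injection terms $P_n^{\pm}(x)\tilde Z_n(t,0)$; terms like $L_n(x,0)\Sigma C\tilde Z_n(t,0)$ and $\Phi_n(x)\Sigma\tilde Z_n(t,0)$ from the integration by parts; and $C\tilde Z_n(t,0)$, $\Sigma\tilde Z_n(t,0)$ in the $\hat\psi_n(t,0)$ condition and the ODE. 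In return it makes the cascade, the $L^2(0,2/\lambda_1)$ trace estimate, and the origin of the rate $\min\{c_n,c_n'\}$ explicit. You also correctly note the extra factor $t$ when the two rates coincide.

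\textbf{First correction: the sign at $x=1$.} In your ``main obstacle'' paragraph the sign is reversed. Since $\hat\sigma_{n,t}=\Sigma\hat\sigma_{n,x}+\cdots$, the end $x=1$ is the inflow boundary of $\hat\sigma_n$. The term $\zeta_2\int_0^1e^{\delta x}\hat\sigma_n^\top\Sigma^{-1}\hat\sigma_n\,dx$ therefore contributes $+\zeta_2e^{\delta}|\hat\sigma_n(t,1)|^2$ to $\dot V_n$; the negative boundary term sits at $x=0$. So there is nothing negative to dominate it with. It must be carried as an additive input, which your Step~2 inequality already does, so the argument is unaffected.

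\textbf{Second correction: what actually appears at $x=1$.} The check you propose should show that the $R_1\hat Y_n(t,1)$ contributions cancel, because the law uses $\hat w_{n,t}(t,1)$, $\hat\alpha_{n,t}(t,1)$, $\hat\beta_{n,t}(t,1)$. What survives in $\hat\sigma_n(t,1)$ comes from the reconstruction \eqref{eq:w1}. For example, components of $\hat X_n$ appear where the integration-by-parts identity needs the measured $X_n$. These are terms bounded by $|\tilde X_n|$ and the $L^2$ error, and this is how $L_1,L_2,L_3$ enter $\check c_n$.

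This suggests splitting time. On $[0,2/\lambda_1]$, use Gronwall with any rate. On $[2/\lambda_1,\infty)$ the only forcing left is this additive $\tilde X_n$ term. So the nominal rate of Theorem~\ref{thm:2} is kept without re-tuning the Lyapunov weights to absorb Young splittings of the $\tilde Z_n(t,0)$ terms.
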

\begin{proof}
    Comparing the output-feedback control inputs \eqref{eq_U1nob}--\eqref{eq_U3nob} also exponentially converge to the state-feedback control laws \eqref{eq_U1n}--\eqref{eq_U3n}, recalling Theorem \ref{thm:3}, we have $U_{i,n}-\hat U_{i,n}$, are exponentially convergent to zero. Applying Theorem \ref{thm:2},  as well as $\hat{w}_n = w_n - \tilde{w}_n$, $\hat{\alpha}_n=\alpha_n-\tilde{\alpha}_n$, $\hat{\beta}_n=\beta_n-\tilde{\beta}_n$, the proof is complete.
\end{proof}
\begin{corollary}\label{cor:3}
	Under the assumption in Remark \ref{re:1}, consider the closed-loop system composed of the original plant \eqref{eq_2dwave}--\eqref{eq_2dw}, the observer \eqref{eq_hat}--\eqref{eq:w1}, and the control law given by \eqref{eq_U1nob}--\eqref{eq:hatU3}, the exponential stability is achieved in the sense that there exist constants $D_5>0$ and $D_6>0$ such that
	\begin{align}\label{eq_totalOmega}
		\Omega_d(t) \leq D_5 e^{-D_6 t}\Omega_d(0)
	\end{align}
	where $D_6$, which only depends on the arbitrarily positive design parameters  $\delta_1, \delta_2, \delta_3$, $L_1, L_2, L_3$, can be arbitrarily assigned by users, and where
	\begin{align}
		\nonumber
		&\Omega_d(t) = {\Vert w(t,\cdot,\cdot)\Vert}_{H^1}^{2} + {\Vert \alpha(t,\cdot,\cdot)\Vert}_{H^1}^{2} + {\Vert \beta(t,\cdot,\cdot)\Vert}_{H^1}^{2}\\
		\nonumber
		& + {\Vert w_{t}(t,\cdot,\cdot)\Vert}_{L^2}^{2} + {\Vert \alpha_{t}(t,\cdot,\cdot)\Vert}_{L^2}^{2} + {\Vert \beta_{t}(t,\cdot,\cdot)\Vert}_{L^2}^{2}\\
		\nonumber
		& + {\Vert \hat{w}(t,\cdot,\cdot)\Vert}_{H^1}^{2} + {\Vert \hat\alpha(t,\cdot,\cdot)\Vert}_{H^1}^{2} + {\Vert \hat\beta(t,\cdot,\cdot)\Vert}_{H^1}^{2}\\
		& + {\Vert \hat{w}_{t}(t,\cdot,\cdot)\Vert}_{L^2}^{2} + {\Vert \hat{\alpha}_{t}(t,\cdot,\cdot)\Vert}_{L^2}^{2} + {\Vert \hat{\beta}_{t}(t,\cdot,\cdot)\Vert}_{L^2}^{2}.
	\end{align}
\end{corollary}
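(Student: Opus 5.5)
The plan mirrors the proofs of Corollary \ref{cor:1} and Corollary \ref{cor:2}: reduce the 2-D estimate to the per-mode output-feedback theorem stated just above, then reassemble the modes via Parseval. Under Remark \ref{re:1}, the Fourier series \eqref{eq_w}--\eqref{eq_beta} are truncated at $N$. By construction in Corollary \ref{cor:2}, the estimates $\hat w,\hat\alpha,\hat\beta$ are the corresponding finite sums of $\hat w_n,\hat\alpha_n,\hat\beta_n$.

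First, I would show that $\Omega_d$ is equivalent to $\sum_{n=0}^N \Omega_{nd}$. The $y$-derivatives of $\sin(n\pi y/L)$ and $\cos(n\pi y/L)$ only introduce factors $n\pi/L \le N\pi/L$. Parseval's identity on $(0,L)$, applied to $w,w_x,w_y,w_t$ and likewise to $\alpha,\beta$ and their hatted versions, then gives constants $0<\underline Q\le \overline Q$, depending only on $N$ and $L$, with
\begin{align*}
\underline Q\sum_{n=0}^N \Omega_{nd}(t)\le \Omega_d(t)\le \overline Q\sum_{n=0}^N\Omega_{nd}(t).
\end{align*}
The lower bound matters here: it is what allows $\sum_n\Omega_{nd}(0)$ to be bounded back by $\Omega_d(0)$. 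The earlier corollaries used it only implicitly, and I would state it explicitly.

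Second, I would invoke the preceding output-feedback theorem for each mode. This gives $\Omega_{nd}(t)\le \check C_n e^{-\check c_n t}\Omega_{nd}(0)$ with $\check c_n=\min\{2\min\{\delta_1,\delta_2,\delta_3\}-1,L_1,L_2,L_3\}$. I would note that the 2-D inputs \eqref{eq:hatU1}--\eqref{eq:hatU3} project exactly onto $\hat U_{i,n}$ for each retained mode, so the modal closed loops are decoupled. Combining the two steps yields
\begin{align*}
\Omega_d(t)\le \overline Q\max_n\check C_n\, e^{-\min_n \check c_n t}\sum_{n}\Omega_{nd}(0)\le \tfrac{\overline Q}{\underline Q}\max_n\check C_n\, e^{-D_6 t}\Omega_d(0).
\end{align*}
This gives the claim with $D_5=\overline Q\max_n\check C_n/\underline Q$ and $D_6=\min_n\check c_n$.

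Third, I would argue that $D_6$ is arbitrarily assignable. The exponent $\check c_n$ is in fact independent of $n$, because $\delta_i$ and $L_i$ are chosen uniformly in the mode index. Hence $D_6=\min\{2\min_i\delta_i-1,L_1,L_2,L_3\}$, which becomes as large as desired by increasing all six parameters.

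The main obstacle is keeping $D_5$ finite. The per-mode constants $\check C_n$ depend on kernel bounds, which by Theorem \ref{thm:1} and Lemma \ref{lem:2} are of the form $Me^{Mx}$ with $M$ possibly growing in $n$. They also depend on the transformation between $(w_n,\alpha_n,\beta_n)$ and $(Z_n,Y_n,X_n)$. I would handle this by relying on the finiteness of $N$, so that the maximum is over finitely many modes. A secondary point is that the $n=0$ term is absent from the sine sums in \eqref{eq:hatU1}; since $\sin 0=0$, that term contributes nothing, so this is harmless.
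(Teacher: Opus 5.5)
Your proposal is correct and follows the same route as the paper, whose proof simply defers to Corollaries \ref{cor:1} and \ref{cor:2}. Both apply the per-mode output-feedback estimate with the $n$-independent rate $\check c_n=\min\{2\min\{\delta_1,\delta_2,\delta_3\}-1,L_1,L_2,L_3\}$, then reassemble the finitely many retained modes via Parseval, with $D_5$ built from $\max_n\check C_n$ and $D_6=\min_n\check c_n$.

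Stating the lower Parseval bound explicitly improves on what the paper leaves implicit. Note, however, that this bound holds only if the $n=0$ sine components $w_0,\alpha_0,\hat w_0,\hat\alpha_0$ are taken to be zero, since they do not appear in $\Omega_d$.
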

\begin{proof}
	\normalfont
	The proof is similar to those of Corollary \ref{cor:1} and \ref{cor:2}
\end{proof}

\section{Numerical Simulation}
The 2-D elastic model considered in the simulation is \eqref{eq_2dwave}--\eqref{eq_2dw} with the physical parameters given in Table \ref{tab:2}.  The relationship between physical and dimensionless parameters is given in \eqref{eq_dimensionless}. Implementing Fourier Series \eqref{eq_w}--\eqref{eq_U3} and recalling Remark \ref{re:1} with choosing $N=3$, applying the Riemann transformation \eqref{eq_riemann}--\eqref{eq_rie}, the simulation model is obtained as \eqref{eq_matrix}--\eqref{eq_matrix5}  with $\epsilon = 3$, $\mu_1 = 1.8$, $\mu_2 = 0.2$, $a = 0.2$, $\theta = 0.057$, $\xi = 0.2$, $L = 9$, and the modal number $n = 0,1,2,3$. These parameters are calculated by recalling that $\epsilon=\frac{\rho}{k^{\prime}G}$, $\mu_1=\rho I_1$, $\mu_2=\rho I_2$, $a=\rho h$, $\theta=\frac{\rho_f U}{Mk^{\prime}Gh}$, $\xi=\frac{\rho_f U^2}{Mk^{\prime}Gh}$ and \eqref{eq_dimensionless}. The simulation is conducted by the finite difference method with a time step of $0.001$ and a space step of $0.05$. The initial values are $x_{1,n}(0) = 0.01$, $x_{2,n}(0) = 0.02$, $x_{3,n}(0) = 0.01$ and $p_n(0,x) = q_n(0,x) = r_n(0,x) = s_n(0,x) = u_n(0,x) = v_n(0,x) = 0.01\sin(\pi x)$.
\begin{table}[h!]
    \renewcommand\arraystretch{2}
	\begin{center}
		\caption{Physical parameters of the elastic plate.}
        \label{tab:2}
		\begin{tabular}{lll} 
			\hline
			Name & Value & Unit\\
			\hline
			Plate width $L_1^*$ &  $1$ & $m$\\
            Plate length $L_2^*$ &  $9$ & $m$\\
            Plate thickness $h^*$ &  $0.03$ & $m$\\
            Density $\rho^*$ & $2700$ & $kg/m^3$\\
            Young's Modulus $E^*$ &  $1.8\times 10^8$ & $Pa$\\
            Modulus of rigidity $G^*$ &  $1080$ & $Pa$\\
            Shear factor $k^{\prime}$ &  $0.833$ & $-$\\
            \makecell[l]{Moment of inertia of the \\cross-section per unit width  $I^*$} & $2.25\times 10^{-6}$ & $m^3$\\
            \makecell[l]{Equivalent moment of inertia for\\ bending in the $x$-direction $I_1^*$} & $0.27$ & $m^3$\\
            \makecell[l]{Equivalent moment of inertia for\\ bending in the $y$-direction $I_2^*$} & $0.03$ & $m^3$\\
            Mach number $M^*$ & $3$ & $-$\\
            Free stream density $\rho_f^*$ & $0.00453$ & $kg/m^3$\\
            Free stream velocity $U^*$ & $1020$ & $m/s$\\
			\hline
		\end{tabular}
	\end{center}
\end{table}

We apply the proposed output-feedback controller \eqref{eq_U1nob}--\eqref{eq_U3nob}  with the design parameters chosen as $\delta_1 = \delta_2 = \delta_3 = 5$, $L_1=L_2=L_3=5$, and the gains $K_n(1,y)$, $L_n(1,y)$, $\Phi_n(1)$ are computed using a power series approach as in \cite{powerseries} for corresponding modal number. Using the relationship \eqref{convert} and \eqref{eq:w1}, we can derive the evolution of $w_n(t,x), \alpha_n(t,x), \beta_n(t,x)$, which, by \eqref{eq_w}--\eqref{eq_beta}, are then used to represent the states of the elastic plant $w(t,x,y)$, $\beta(t,x,y)$, $\alpha(t,x,y)$, as shown in the following figures, where the red line highlights the controlled boundary in the 2D domain.

As shown in Fig. \ref{fig:op}, the plant is open-loop unstable, with the states rapidly growing to large magnitudes due to the presence of in-domain instability and the absence of in-domain damping in the simulation model. Figs. \ref{fig:cl_w}-\ref{fig:cl_beta}, where the red line denotes the controlled boundary, show that all 2-D states are fast convergent to zero under the proposed boundary control, as expected in Corollary \ref{cor:3}. Additionally, Fig. \ref{fig:ew}--\ref{fig:ebeta} shows that the observer errors of these 2-D states, i.e., $\tilde w(t,x,y)$, $\tilde\beta(t,x,y)$, $\tilde\alpha(t,x,y)$ are also convergent to zero, which demonstrates that the proposed observer rapidly converges to the actual PDE states. The observer-based output-feedback boundary control inputs $U_1(t,y),~U_2(t,y),~U_3(t,y)$ of the 2-D plant \eqref{eq_2dwave}--\eqref{eq_2dw} are also calculated by summing the modal components \eqref{eq_U1nob}--\eqref{eq_U3nob} via \eqref{eq:hatU1}--\eqref{eq:hatU3}, as shown in Fig. \ref{fig:U}.
\section{Conclusion and Future Work}
In this work, motivated by active wing flutter suppression in high-Mach-number flight regimes, we modeled the flow-induced vibration of a two-dimensional elastic plate as a coupled system of two-dimensional wave PDEs with in-domain instabilities, and designed an observer-based output-feedback boundary controller via backstepping. The two-dimensional control problem is decomposed into a series of one-dimensional modal systems via Fourier series expansion for control law design. Then, an observer is constructed to recover the distributed system states solely from available boundary measurements, thereby achieving output-feedback control. The proposed controller guarantees exponential stability of the closed-loop system, with a tunable convergence rate that can be arbitrarily assigned by users. Simulation results demonstrate that the proposed controller rapidly suppresses three-dimensional vibrations of the 2-D plate, even in the presence of in-domain instability sources.
Future work will extend the design to adaptive control to address the system parameter uncertainties and external disturbances.
\begin{figure}[htpb]
	\begin{center}
		\includegraphics[height=4cm, width = 8cm]{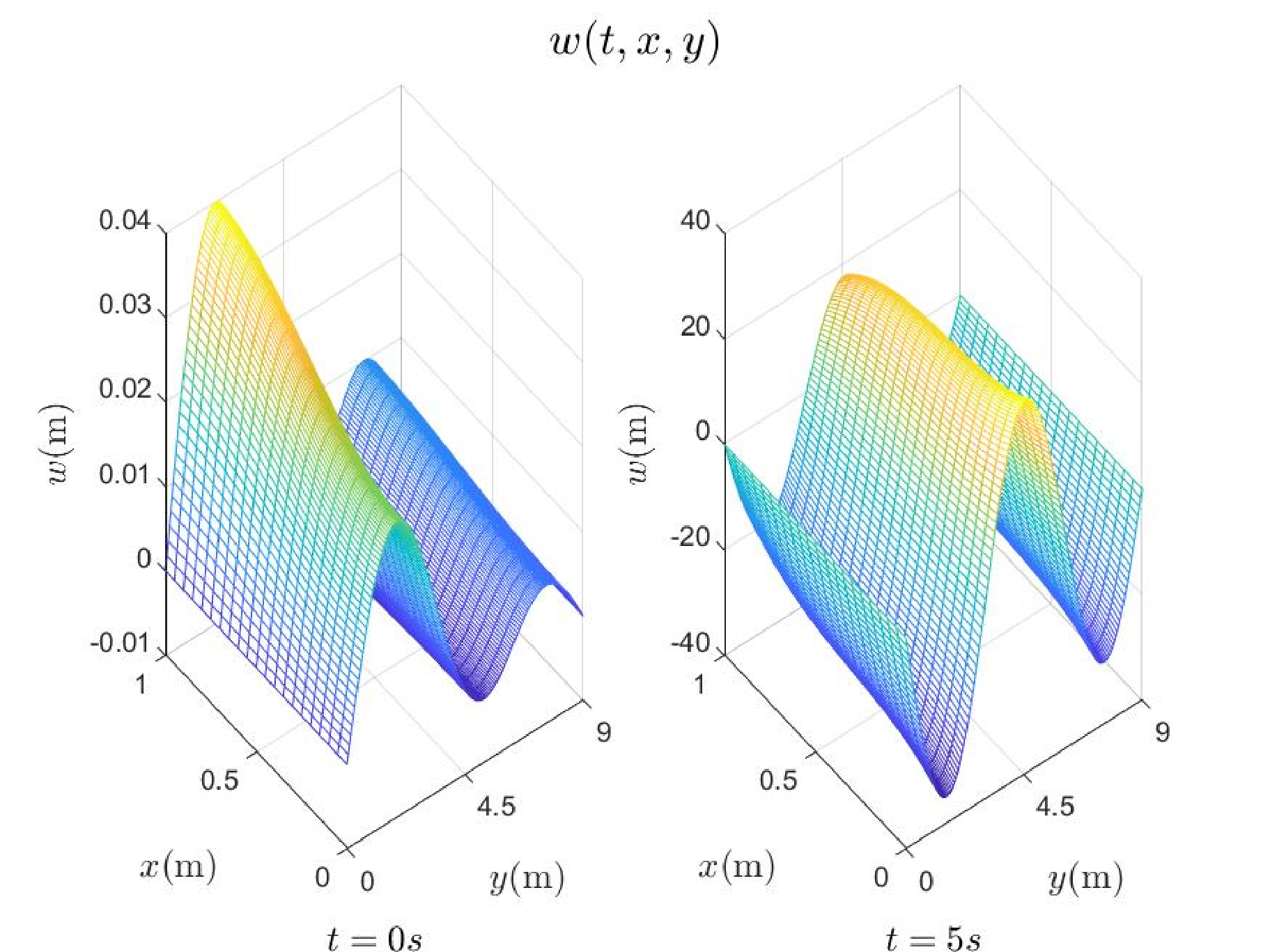}
		\caption{Results in the open loop.}
		\label{fig:op} 
	\end{center}
\end{figure}
\begin{figure}[htpb]
	\begin{center}
		\includegraphics[height=5.2cm, width = 8.5cm]{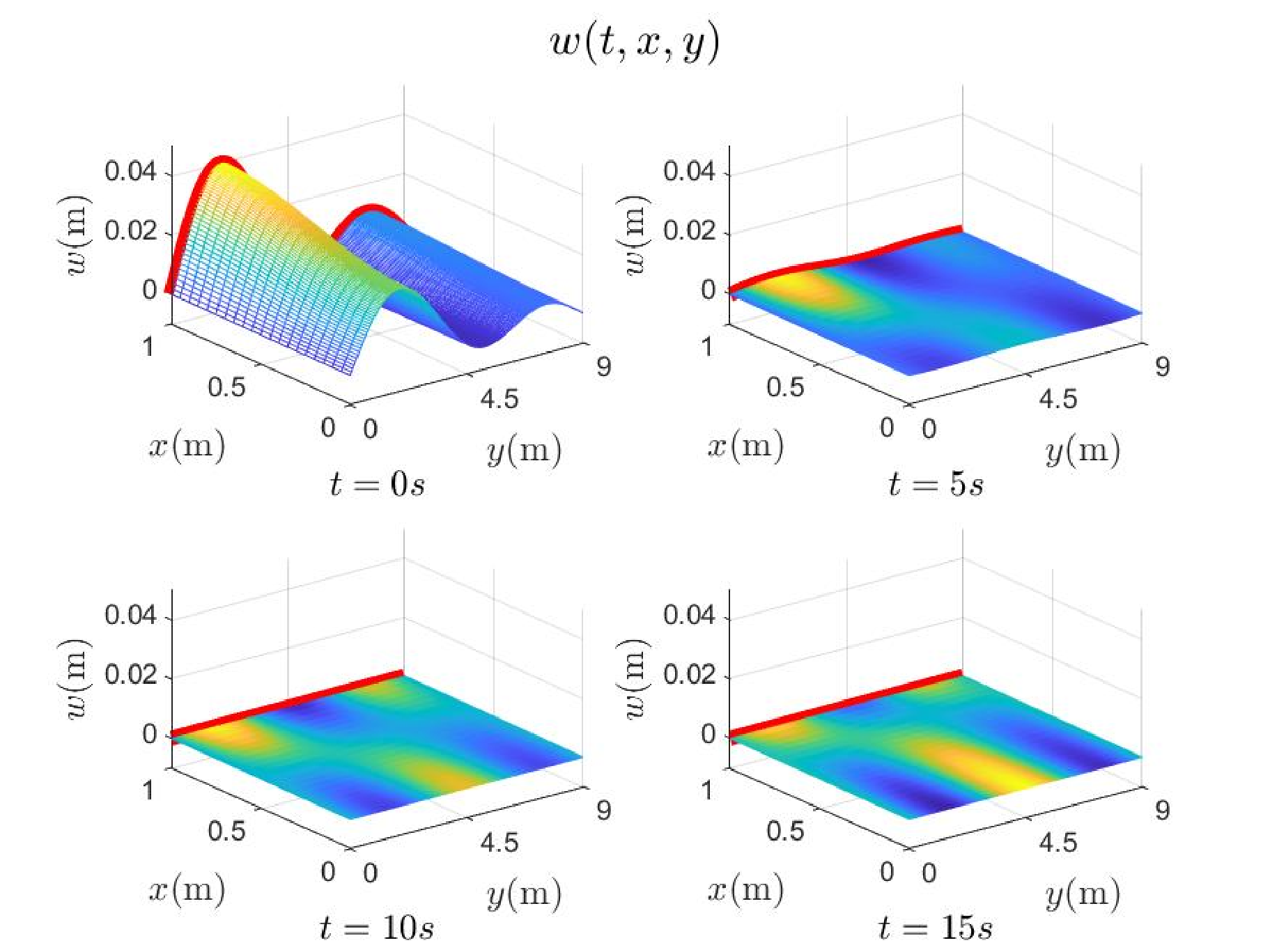}
		\caption{$w(t,x,y)$ under the proposed observer-based output-feedback boundary controller.}
		\label{fig:cl_w} 
	\end{center}
\end{figure}
\begin{figure}[htpb]
	\begin{center}
		\includegraphics[height=5.2cm, width = 8.5cm]{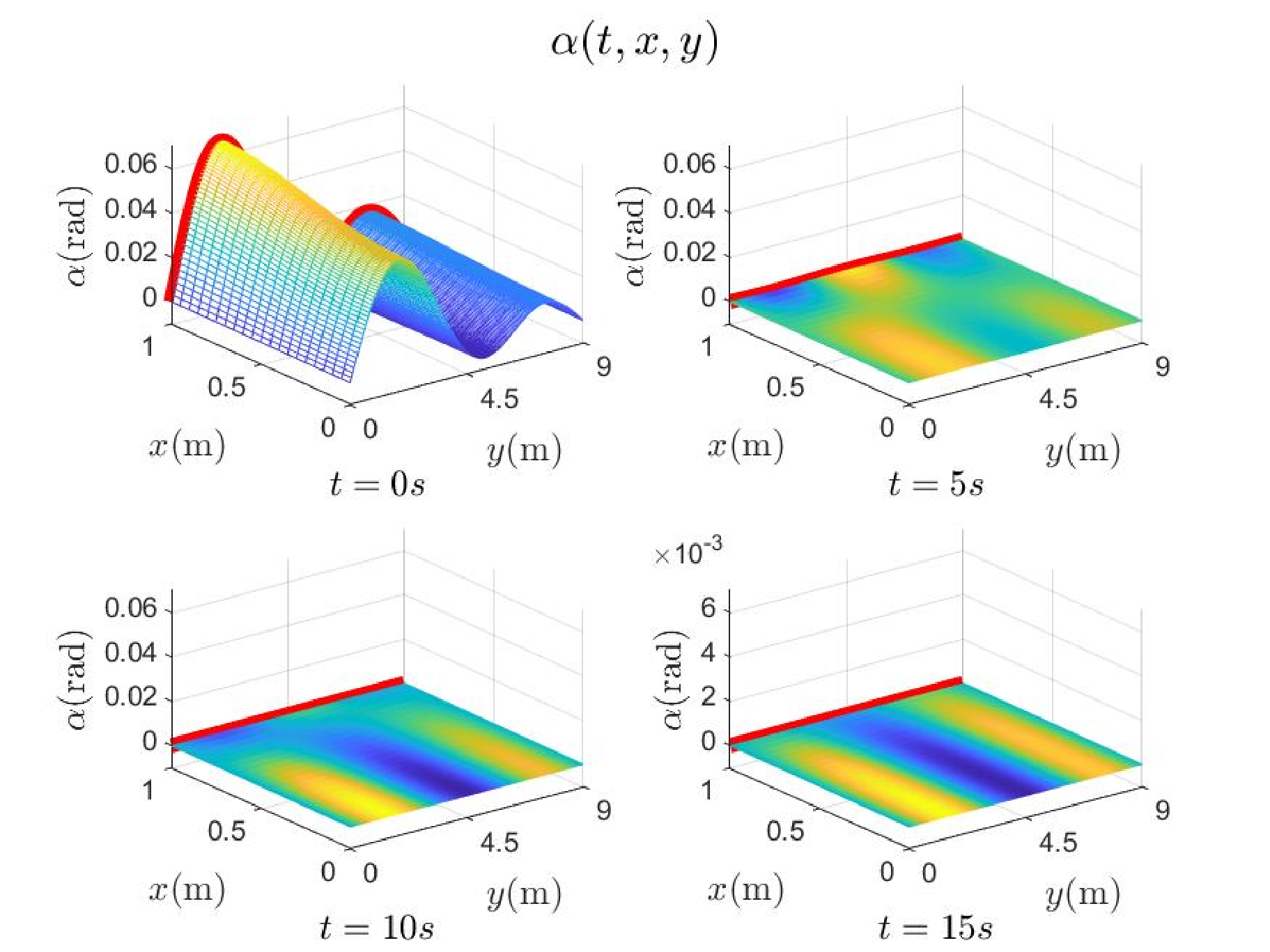}
		\caption{$\alpha(t,x,y)$ under the proposed observer-based output-feedback boundary controller.}
		\label{fig:cl_alpha} 
	\end{center}
\end{figure}
\begin{figure}[htpb]
	\begin{center}
		\includegraphics[height=6cm, width = 8.5cm]{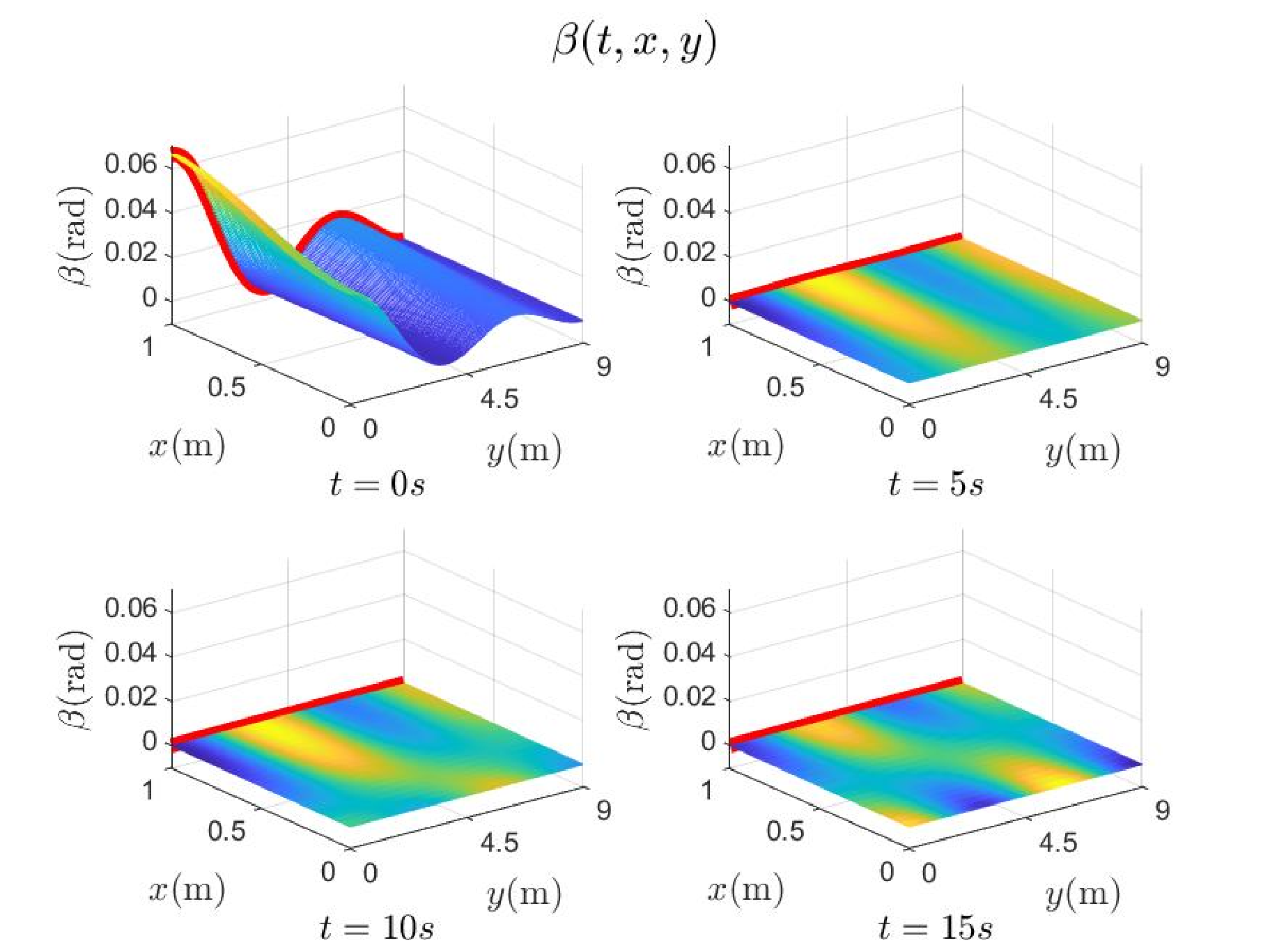}
		\caption{$\beta(t,x,y)$ under the proposed observer-based output-feedback boundary controller.}
		\label{fig:cl_beta} 
	\end{center}
\end{figure}
\begin{figure}[htpb]
	\begin{center}
		\includegraphics[height=6cm, width = 8.5cm]{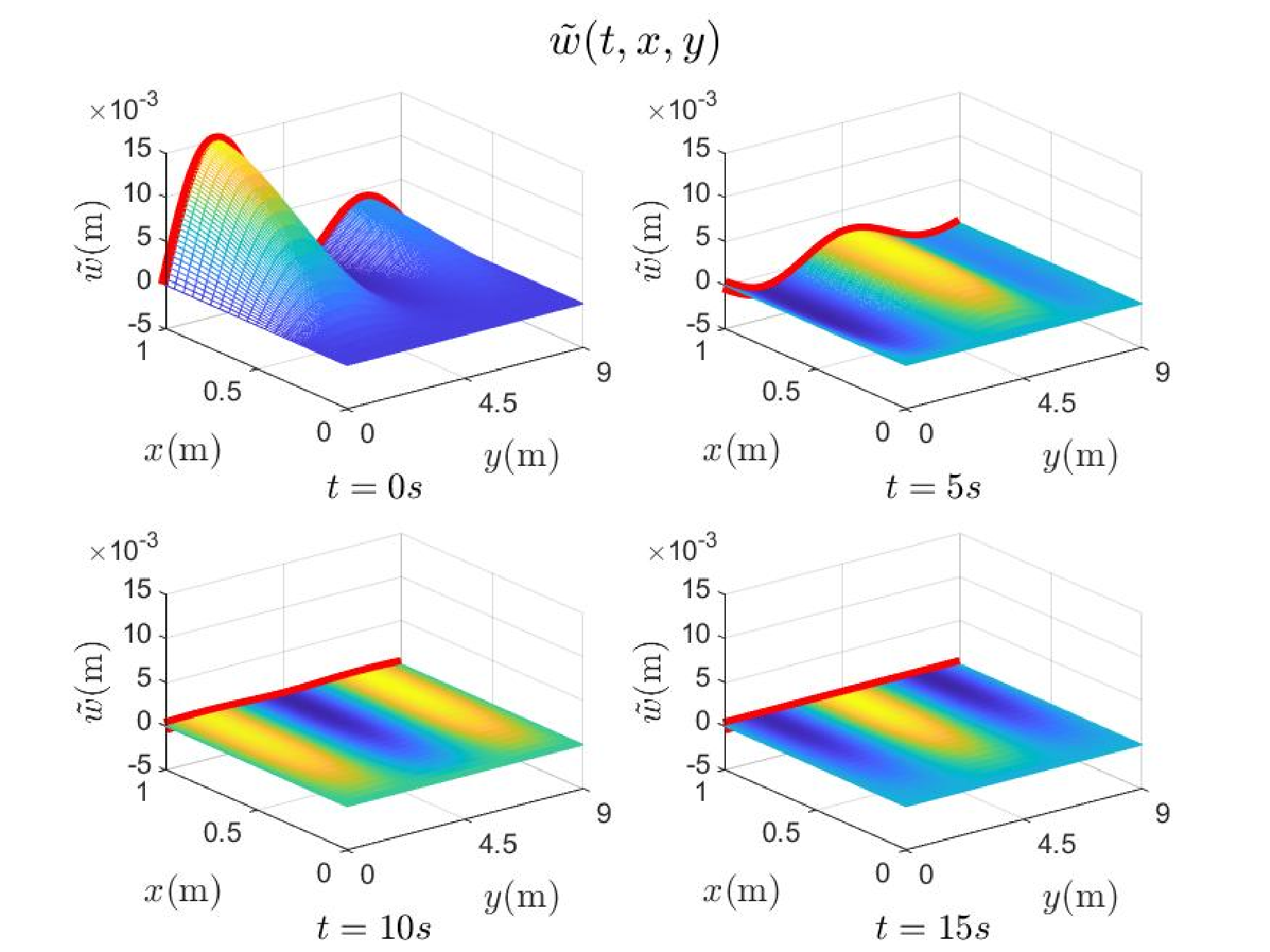}
		\caption{Observer errors $\tilde{w}(t,x,y)$.}
		\label{fig:ew} 
	\end{center}
\end{figure}
\begin{figure}[htpb]
	\begin{center}
		\includegraphics[height=6cm, width = 8.5cm]{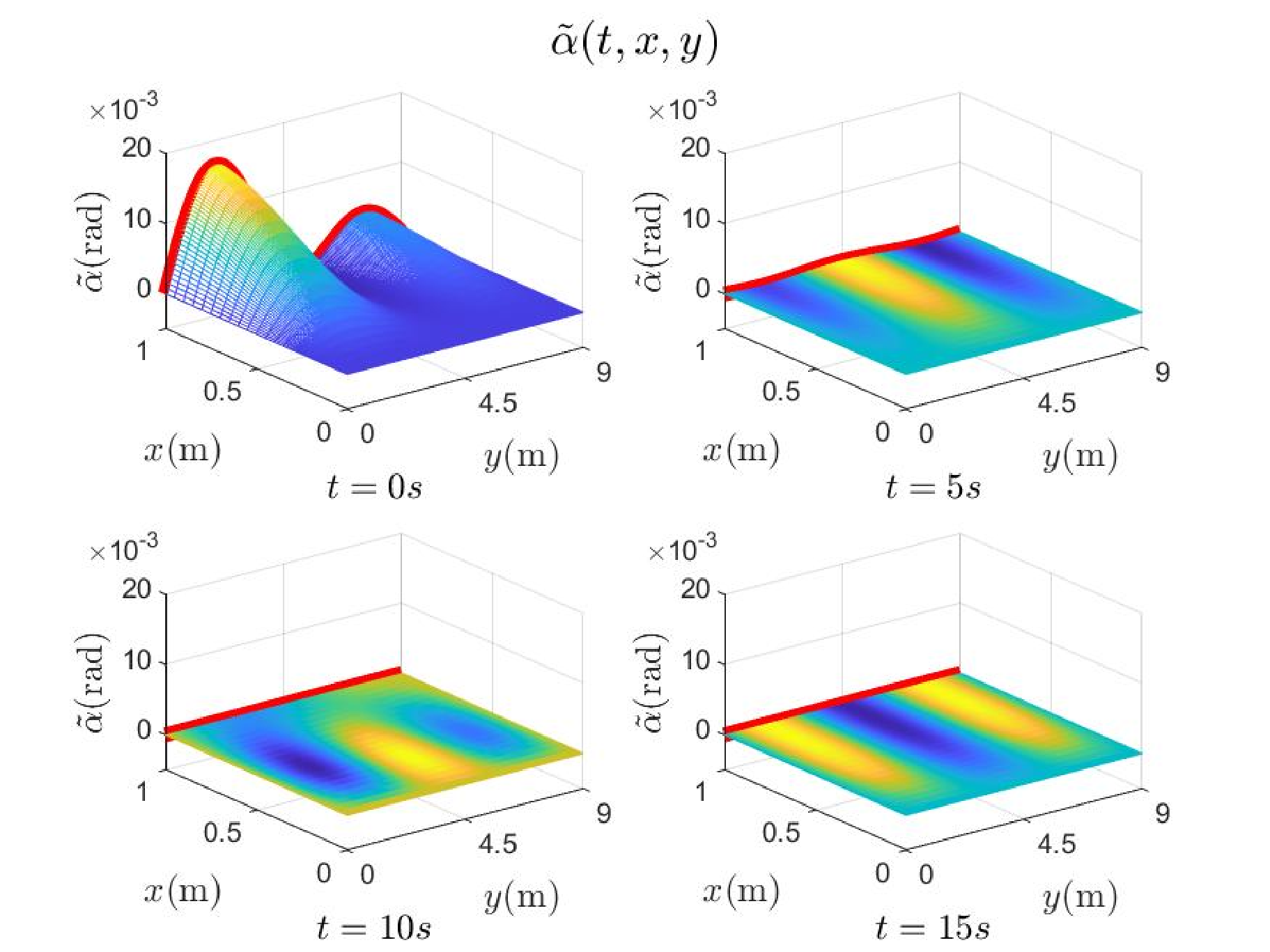}
		\caption{Observer errors $\tilde{\alpha}(t,x,y)$.}
		\label{fig:ealpha} 
	\end{center}
\end{figure}
\begin{figure}[htpb]
	\begin{center}
		\includegraphics[height=6cm, width = 8.5cm]{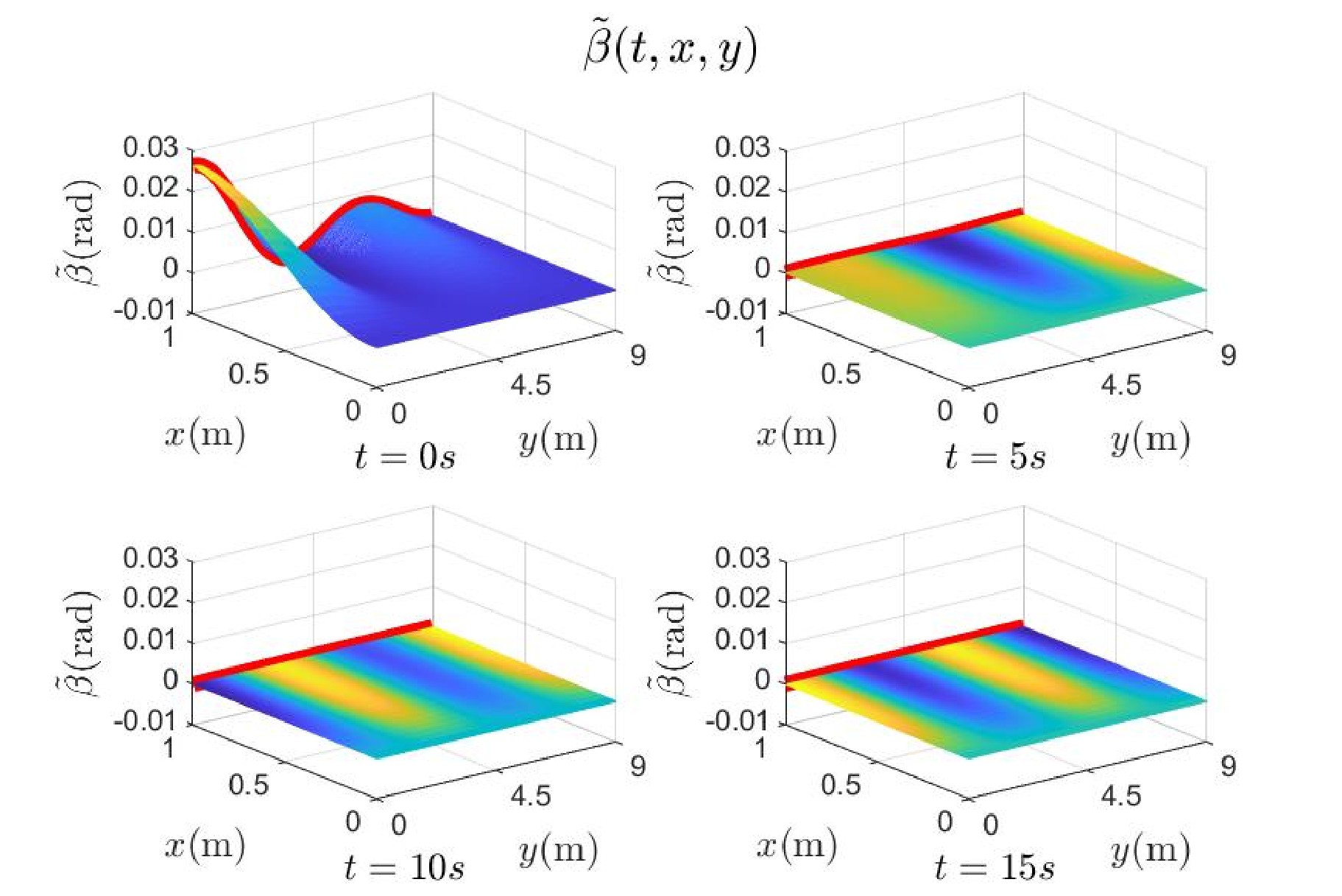}
		\caption{Observer errors $\tilde{\beta}(t,x,y)$.}
		\label{fig:ebeta} 
	\end{center}
\end{figure}
\begin{figure}[htpb]
	\begin{center}
		\includegraphics[height=6cm, width = 9cm]{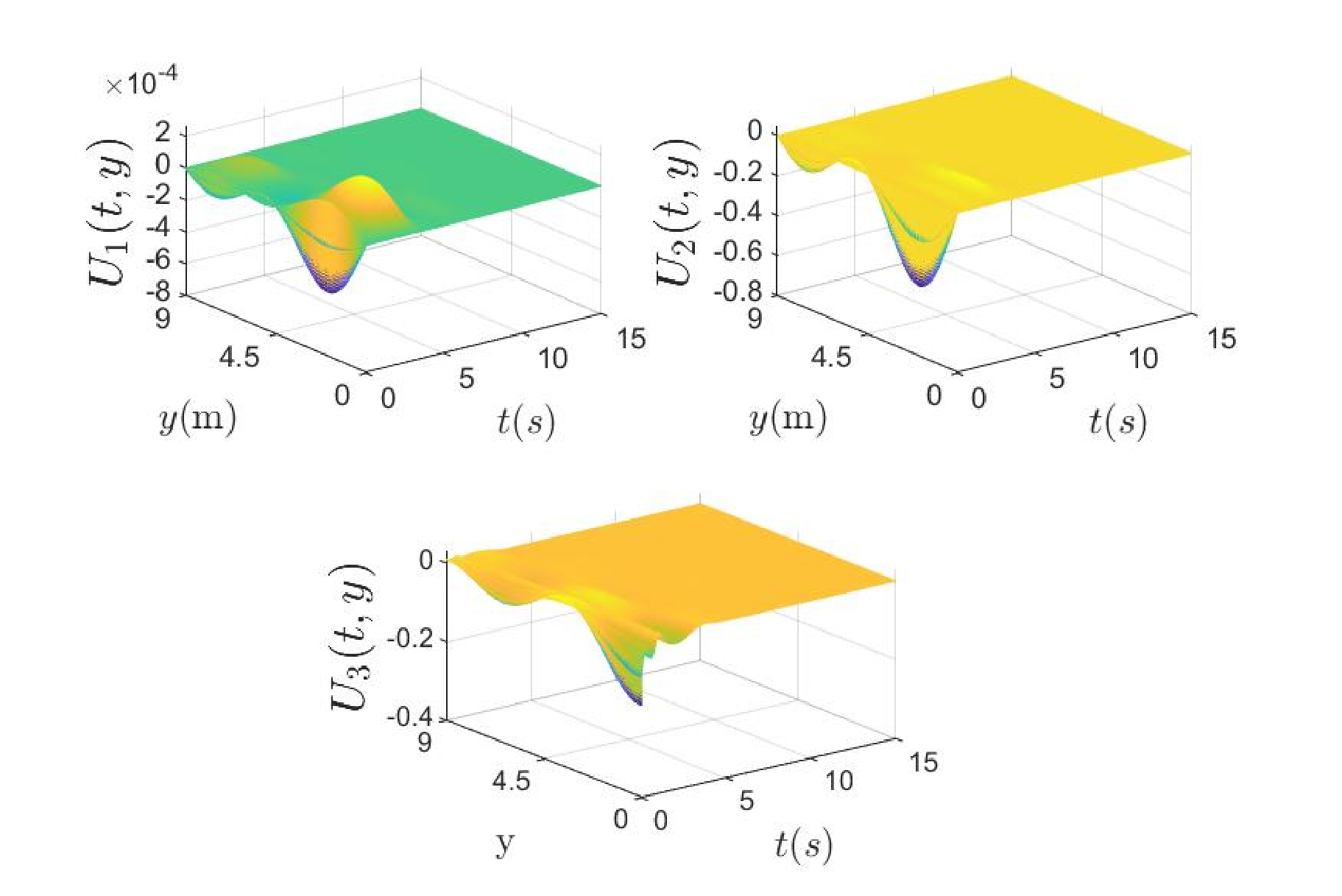}
		\caption{Output-feedback control inputs.}
		\label{fig:U} 
	\end{center}
\end{figure}

\appendices

\section{Expression of the coefficients in \eqref{eq_matrix}--\eqref{eq_matrix5}} \label{adx:mat_coeff}
\setcounter{equation}{0}
\renewcommand{\theequation}{A.\arabic{equation}}
The expressions of the coefficients in \eqref{eq_matrix}--\eqref{eq_matrix5} are shown as follows.
\begin{align}
	\Sigma &= 
	\begin{bmatrix}
		\begin{smallmatrix}
			\frac{1}{\sqrt{\epsilon}} & 0 &0\\
			0 & \frac{1}{\sqrt{\mu_1}} & 0\\
			0 & 0 & \frac{1}{\sqrt{\mu_2}}
		\end{smallmatrix}
	\end{bmatrix},
	A = 
	\begin{bmatrix}
		\begin{smallmatrix}
			0 & -\frac{1}{\sqrt{\epsilon}} & 0\\
			0 & 0 & 0\\
			0 & \frac{n\pi}{L\sqrt{\mu_2}} & 0
		\end{smallmatrix}
	\end{bmatrix},\\
	\nonumber
	C &= 
	\begin{bmatrix}  
		\begin{smallmatrix}
			-1 & 0 & 0\\
			0 & -1 & 0\\
			0 & 0 & -1
		\end{smallmatrix}
	\end{bmatrix},
	D = 
	\begin{bmatrix}
		\begin{smallmatrix}
			0 & 2 & 0\\
			0 & 0 & 0\\
			0 & -\frac{2n\pi}{L} & 0
		\end{smallmatrix}
	\end{bmatrix},\\
	\nonumber
	F_{11}(x) &= 
	\begin{bmatrix}
		\begin{smallmatrix}
			\frac{c_1(x)}{2} & c_2(x) & 0\\
			\frac{a(c_5(x)+c_6(x))}{4\epsilon\sqrt{\mu_1}} & 0 & -\frac{n\pi}{2L\sqrt{\mu_1}}\\
			0 & \frac{n\pi}{2L\sqrt{\mu_2}} & 0
		\end{smallmatrix}
	\end{bmatrix},\\
	\nonumber
	F_{12}(x) &=
	\begin{bmatrix}
		\begin{smallmatrix}
			-\frac{c_1(x)}{2} & 0 & 0\\
			\frac{a(c_5(x)-c_6(x))}{4\epsilon\sqrt{\mu_1}} & 0 & 0\\
			0 & 0 & 0
		\end{smallmatrix}
	\end{bmatrix},\\
	\nonumber
	F_{21}(x) &= 
	\begin{bmatrix}
		\begin{smallmatrix}
			\frac{c_3(x)}{2} & c_4(x) & 0\\
			-\frac{a(c_5(x)+c_6(x))}{4\epsilon\sqrt{\mu_1}} & 0 & \frac{n\pi}{2L\sqrt{\mu_1}}\\
			0 & -\frac{n\pi}{2L\sqrt{\mu_2}} & 0
		\end{smallmatrix}
	\end{bmatrix},\\
	\nonumber
	F_{22}(x) &=
	\begin{bmatrix}
		\begin{smallmatrix}
			\frac{c_3(x)}{2} & 0 & 0\\
			-\frac{a(c_5(x)-c_6(x))}{4\epsilon\sqrt{\mu_1}} & 0 & 0\\
			0 & 0 & 0
		\end{smallmatrix}
	\end{bmatrix},\\
	\nonumber
	F_{13}(x) &= 
	\begin{bmatrix}
		\begin{smallmatrix}
			\frac{2n^2\pi^2}{L^2}c_2(x) & 0 & -\frac{2n\pi}{L} c_2(x)\\
			0 & -\frac{n^2\pi^2\epsilon+aL^2}{\epsilon L^2\sqrt{\mu_1}} & 0\\
			\frac{an\pi}{\epsilon L\sqrt{\mu_2}} & 0 & -\frac{n^2\pi^2\epsilon+aL^2}{\epsilon L^2\sqrt{\mu_2}}
		\end{smallmatrix}
	\end{bmatrix},\\
	\nonumber
	F_{14}(x,y) &= 
	\begin{bmatrix}
		\begin{smallmatrix}
			-\frac{n^2\pi^2}{2\sqrt{\epsilon}L^2}f_{11}(x,y) & 0 & -\frac{n\pi}{L} c_2(x)\\
			0 & -\frac{n^2\pi^2\epsilon+aL^2}{2\epsilon L^2 \sqrt{\mu_1}} & 0\\
			\frac{an\pi}{2\epsilon L\sqrt{\mu_2}}c_5(y) & 0 & -\frac{n^2\pi^2\epsilon+aL^2}{2\epsilon L^2\sqrt{\mu_2}}
		\end{smallmatrix}
	\end{bmatrix},\\
	\nonumber
	F_{15}(x,y) &= 
	\begin{bmatrix}
		\begin{smallmatrix}
			-\frac{n^2\pi^2}{2\sqrt{\epsilon}L^2}f_{12}(x,y) & 0 & -\frac{n\pi}{L} c_2(x)\\
			0 & -\frac{n^2\pi^2\epsilon+aL^2}{2\epsilon L^2\sqrt{\mu_1}} & 0\\
			\frac{an\pi}{2\epsilon L\sqrt{\mu_2}}c_6(y) & 0 & -\frac{n^2\pi^2\epsilon+aL^2}{2\epsilon L^2\sqrt{\mu_2}}
		\end{smallmatrix}
	\end{bmatrix},\\
	\nonumber
	F_{23}(x) &= 
	\begin{bmatrix}
		\begin{smallmatrix}
			\frac{2n^2\pi^2}{L^2}c_4(x) & 0 & -\frac{2n\pi}{L} c_4(x)\\
			0 & \frac{n^2\pi^2\epsilon+aL^2}{\epsilon L^2\sqrt{\mu_1}} & 0\\
			-\frac{an\pi}{\epsilon L\sqrt{\mu_2}} & 0 & \frac{n^2\pi^2\epsilon+aL^2}{\epsilon L^2\sqrt{\mu_2}}
		\end{smallmatrix}
	\end{bmatrix},\\
	\nonumber
	F_{24}(x,y) &= 
	\begin{bmatrix}
		\begin{smallmatrix}
			\frac{n^2\pi^2}{2\sqrt{\epsilon}L^2}f_{21}(x,y) & 0 & -\frac{n\pi}{L} c_4(x)\\
			0 & \frac{n^2\pi^2\epsilon+aL^2}{2\epsilon L^2\sqrt{\mu_1}} & 0\\
			-\frac{an\pi}{2\epsilon L\sqrt{\mu_2}}c_5(y) & 0 & \frac{n^2\pi^2\epsilon+aL^2}{2\epsilon L^2\sqrt{\mu_2}}
		\end{smallmatrix}
	\end{bmatrix},\\
	\nonumber
	F_{25}(x,y) &= 
	\begin{bmatrix}
		\begin{smallmatrix}
			\frac{n^2\pi^2}{2\sqrt{\epsilon}L^2}f_{22}(x,y) & 0 & -\frac{n\pi}{L} c_4(x)\\
			0 & \frac{n^2\pi^2\epsilon+aL^2}{2\epsilon L^2\sqrt{\mu_1}} & 0\\
			-\frac{an\pi}{2\epsilon L\sqrt{\mu_2}}c_6(y) & 0 & \frac{n^2\pi^2\epsilon+aL^2}{2\epsilon L^2\sqrt{\mu_2}}
		\end{smallmatrix}
	\end{bmatrix},
\end{align}
where $c_1(x) = \bar{c}_2 \exp(\sqrt{\epsilon}(\bar{c}_1 - \bar{c}_2)x)$, $c_2(x) = -\frac{1}{2\sqrt{\epsilon}}\\ \times\exp(\sqrt{\epsilon}\bar{c}_1 x)$, $c_3(x) = -\bar{c}_1 \exp(-\sqrt{\epsilon}(\bar{c}_1 - \bar{c}_2)x)$, $c_4(x) = \frac{1}{2\sqrt{\epsilon}}\exp(\sqrt{\epsilon}\bar{c}_2 x)$, $c_5(x) = \exp(-\sqrt{\epsilon}\bar{c}_1 x)$, $c_6(x) = \exp(-\sqrt{\epsilon}\bar{c}_2 x)$, $f_{11}(x,y) = \exp(\sqrt{\epsilon}\bar{c}_1(x-y))$, $f_{12}(x,y) = \exp(\sqrt{\epsilon}(\bar{c}_1 x - \bar{c}_2 y))$, $f_{21}(x,y) = \exp(\sqrt{\epsilon}(\bar{c}_2 x - \bar{c}_1 y))$ and $f_{22}(x,y) = \exp(\sqrt{\epsilon}\bar{c}_2(x-y))$.

\section{Expression of $\mathcal{F}_{ij}(\xi)$ and  $\mathcal{D}_{ij}$} \label{adx:input_coeff}
\setcounter{equation}{0}
\renewcommand{\theequation}{B.\arabic{equation}}

For $i = 1,2,3$ and $j = 1,2,3,4,5,6$, the functions $\mathcal{F}_{ij}(\xi)$ and $\mathcal{D}_{ij}$ in \eqref{eq_U1n}--\eqref{eq_U3n} and \eqref{eq_U1nob}--\eqref{eq_U3nob} are shown as follows
\begin{align}
	\nonumber
	\mathcal{F}_{i1}(\xi) &= \sqrt{\epsilon}\bar{c}_1 \exp(-\sqrt{\epsilon}\bar{c}_1 \xi)k_{i1,n}(1,\xi)\\
	\nonumber
	&\quad + \sqrt{\epsilon}\bar{c}_2 \exp(-\sqrt{\epsilon}\bar{c}_2 \xi)l_{i1,n}(1,\xi)\\
	\nonumber
	&\quad - \exp(-\sqrt{\epsilon}\bar{c}_1 \xi)k_{i1,n,\xi}(1,\xi)\\
	&\quad - \exp(-\sqrt{\epsilon}\bar{c}_2 \xi)l_{i1,n,\xi}(1,\xi),\\
	\nonumber
	\mathcal{F}_{i2}(\xi) &= \sqrt{\epsilon}\left(\exp(-\sqrt{\epsilon}\bar{c}_1\xi)k_{i1,n}(1,\xi)\right.\\
	&\left.\quad - \exp(-\sqrt{\epsilon}\bar{c}_2\xi)l_{i1,n}(1,\xi)\right),\\
	\mathcal{F}_{i3}(\xi) &= k_{i2,n,\xi}(1,\xi) + l_{i2,n,\xi}(1,\xi),\\
	\mathcal{F}_{i4}(\xi) &= \sqrt{\mu}\left(k_{i2,n}(1,\xi) - l_{i2,n}(1,\xi)\right),\\
	\mathcal{F}_{i5}(\xi) &= k_{i3,n,\xi}(1,\xi) + l_{i3,n,\xi}(1,\xi),\\
	\mathcal{F}_{i6}(\xi) &= \sqrt{\mu}\left(k_{i3,n}(1,\xi) - l_{i3,n}(1,\xi)\right),\\
	\nonumber
	\mathcal{D}_{i1} &= \exp(-\sqrt{\epsilon}\bar{c}_1)k_{i1,n}(1,1)\\
	&\quad + \exp(-\sqrt{\epsilon}\bar{c}_2)l_{i1,n}(1,1),\\
	\mathcal{D}_{i2} &= k_{i1,n}(1,0) + l_{i1,n}(1,0) - \phi_{i1,n}(1),\\
	\mathcal{D}_{i4} &= k_{i2,n}(1,0) + l_{i2,n}(1,0) - \phi_{i2,n}(1),\\
	\mathcal{D}_{i5} &= k_{i3,n}(1,1) + l_{i3,n}(1,1),\\
	\mathcal{D}_{i6} &= k_{i3,n}(1,0) + l_{i3,n}(1,0) - \phi_{i3,n}(1)
\end{align}
with
\begin{align}
	\mathcal{D}_{13} &= k_{12,n}(1,1) + l_{12,n}(1,1),\\
	\mathcal{D}_{23} &= k_{22,n}(1,1) + l_{22,n}(1,1),\\
	\mathcal{D}_{33} &= k_{32,n}(1,1) + l_{32,n}(1,1) + \tfrac{n\pi}{L}.
\end{align}

\section{Proof of Theorem \ref{thm:1}: Well-posedness of the Kernel Equations of $K$ and $L$} \label{adx:wellposed}
\setcounter{equation}{0}
\renewcommand{\theequation}{C.\arabic{equation}}

To prove the well-posedness of the kernel equations, we transform the kernel equations into integral equations and use the method of successive approximations. For $1 \leq i,j \leq 3$, denote
\begin{align}
	&\scalebox{0.95}{$\Lambda^{+} = \Lambda^{-} = \Sigma, \quad Q_0 = C, \quad \Omega(x) = \{\omega_{ij}(x)\}$},\\
	&\scalebox{0.95}{$A = \{a_{ij}\}, \quad D = \{d_{ij}\}, \quad C = \{c_{ij}\}$},\\
	&\scalebox{0.95}{$\Sigma^{-+}(x) = F_{11}(x) - F_{12}(x) = \{\sigma_{ij}^{-+}(x)\}$},\\
	&\scalebox{0.95}{$\Sigma^{--}(x) = F_{11}(x) + F_{12}(x) = \{\sigma_{ij}^{--}(x)\}$},\\
	&\scalebox{0.95}{$\Sigma^{++}(x) = F_{21}(x) - F_{22}(x) = \{\sigma_{ij}^{++}(x)\}$},\\
	&\scalebox{0.95}{$\Sigma^{+-}(x) = F_{21}(x) + F_{22}(x) = \{\sigma_{ij}^{+-}(x)\}$},\\
	&\scalebox{0.95}{$F_{13}(x) = \{\varepsilon_{ij}^{++}(x)\}, \quad F_{23}(x) = \{\varepsilon_{ij}^{+-}(x)\}$},\\
	&\scalebox{0.95}{$F_{14}(x,y) = \{f_{ij}^{++}(x,y)\}, F_{15}(x,y) = \{f_{ij}^{+-}(x,y)\}$},\\
	&\scalebox{0.95}{$F_{24}(x,y) = \{f_{ij}^{-+}(x,y)\}, F_{25}(x,y) = \{f_{ij}^{--}(x,y)\}$}.
\end{align}	
Developing Eqs. \eqref{eq_kernel}--\eqref{eq_ker}, and applying the method described in \cite{coupled}, we embed the ODE into the domain $\Gamma = \{0 \leq \xi \leq x \leq 1\}$ by denoting
\begin{align}
	I_{\{y=0\}}(x,y) = \left\{
		\begin{array}{l}
			1 \quad if \text{ }y=0\\
			0 \quad otherwise.
		\end{array}
		\right.
\end{align}
Defining $\tilde{\phi}_{ij}$ such that $\forall(x,y) \in \Gamma, \tilde{\phi}_{ij}(x,y) = I_{\{y=0\}}(x,y)\phi_{ij}(x)$, for $1 \leq i \leq 3, 1 \leq j \leq 3$, we get the following set of kernel PDEs:	
\begin{align}
	\label{eq_L} \nonumber
	&\scalebox{0.9}{$\lambda_i \partial_x L_{ij}(x,\xi) - \lambda_j \partial_\xi L_{ij}(x,\xi) = \scalebox{0.9}{$\sum_{k=1}^{3}$} \sigma_{kj}^{++}(\xi)L_{ik}(x,\xi)$}\\
	\nonumber
	&\scalebox{0.9}{$ + \scalebox{0.9}{$\sum_{p=1}^{3}$} \sigma_{pj}^{-+}(\xi)K_{ip}(x,\xi) - \scalebox{0.9}{$\sum_{i<p}^{}$} L_{pj}(x,\xi)\omega_{ip}(x)$}\\
	\nonumber
	&\scalebox{0.9}{$- f_{ij}^{+-}(x,\xi) + \smallint\nolimits_{\xi}^{x} \scalebox{0.9}{$\sum_{p=1}^{3}$} f_{pj}^{+-}(s,\xi)K_{ip}(x,s) ds$}\\
	&\scalebox{0.9}{$ + \smallint\nolimits_{\xi}^{x} \scalebox{0.9}{$\sum_{k=1}^{3}$} f_{kj}^{--}(s,\xi)L_{ik}(x,s) ds$},\\
	\label{eq_K} \nonumber
	&\scalebox{0.9}{$\lambda_i \partial_x K_{ij}(x,\xi) + \lambda_j \partial_\xi K_{ij}(x,\xi) = \scalebox{0.9}{$\sum_{k=1}^{3}$} \sigma_{kj}^{--}(\xi)K_{ik}(x,\xi)$}\\
	\nonumber
	&\scalebox{0.9}{$ + \scalebox{0.9}{$\sum_{p=1}^{3}$} \sigma_{pj}^{+-}(\xi)L_{ip}(x,\xi) - \scalebox{0.9}{$\sum_{i<p}^{}$} K_{pj}(x,\xi)\omega_{ip}(x)$}\\
	\nonumber
	&\scalebox{0.9}{$- f_{ij}^{++}(x,\xi) + \smallint\nolimits_{\xi}^{x} \scalebox{0.9}{$\sum_{p=1}^{3}$} f_{pj}^{++}(s,\xi)K_{ip}(x,s) ds$} \\
	&\scalebox{0.9}{$ + \smallint\nolimits_{\xi}^{x} \scalebox{0.9}{$\sum_{k=1}^{3}$} f_{kj}^{-+}(s,\xi)L_{ik}(x,s) ds$},\\
	\label{eq_phi} \nonumber
	&\scalebox{0.9}{$\lambda_i \partial_x \tilde{\phi}_{ij}(x,\xi) = I_{\{\xi=0\}}(x,\xi)\Big[\scalebox{0.9}{$\sum_{k=1}^{3}$} a_{kj}\tilde{\phi}_{ik}(x,\xi)$}\\
	\nonumber
	&\scalebox{0.9}{$+ \scalebox{0.9}{$\sum_{p=1}^{3}$} \lambda_p d_{pj}L_{ip}(x,0) - \scalebox{0.9}{$\sum_{i<p}^{}$} \omega_{ip}(x)\tilde{\phi}_{pj}(x,\xi)$}\\
	\nonumber
	&\scalebox{0.9}{$- \varepsilon_{ij}^{++}(x) + \smallint\nolimits_{0}^{x} \scalebox{0.9}{$\sum_{p=1}^{3}$} \varepsilon_{pj}^{++}(x)K_{ip}(x,\xi) d\xi$}\\
	&\scalebox{0.9}{$+ \smallint\nolimits_{0}^{x} \scalebox{0.9}{$\sum_{k=1}^{3}$} \varepsilon_{kj}^{+-}(x)L_{ik}(x,\xi) d\xi$}\Big],
\end{align}
with the following set of boundary conditions
\begin{align}
	\label{eq_lbc} & \scalebox{0.9}{$L_{ij}(x,x) = -\tfrac{\sigma_{ij}^{-+}(x)}{\lambda_i + \lambda_j} = l_{ij}(x)$},\\
	\label{eq_kbcx} & \scalebox{0.9}{$K_{ij}(x,x) = -\tfrac{\sigma_{ij}^{--}(x)}{\lambda_i - \lambda_j} = k_{ij}(x)\quad(j<i)$},\\
	\label{eq_kbc0} & \scalebox{0.9}{$\lambda_j k_{ij}(x,0) = \scalebox{0.9}{$\sum_{k=1}^{3}$} \lambda_k L_{ik}(x,0) c_{kj} + \lambda_j \tilde{\phi}_{ij}(x,0)$},\\
	\label{eq_phibc} & \scalebox{0.9}{$\tilde{\phi}_{ij}(x,x) = I_{\{\xi=0\}}(x,x)\phi_{ij}(0)$}.
\end{align}
Besides, \eqref{eq_bcomega} imposes
\begin{align}\label{eq_omegaij}
	\scalebox{0.9}{$\forall i \leq j, \text{ } \omega_{ij}(x) = (\lambda_i - \lambda_j)k_{ij}(x,x) + \sigma_{ij}^{--}(x)$}.
\end{align}
By induction, let us consider the following property $P(s)$ defined for all $1 \leq s \leq 3$:
For $\forall 1 \leq j \leq 3$ and $\forall 3 + 1 - s \leq i \leq 3$, the problem \eqref{eq_L}--\eqref{eq_omegaij} has a unique solution $k_{ij}(\cdot,\cdot)$, $l_{ij}(\cdot,\cdot)$, $\tilde{\phi}_{ij}(\cdot,\cdot) \in L^{\infty}(\Gamma)$.
Let us assume that the property $P(s-1)(1 < s \leq 3-1)$ is true. We consequently have that $\forall 3+2-s \leq p \leq 3$, $\forall 1 \leq j \leq 3$, $k_{pj}(\cdot,\cdot),l_{pj}(\cdot,\cdot),$ and $\tilde{\phi}_{pj}(\cdot,\cdot)$ are bounded. The proof follows along the line in \cite{coupled} and \cite{Heterodirectional} and is skipped due to space limitations. In the following we take $i = 3 + 1 - s$, we now show that \eqref{eq_L}--\eqref{eq_omegaij} is well-posed and that $k_{ij}(\cdot,\cdot), l_{ij}(\cdot,\cdot), \tilde{\phi}_{ij}(\cdot,\cdot) \in L^{\infty}(\Gamma)$.

\subsection{Method of Characteristics}
\subsubsection{Characteristics of the $L$ kernels}

For each $1 \leq i,j \leq 3$, and $(x,\xi) \in \Gamma$, we define the following characteristic lines $(x_{ij}(x,\xi;\cdot), \xi_{ij}(x,\xi;\cdot))$ corresponding to \eqref{eq_L}:
\begin{align}
	\label{eq_lijx} & \scalebox{0.9}{$\left\{
		\begin{array}{l}
			\frac{d x_{ij}}{ds}(x,\xi;s) = -\lambda_i, \quad s \in [0, s_{ij}^{F}(x,\xi)]\\
			x_{ij}(x,\xi;0) = x, \text{ } x_{ij}(x,\xi;s_{ij}^{F}(x,\xi)) = x_{ij}^{F}(x,\xi)
		\end{array}
		\right.$}\\
	\label{eq_lijxi} & \scalebox{0.9}{$\left\{
		\begin{array}{l}
			\frac{d \xi_{ij}}{ds}(x,\xi;s) = \lambda_j, \quad s \in [0, s_{ij}^{F}(x,\xi)]\\
			\xi_{ij}(x,\xi;0) = \xi, \text{ } \xi_{ij}(x,\xi;s_{ij}^{F}(x,\xi)) = x_{ij}^{F}(x,\xi)
		\end{array}
		\right.$}
\end{align}	
These lines originate at the point $(x,\xi)$ and terminate at the hypotenuse at the point $(x_{ij}^{F}(x,\xi),x_{ij}^{F}(x,\xi))$. Here, the expressions of $x_{ij}(x,\xi,s)$, $\xi_{ij}(x,\xi,s)$, $x_{ij}^{F}(x,\xi)$ and $s_{ij}^{F}(x,\xi)$, which are straightforward to obtain, are omitted for simplicity.  Integrating \eqref{eq_L} along the characteristic lines and plugging in the boundary condition \eqref{eq_lbc} yields
\begin{align}\label{eq_lij}
	\nonumber
	&\scalebox{0.9}{$L_{ij}(x,\xi) = l_{ij}(x_{ij}^{F}) + \smallint\nolimits_{0}^{s_{ij}^{F}(x,\xi)} \left[\scalebox{0.9}{$\sum_{k=1}^{3}$} \sigma_{kj}^{++}(\xi_{ij}(x,\xi;s))\right.$}\\
	\nonumber
	&\scalebox{0.9}{$\left.\times L_{ik}(x_{ij}(x,\xi;s),\xi_{ij}(x,\xi;s))\right.$}\\
	\nonumber
	&\scalebox{0.9}{$\left. + \scalebox{0.9}{$\sum_{p=1}^{3}$} \sigma_{pj}^{-+}(\xi_{ij}(x,\xi;s))K_{ip}(x_{ij}(x,\xi;s),\xi_{ij}(x,\xi;s))\right.$}\\
	\nonumber
	&\scalebox{0.9}{$\left. - \scalebox{0.9}{$\sum_{i<p}^{}$} L_{pj}(x_{ij}(x,\xi;s),\xi_{ij}(x,\xi;s)) ((\lambda_i - \lambda_p) \right.$}\\
	\nonumber
	&\scalebox{0.9}{$\left. \times K_{ip}(x_{ij}(x,\xi;s),\xi_{ij}(x,\xi;s)) + \sigma_{ip}^{--}(x_{ij}(x,\xi;s)))\right.$}\\
	\nonumber
	&\scalebox{0.9}{$\left. - f_{ij}^{+-}(x_{ij}(x,\xi;s),\xi_{ij}(x,\xi;s))\right.$}\\
	\nonumber
	&\scalebox{0.9}{$\left. + \smallint\nolimits_{\xi_{ij}}^{x_{ij}} \scalebox{0.9}{$\sum_{p=1}^{3}$} (f_{pj}^{+-}(\tau,\xi_{ij}(x,\xi;s))K_{ip}(x_{ij}(x,\xi;s),\tau) \right.$}\\
	&\scalebox{0.9}{$\left. + f_{pj}^{--}(\tau,\xi_{ij}(x,\xi;s))L_{ip}(x_{ij}(x,\xi;s),\tau)) d\tau \right] ds$}.
\end{align} 
We can notice that the fourth line of \eqref{eq_lij} uses the expression of $L_{pj}$ for $i <p$. This term is known and bounded (induction assumption).

\subsubsection{Characteristics of the $\tilde{\phi}$ kernels}

For each $1 \leq i,j \leq 3$, and $(x,\xi) \in \Gamma$, we define the following characteristic lines $(\kappa_{ij}(x,\xi;\cdot), \iota_{ij}(x,\xi;\cdot))$ corresponding to \eqref{eq_phi}:
\begin{align}
	\label{eq_phiijx} &\scalebox{0.9}{$\left\{
		\begin{array}{l}
			\frac{d \kappa_{ij}}{d\eta}(x,\xi;\eta) = -\lambda_i, \quad \eta \in [0, \eta_{ij}^{F}(x,\xi)]\\
			\kappa_{ij}(x,\xi;0) = x, \quad \kappa_{ij}(x,\xi;\eta_{ij}^{F}(x,\xi)) = \xi
		\end{array}
		\right.$}\\
	\label{eq_phiijxi} &\scalebox{0.9}{$\left\{
	\begin{array}{l}
		\frac{d \iota_{ij}}{d\eta}(x,\xi;\eta) = 0, \quad \eta \in [0, \eta_{ij}^{F}(x,\xi)]\\
		\iota_{ij}(x,\xi;0) = \xi, \quad \iota_{ij}(x,\xi;\eta_{ij}^{F}(x,\xi)) = \xi
	\end{array}
	\right.$}
\end{align}
These lines originate at the point $(x,\xi)$ and terminate at the hypotenuse at the point $(\xi,\xi)$. Integrating \eqref{eq_phi} along the characteristic lines and plugging in the boundary condition \eqref{eq_phibc} yields
\begin{align}\label{eq_phiij}
	\nonumber
	&\scalebox{0.9}{$\tilde{\phi}_{ij}(x,\xi) = I_{\{\xi=0\}}(\xi,\xi)\phi_{ij}(0)$}\\
	\nonumber
	&\scalebox{0.9}{$ + \smallint\nolimits_{0}^{\eta_{ij}^{F}(x,\xi)}I_{\{\xi=0\}}(\kappa_{ij}(x,\xi;\eta),\iota_{ij}(x,\xi;\eta))$}\\
	\nonumber
	&\scalebox{0.9}{$ \left[\scalebox{0.9}{$\sum_{k=1}^{3}$} a_{kj}\tilde{\phi}_{ik}(\kappa_{ij}(x,\xi;\eta),\iota_{ij}(x,\xi;\eta))\right.$}\\
	\nonumber
	&\scalebox{0.9}{$ \left. + \scalebox{0.9}{$\sum_{p=1}^{3}$} \lambda_p d_{pj} L_{ip}(\kappa_{ij}(x,\xi;\eta),0)\right.$}\\
	\nonumber
	&\scalebox{0.9}{$ \left. - \scalebox{0.9}{$\sum_{i<p}^{}$}\tilde{\phi}_{pj}(\kappa_{ij}(x,\xi;\eta),\iota_{ij}(x,\xi;\eta))\right.$}\\
	\nonumber
	&\scalebox{0.9}{$ \left. \times ((\lambda_i - \lambda_p)K_{ip}(\kappa_{ij}(x,\xi;\eta),\iota_{ij}(x,\xi;\eta))\right.$}\\
	\nonumber
	&\scalebox{0.9}{$ \left. + \sigma_{ip}^{--}(\kappa_{ij}(x,\xi;\eta))) - \varepsilon_{ij}^{++}(\kappa_{ij}(x,\xi;\eta))\right.$}\\
	\nonumber
	&\scalebox{0.9}{$ \left. + \smallint\nolimits_{0}^{\kappa_{ij}} \scalebox{0.9}{$\sum_{p=1}^{3}$} (\varepsilon_{pj}^{++}(\kappa_{ij}(x,\xi;\eta))K_{ip}(\kappa_{ij}(x,\xi;\eta),\tau)\right.$}\\
	&\scalebox{0.9}{$ \left. +  \varepsilon_{pj}^{+-}(\kappa_{ij}(x,\xi;\eta))L_{ip}(\kappa_{ij}(x,\xi;\eta),\tau)) d\tau \right] d\eta$}.
\end{align}
The expressions of $\kappa_{ij}(x,\xi,s), \iota_{ij}(x,\xi,s)$ and $\eta_{ij}^{F}(x,\xi)$, which are straightforward to obtain, are omitted here for simplicity. We can also notice that the fifth line of \eqref{eq_phiij} uses the expression of $\tilde{\phi}_{pj}$ for $i <p$. This term is known and bounded (induction assumption).

\subsubsection{Characteristics of the $K$ kernels}

For each $1 \leq i,j \leq 3$, and $(x,\xi) \in \Gamma$, we define the following characteristic lines $(\chi_{ij}(x,\xi;\cdot), \zeta_{ij}(x,\xi;\cdot))$ corresponding to \eqref{eq_K}:
\begin{align}
	\label{eq_kijx} &\scalebox{0.9}{$\left\{
	\begin{array}{l}
		\frac{d \chi_{ij}}{d\nu}(x,\xi;\nu) = -\lambda_i, \quad \nu \in [0, \nu_{ij}^{F}(x,\xi)]\\
		\chi_{ij}(x,\xi;0) = x,  \chi_{ij}(x,\xi;\nu_{ij}^{F}(x,\xi)) = \chi_{ij}^{F}(x,\xi)
	\end{array}
	\right.$}\\
	\label{eq_kijxi} &\scalebox{0.9}{$\left\{
	\begin{array}{l}
		\frac{d \zeta_{ij}}{d\nu}(x,\xi;\nu) = -\lambda_j, \quad \nu \in [0, \nu_{ij}^{F}(x,\xi)]\\
		\zeta_{ij}(x,\xi;0) = \xi,  \zeta_{ij}(x,\xi;\nu_{ij}^{F}(x,\xi)) = \zeta_{ij}^{F}(x,\xi)
	\end{array}
	\right.$}
\end{align}
These lines all originate from $(x,\xi)$ and terminate either at  the point $(\chi_{ij}^{F}(x,\xi),\chi_{ij}^{F}(x,\xi))$ or at the point $(\chi_{ij}^{F}(x,\xi),0)$. They are three distinct cases $i<j$, $i=j$ and $i>j$. The detailed expressions of $\chi_{ij}(x,\xi,\nu), \zeta_{ij}(x,\xi,\nu), \chi_{ij}^{F}(x,\xi), \zeta_{ij}^{F}(x,\xi)$ and $\nu_{ij}^{F}(x,\xi)$ are, again, omitted here because of space constrains. Integrating \eqref{eq_K} along these characteristic lines, plugging in the boundary conditions \eqref{eq_kbcx}, \eqref{eq_kbc0} and \eqref{eq_lij}, \eqref{eq_phiij} evaluated at $(\chi_{ij}^{F}(x,\xi),0)$ yields
\begin{align}\label{eq_kij}
	\nonumber
	&\scalebox{0.9}{$K_{ij}(x,\xi) = -\delta_{ij}\tfrac{\sigma_{ij}^{--}(\chi_{ij}^{F})}{\lambda_i-\lambda_j} + (1-\delta_{ij})\tfrac{1}{\lambda_j}\scalebox{0.9}{$\sum_{k=1}^{3}$} \lambda_k c_{kj} l_{ik}(x_{ij}^{F})$}\\
	\nonumber
	&\scalebox{0.9}{$ + (1-\delta_{ij})\tfrac{1}{\lambda_j}\scalebox{0.9}{$\sum_{r=1}^{3}$} \lambda_r c_{rj}\smallint\nolimits_{0}^{s_{ir}^{F}(\chi_{ij}^{F}(x,\xi),0)}$}\\
	\nonumber
	&\scalebox{0.9}{$ \times \left[\scalebox{0.9}{$\sum_{k=1}^{3}$}\sigma_{kr}^{++}(\xi_{ir}(\chi_{ij}^{F}(x,\xi),0;s))\right.$}\\
	\nonumber
	&\scalebox{0.9}{$\left. \times L_{ik}(x_{ir}(\chi_{ij}^{F}(x,\xi),0;s),\xi_{ir}(\chi_{ij}^{F}(x,\xi),0;s))\right.$}\\
	\nonumber
	&\scalebox{0.9}{$\left. + \scalebox{0.9}{$\sum_{k=1}^{3}$}\sigma_{kr}^{+-}(\xi_{ir}(\chi_{ij}^{F}(x,\xi),0;s))\right.$}\\
	\nonumber
	&\scalebox{0.9}{$\left. \times K_{ik}(x_{ir}(\chi_{ij}^{F}(x,\xi),0;s),\xi_{ir}(\chi_{ij}^{F}(x,\xi),0;s))\right.$}\\
	\nonumber
	&\scalebox{0.9}{$\left. - \scalebox{0.9}{$\sum_{i<p}^{}$} L_{pr}(\chi_{ij}^{F}(x,\xi),0;s),\xi_{ir}(\chi_{ij}^{F}(x,\xi),0;s))\right.$}\\
	\nonumber
	&\scalebox{0.9}{$\left. \times \left((\lambda_i-\lambda_p)K_{ip}(x_{ir}(\chi_{ij}^{F}(x,\xi),0;s),x_{ir}(\chi_{ij}^{F}(x,\xi),0;s))\right.\right.$}\\
	\nonumber
	&\scalebox{0.9}{$\left.\left. + \sigma_{ip}^{--}(x_{ir}(\chi_{ij}^{F}(x,\xi),0;s))\right)\right.$}\\
	\nonumber
	&\scalebox{0.9}{$\left. - f_{ir}^{+-}(x_{ir}(\chi_{ij}^{F}(x,\xi),0;s),\xi_{ir}(\chi_{ij}^{F}(x,\xi),0;s))\right.$}\\
	\nonumber
	&\scalebox{0.9}{$\left. + \smallint\nolimits_{\xi_{ir}}^{x_{ir}} \scalebox{0.9}{$\sum_{p=1}^{3}$} f_{pr}^{+-}(\tau,\xi_{ir}(\chi_{ij}^{F}(x,\xi),0;s))\right.$}\\
	\nonumber
	&\scalebox{0.9}{$\left. \times K_{ip}(x_{ir}(\chi_{ij}^{F}(x,\xi),0;s),\tau)d\tau\right.$}\\
	\nonumber
	&\scalebox{0.9}{$\left. + \smallint\nolimits_{\xi_{ir}}^{x_{ir}} \scalebox{0.9}{$\sum_{k=1}^{3}$} f_{kr}^{--}(\tau,\xi_{ir}(\chi_{ij}^{F}(x,\xi),0;s))\right.$}\\
	\nonumber
	&\scalebox{0.9}{$\left. \times L_{ik}(x_{ir}(\chi_{ij}^{F}(x,\xi),0;s),\tau)d\tau \right] ds$}\\
	\nonumber
	&\scalebox{0.9}{$+ (1-\delta_{ij})\phi_{ij}(0) + (1-\delta_{ij})\smallint\nolimits_{0}^{\eta_{ij}^{F}(\chi_{ij}^{F}(x,\xi),0)}$} \\
	\nonumber
	&\scalebox{0.9}{$ \times \left[\scalebox{0.9}{$\sum_{k=1}^{3}$} a_{kj}\tilde{\phi}_{ik}(\kappa_{ij}(\chi_{ij}^{F}(x,\xi),0;\eta),\iota_{ij}(\chi_{ij}^{F}(x,\xi),0;\eta))\right.$}\\
	\nonumber
	&\scalebox{0.9}{$\left. + \scalebox{0.9}{$\sum_{p=1}^{3}$} \lambda_p d_{pj} L_{ip}(\kappa_{ij}(\chi_{ij}^{F}(x,\xi),0;\eta),0)\right.$}\\
	\nonumber
	&\scalebox{0.9}{$\left. - \scalebox{0.9}{$\sum_{i<p}^{}$}\tilde{\phi}_{pj}(\kappa_{ij}(\chi_{ij}^{F}(x,\xi),0;\eta),\iota_{ij}(\chi_{ij}^{F}(x,\xi),0;\eta))\right.$}\\
	\nonumber
	&\scalebox{0.9}{$\left. \times ((\lambda_i - \lambda_p)K_{ip}(\kappa_{ij}(\chi_{ij}^{F}(x,\xi),0;\eta),\iota_{ij}(\chi_{ij}^{F}(x,\xi),0;\eta))\right.$}\\
	\nonumber
	&\scalebox{0.9}{$\left. + \sigma_{ip}^{--}(\kappa_{ij}(\chi_{ij}^{F}(x,\xi),0;\eta))) - \varepsilon_{ij}^{++}(\kappa_{ij}(\chi_{ij}^{F}(x,\xi),0;\eta))\right.$}\\
	\nonumber
	&\scalebox{0.9}{$\left.  + \smallint\nolimits_{0}^{\kappa_{ij}} \scalebox{0.9}{$\sum_{p=1}^{3}$} \varepsilon_{pj}^{++}(\kappa_{ij}(\chi_{ij}^{F}(x,\xi),0;\eta))\right.$}\\
	\nonumber
	&\scalebox{0.9}{$\left.\times K_{ip}(\kappa_{ij}(\chi_{ij}^{F}(x,\xi),0;\eta),\tau) d\tau\right.$}\\
	\nonumber
	&\scalebox{0.9}{$\left. + \smallint\nolimits_{0}^{\kappa_{ij}} \scalebox{0.9}{$\sum_{k=1}^{3}$} \varepsilon_{kj}^{+-}(\kappa_{ij}(\chi_{ij}^{F}(x,\xi),0;\eta))\right.$}\\
	\nonumber
	&\scalebox{0.9}{$\left. \times L_{ik}(\kappa_{ij}(\chi_{ij}^{F}(x,\xi),0;\eta),\tau) d\tau \right] d\eta$}\\
	\nonumber
	&\scalebox{0.9}{$ + \smallint\nolimits_{0}^{\nu_{ij}^{F}(x,\xi)}\left[\scalebox{0.9}{$\sum_{k=1}^{3}$} \sigma_{kj}^{--}(\zeta_{ij}(x,\xi;\nu))\right.$}\\
	\nonumber
	&\scalebox{0.9}{$\left.\times K_{ik}(\chi_{ij}(x,\xi;\nu),\zeta_{ij}(x,\xi;\nu))\right.$}\\
	\nonumber
	&\scalebox{0.9}{$\left. + \scalebox{0.9}{$\sum_{p=1}^{3}$} \sigma_{pj}^{+-}(\zeta_{ij}(x,\xi;\nu))L_{ip}(\chi_{ij}(x,\xi;\nu),\zeta_{ij}(x,\xi;\nu))\right.$}\\
	\nonumber
	&\scalebox{0.9}{$\left. - \scalebox{0.9}{$\sum_{i<p}^{}$} K_{pj}(\chi_{ij}(x,\xi;\nu),\zeta_{ij}(x,\xi;\nu))\right.$}\\
	\nonumber
	&\scalebox{0.9}{$\left. \times ((\lambda_i - \lambda_p)K_{ip}(\chi_{ij}(x,\xi;\nu),\chi_{ij}(x,\xi;\nu))\right.$}\\
	\nonumber
	&\scalebox{0.9}{$\left. + \sigma_{ip}^{--}(\chi_{ij}(x,\xi;\nu))) - f_{ij}^{++}(\chi_{ij}(x,\xi;\nu),\zeta_{ij}(x,\xi;\nu))\right.$}\\
	\nonumber
	&\scalebox{0.9}{$\left. + \smallint\nolimits_{\zeta_{ij}}^{\chi_{ij}} \scalebox{0.9}{$\sum_{p=1}^{3}$} (f_{pj}^{++}(\tau,\zeta_{ij}(x,\xi;\nu)) K_{ip}(\chi_{ij}(x,\xi;\nu),\tau)\right.$}\\
	&\scalebox{0.9}{$\left. + f_{pj}^{-+}(\tau,\zeta_{ij}(x,\xi;\nu)) L_{ip}(\chi_{ij}(x,\xi;\nu),\tau)) d\tau\right] d\nu$},
\end{align}
where the coefficient $\delta_{ij}(x,\xi)$ is defined by
\begin{align}
	\delta_{ij}(x,\xi) = \left\{
		\begin{array}{l}
			1 \quad if\text{ } j<i \text{ } and \text{ } \lambda_i\xi - \lambda_j x \geq 0\\
			0 \quad else.
		\end{array}
		\right.	
\end{align}
This coefficient reflects the fact that, as mentioned above, some characteristics terminate on the hypotenuse and others on the axis $\xi=0$. Notice that \eqref{eq_kij} uses the expression of $K_{pj}$ for $i <p$. These terms are also known and bounded (induction assumption).

\subsection{Method of successive approximations}
We now use the method of successive approximations to solve \eqref{eq_lij}, \eqref{eq_phiij} and \eqref{eq_kij}. For $1 \leq i,j \leq 3$, define
\begin{align}
	\nonumber
	&\scalebox{0.9}{$\varphi_{ij}^{1}(x,\xi) = l_{ij}(x_{ij}^{F}) + \smallint\nolimits_{0}^{s_{ij}^{F}}\left[-\scalebox{0.9}{$\sum_{i<p}^{}$} L_{pj}(x_{ij}(x,\xi;s),\xi_{ij}(x,\xi;s))\right.$}\\
	&\scalebox{0.9}{$\left.\quad\times\sigma_{ip}^{--}(x_{ij}(x,\xi;s)) - f_{ij}^{+-}(x,\xi)\right] ds$},\\
	\nonumber
	&\scalebox{0.9}{$\varphi_{ij}^{2}(x,\xi) = -\delta_{ij}\tfrac{\sigma_{ij}^{--}(\chi_{ij}^{F})}{\lambda_i-\lambda_j} + (1-\delta_{ij})\tfrac{1}{\lambda_j}\scalebox{0.9}{$\sum_{k=1}^{3}$} \lambda_k c_{kj} l_{ik}(x_{ij}^{F})$}\\
	\nonumber
	&\scalebox{0.9}{$\quad + (1-\delta_{ij})\tfrac{1}{\lambda_j}\scalebox{0.9}{$\sum_{r=1}^{3}$} \lambda_r c_{rj}\smallint\nolimits_{0}^{s_{ir}^{F}(\chi_{ij}^{F}(x,\xi),0)}$}\\
	\nonumber
	&\scalebox{0.9}{$\quad \times \left[-\scalebox{0.9}{$\sum_{i<p}^{}$} L_{pr}(\chi_{ij}^{F}(x,\xi),0;s),\xi_{ir}(\chi_{ij}^{F}(x,\xi),0;s))\right.$}\\
	\nonumber
	&\scalebox{0.9}{$\left.\quad\times \sigma_{ip}^{--}(x_{ir}(\chi_{ij}^{F}(x,\xi),0;s))\right.$}\\
	\nonumber
	&\scalebox{0.9}{$\left.\quad - f_{ir}^{+-}(x_{ir}(\chi_{ij}^{F}(x,\xi),0;s),\xi_{ir}(\chi_{ij}^{F}(x,\xi),0;s))\right] ds$}\\
	\nonumber
	&\scalebox{0.9}{$\quad + (1-\delta_{ij})\phi_{ij}(0) + (1-\delta_{ij})\smallint\nolimits_{0}^{\eta_{ij}^{F}(\chi_{ij}^{F}(x,\xi),0)}$}\\
	\nonumber
	&\scalebox{0.9}{$\quad \times [ - \scalebox{0.9}{$\sum_{i<p}^{}$} \tilde{\phi}_{pj}(\kappa_{ij}(\chi_{ij}^{F}(x,\xi),0;\eta),\iota_{ij}(\chi_{ij}^{F}(x,\xi),0;\eta))$}\\
	\nonumber
	&\scalebox{0.9}{$\quad \times \sigma_{ip}^{--}(\kappa_{ij}(\chi_{ij}^{F}(x,\xi),0;\eta))$}\\
	\nonumber
	&\scalebox{0.9}{$\quad - \varepsilon_{ij}^{++}(\kappa_{ij}(\chi_{ij}^{F}(x,\xi),0;\eta))] d\eta + \smallint\nolimits_{0}^{\nu_{ij}^{F}(x,\xi)}\left[\right.$}\\
	\nonumber
	&\scalebox{0.9}{$\left.\quad  - \scalebox{0.8}{$\sum_{i<p}^{}$} K_{pj}(\chi_{ij}(x,\xi;\nu),\zeta_{ij}(x,\xi;\nu))\sigma_{ip}^{--}(\chi_{ij}(x,\xi;\nu))\right.$}\\
	&\scalebox{0.9}{$\left.\quad  - f_{ij}^{++}(\chi_{ij}(x,\xi;\nu),\zeta_{ij}(x,\xi;\nu))\right] d\nu$},\\
	\nonumber
	&\scalebox{0.9}{$\varphi_{ij}^{3}(x,\xi) = I_{\{\xi=0\}}(\xi,\xi)\phi_{ij}(0)$}\\
	\nonumber
	&\scalebox{0.9}{$\quad - \smallint\nolimits_{0}^{\eta_{ij}^{F}(x,\xi)}  I_{\{\xi=0\}}(\kappa_{ij}(x,\xi;\eta),\iota_{ij}(x,\xi;\eta)) \left[\right.$}\\
	\nonumber
	&\scalebox{0.9}{$\left.\quad \scalebox{0.9}{$\sum_{i<p}^{}$}\tilde{\phi}_{pj}(\kappa_{ij}(x,\xi;\eta),\iota_{ij}(x,\xi;\eta)) \sigma_{ip}^{--}(\kappa_{ij}(x,\xi;\eta))\right.$}\\
	&\scalebox{0.9}{$\left.\quad + \varepsilon_{ij}^{++}(\kappa_{ij}(x,\xi;\eta)) \right] d\eta$}.
\end{align}
Besides, we define $\vec{H}$ as the vector containing all the kernels as follows:
\begin{align}
    \vec{H} &= [
	L_{11} \cdots L_{33}\quad K_{11} \cdots K_{33}\quad \Phi_{11} \cdots \Phi_{33}
	]^T,\\ 
    \vec{\varphi} &= [
	\varphi_{11}^{1} \cdots \varphi_{33}^{1}\quad \varphi_{11}^{2} \cdots \varphi_{33}^{2}\quad \varphi_{11}^{3} \cdots \varphi_{33}^{3}
	]^T.
\end{align}
 For $\forall 1 \leq i,j \leq 3$, consider the following linear operators acting on $\vec{H}$:
\begin{align}
	\label{eq_Psi1} \nonumber
	&\scalebox{0.9}{$\Psi_{ij}^{1}[\vec{H}](x,\xi) = \smallint\nolimits_{0}^{s_{ij}^{F}(x,\xi)} [\scalebox{0.9}{$\sum_{k=1}^{3}$} \sigma_{kj}^{++}(\xi_{ij}(x,\xi;s))$}\\
	\nonumber
	&\scalebox{0.9}{$\times L_{ik}(x_{ij}(x,\xi;s),\xi_{ij}(x,\xi;s))$} \\
	\nonumber
	&\scalebox{0.9}{$ + \scalebox{0.9}{$\sum_{p=1}^{3}$} \sigma_{pj}^{-+}(\xi_{ij}(x,\xi;s))K_{ip}(x_{ij}(x,\xi;s),\xi_{ij}(x,\xi;s))$}\\
	\nonumber
	&\scalebox{0.9}{$ - \scalebox{0.9}{$\sum_{i<p}^{}$} L_{pj}(x_{ij}(x,\xi;s),\xi_{ij}(x,\xi;s))$}\\
	\nonumber
	&\scalebox{0.9}{$ \times (\lambda_i - \lambda_p)K_{ip}(x_{ij}(x,\xi;s),\xi_{ij}(x,\xi;s))$}\\
	\nonumber
	&\scalebox{0.9}{$ + \smallint\nolimits_{\xi_{ij}}^{x_{ij}} \scalebox{0.9}{$\sum_{p=1}^{3}$} (f_{pj}^{+-}(\tau,\xi_{ij}(x,\xi;s))K_{ip}(x_{ij}(x,\xi;s),\tau)$}\\
	&\scalebox{0.9}{$ + f_{pj}^{--}(\tau,\xi_{ij}(x,\xi;s))L_{ip}(x_{ij}(x,\xi;s),\tau)) d\tau ] ds$},\\
	\label{eq_Psi2} \nonumber
	&\scalebox{0.9}{$\Psi_{ij}^{2}[\vec{H}](x,\xi) = (1-\delta_{ij})\tfrac{1}{\lambda_j}\scalebox{0.9}{$\sum_{r=1}^{3}$} \lambda_r c_{rj}\smallint\nolimits_{0}^{s_{ir}^{F}(\chi_{ij}^{F}(x,\xi),0)}$}\\
	\nonumber
	&\scalebox{0.9}{$ \times \left[\scalebox{0.9}{$\sum_{k=1}^{3}$}\sigma_{kr}^{++}(\xi_{ir}(\chi_{ij}^{F}(x,\xi),0;s))\right.$}\\
	\nonumber
	&\scalebox{0.9}{$\left. \times L_{ik}(x_{ir}(\chi_{ij}^{F}(x,\xi),0;s),\xi_{ir}(\chi_{ij}^{F}(x,\xi),0;s))\right.$}\\
	\nonumber
	&\scalebox{0.9}{$\left. + \scalebox{0.9}{$\sum_{k=1}^{3}$}\sigma_{kr}^{+-}(\xi_{ir}(\chi_{ij}^{F}(x,\xi),0;s))\right.$}\\
	\nonumber
	&\scalebox{0.9}{$\left. \times K_{ik}(x_{ir}(\chi_{ij}^{F}(x,\xi),0;s),\xi_{ir}(\chi_{ij}^{F}(x,\xi),0;s))\right.$}\\
	\nonumber
	&\scalebox{0.9}{$\left. - \scalebox{0.9}{$\sum_{i<p}^{}$} L_{pr}(\chi_{ij}^{F}(x,\xi),0;s),\xi_{ir}(\chi_{ij}^{F}(x,\xi),0;s))\right.$}\\
	\nonumber
	&\scalebox{0.9}{$\left. \times (\lambda_i-\lambda_p)K_{ip}(x_{ir}(\chi_{ij}^{F}(x,\xi),0;s),x_{ir}(\chi_{ij}^{F}(x,\xi),0;s))\right.$}\\
	\nonumber
	&\scalebox{0.9}{$\left. + \smallint\nolimits_{\xi_{ir}}^{x_{ir}} \scalebox{0.9}{$\sum_{p=1}^{3}$} f_{pr}^{+-}(\tau,\xi_{ir}(\chi_{ij}^{F}(x,\xi),0;s))\right.$}\\
	\nonumber
	&\scalebox{0.9}{$\left. \times K_{ip}(x_{ir}(\chi_{ij}^{F}(x,\xi),0;s),\tau)d\tau\right.$}\\
	\nonumber
	&\scalebox{0.9}{$\left. + \smallint\nolimits_{\xi_{ir}}^{x_{ir}} \scalebox{0.9}{$\sum_{k=1}^{3}$} f_{kr}^{--}(\tau,\xi_{ir}(\chi_{ij}^{F}(x,\xi),0;s))\right.$}\\
	\nonumber
	&\scalebox{0.9}{$\left. \times L_{ik}(x_{ir}(\chi_{ij}^{F}(x,\xi),0;s),\tau)d\tau \right] ds$}\\
	\nonumber
	&\scalebox{0.9}{$ + (1-\delta_{ij})\smallint\nolimits_{0}^{\eta_{ij}^{F}(\chi_{ij}^{F}(x,\xi),0)}$}\\
	\nonumber
	&\scalebox{0.9}{$\times \left[\scalebox{0.9}{$\sum_{k=1}^{3}$} a_{kj}\tilde{\phi}_{ik}(\kappa_{ij}(\chi_{ij}^{F}(x,\xi),0;\eta),\iota_{ij}(\chi_{ij}^{F}(x,\xi),0;\eta))\right.$}\\
	\nonumber
	&\scalebox{0.9}{$\left. + \scalebox{0.9}{$\sum_{p=1}^{3}$} \lambda_p d_{pj} L_{ip}(\kappa_{ij}(\chi_{ij}^{F}(x,\xi),0;\eta),0)\right.$}\\
	\nonumber
	&\scalebox{0.9}{$\left. - \scalebox{0.9}{$\sum_{i<p}^{}$} \tilde{\phi}_{pj}(\kappa_{ij}(\chi_{ij}^{F}(x,\xi),0;\eta),\iota_{ij}(\chi_{ij}^{F}(x,\xi),0;\eta))\right.$}\\
	\nonumber
	&\scalebox{0.9}{$\left. \times (\lambda_i - \lambda_p)K_{ip}(\kappa_{ij}(\chi_{ij}^{F}(x,\xi),0;\eta),\iota_{ij}(\chi_{ij}^{F}(x,\xi),0;\eta))\right.$}\\
	\nonumber
	&\scalebox{0.9}{$\left.  + \smallint\nolimits_{0}^{\kappa_{ij}} \scalebox{0.9}{$\sum_{p=1}^{3}$} \varepsilon_{pj}^{++}(\kappa_{ij}(\chi_{ij}^{F}(x,\xi),0;\eta))\right.$}\\
	\nonumber
	&\scalebox{0.9}{$\left. \times K_{ip}(\kappa_{ij}(\chi_{ij}^{F}(x,\xi),0;\eta),\tau) d\tau\right.$}\\
	\nonumber
	&\scalebox{0.9}{$\left. + \smallint\nolimits_{0}^{\kappa_{ij}} \scalebox{0.9}{$\sum_{k=1}^{3}$} \varepsilon_{kj}^{+-}(\kappa_{ij}(\chi_{ij}^{F}(x,\xi),0;\eta))\right.$}\\
	\nonumber
	&\scalebox{0.9}{$\left. \times L_{ik}(\kappa_{ij}(\chi_{ij}^{F}(x,\xi),0;\eta),\tau) d\tau \right] d\eta + \smallint\nolimits_{0}^{\nu_{ij}^{F}(x,\xi)}$}\\
	\nonumber
	&\scalebox{0.9}{$[\scalebox{0.9}{$\sum_{k=1}^{3}$} \sigma_{kj}^{--}(\zeta_{ij}(x,\xi;\nu)) K_{ik}(\chi_{ij}(x,\xi;\nu),\zeta_{ij}(x,\xi;\nu))$}\\
	\nonumber
	&\scalebox{0.9}{$ + \scalebox{0.9}{$\sum_{p=1}^{3}$} \sigma_{pj}^{+-}(\zeta_{ij}(x,\xi;\nu))L_{ip}(\chi_{ij}(x,\xi;\nu),\zeta_{ij}(x,\xi;\nu))$}\\
	\nonumber
	&\scalebox{0.9}{$ - \scalebox{0.9}{$\sum_{i<p}^{}$} K_{pj}(\chi_{ij}(x,\xi;\nu),\zeta_{ij}(x,\xi;\nu))$}\\
	\nonumber
	&\scalebox{0.9}{$\times (\lambda_i - \lambda_p)K_{ip}(\chi_{ij}(x,\xi;\nu),\chi_{ij}(x,\xi;\nu))$}\\
	\nonumber
	&\scalebox{0.9}{$ + \smallint\nolimits_{\zeta_{ij}}^{\chi_{ij}} (\scalebox{0.9}{$\sum_{p=1}^{3}$} f_{pj}^{++}(\tau,\zeta_{ij}(x,\xi;\nu)) K_{ip}(\chi_{ij}(x,\xi;\nu),\tau) $}\\
	&\scalebox{0.9}{$+ f_{pj}^{-+}(\tau,\zeta_{ij}(x,\xi;\nu)) L_{ip}(\chi_{ij}(x,\xi;\nu),\tau)) d\tau] d\nu$} ,\\
	\label{eq_Psi3} \nonumber
	&\scalebox{0.9}{$\Psi_{ij}^{3}[\vec{H}](x,\xi) = \smallint\nolimits_{0}^{\eta_{ij}^{F}(x,\xi)}I_{\{\xi=0\}}(\kappa_{ij}(x,\xi;\eta),\iota_{ij}(x,\xi;\eta))$}\\
	\nonumber
	&\scalebox{0.9}{$ [\scalebox{0.9}{$\sum_{k=1}^{3}$} a_{kj}\tilde{\phi}_{ik}(\kappa_{ij}(x,\xi;\eta),\iota_{ij}(x,\xi;\eta))$}\\
	\nonumber
	&\scalebox{0.9}{$ + \scalebox{0.9}{$\sum_{p=1}^{3}$} \lambda_p d_{pj} L_{ip}(\kappa_{ij}(x,\xi;\eta),0)$}\\
	\nonumber
	&\scalebox{0.9}{$- \scalebox{0.9}{$\sum_{i<p}^{}$} \tilde{\phi}_{pj}(\kappa_{ij}(x,\xi;\eta),\iota_{ij}(x,\xi;\eta))$} \\
	\nonumber
	&\scalebox{0.9}{$ \times (\lambda_i - \lambda_p)K_{ip}(\kappa_{ij}(x,\xi;\eta),\iota_{ij}(x,\xi;\eta))$}\\
	\nonumber
	&\scalebox{0.9}{$ + \smallint\nolimits_{0}^{\kappa_{ij}} \scalebox{0.9}{$\sum_{p=1}^{3}$} (\varepsilon_{pj}^{++}(\kappa_{ij}(x,\xi;\eta))K_{ip}(\kappa_{ij}(x,\xi;\eta),\tau)$}\\
	&\scalebox{0.9}{$+ \varepsilon_{pj}^{+-}(\kappa_{ij}(x,\xi;\eta))L_{ip}(\kappa_{ij}(x,\xi;\eta),\tau)) d\tau ] d\eta$} .
\end{align}
We set 
\begin{align}
    \scalebox{0.95}{$\vec{\Psi}[\vec{H}](x,\xi) = [\Psi_{11}^{1} \cdots \Psi_{33}^{1}\text{ } \Psi_{11}^{2} \cdots \Psi_{33}^{2}\text{ } \Psi_{11}^{3} \cdots \Psi_{33}^{3}]^\top$},
\end{align}
and define the following sequence:
\begin{align}
	&\vec{H}^{0}(x,\xi) = 0,\quad \vec{H}^{q}(x,\xi) = \vec{\varphi} + \vec{\Psi}[\vec{H}^{q-1}](x,\xi).\label{eq_Hq}
\end{align}
One should notice that if the limit exists, then $\vec{H} = \lim_{q\to +\infty} \vec{H}^q(x,\xi)$ is a solution of the integral equations, and thus solves the original hyperbolic system. Besides, define for $q \geq 1$ the increment $\Delta\vec{H}^q = \vec{H}^q - \vec{H}^{q-1}$, with $\Delta\vec{H}^0 = \vec{\varphi}$ by definition. Since the functional $\vec{\Psi}$ is linear, the following equation $\Delta\vec{H}^q(x,\xi) = \vec{\Psi}[\vec{H}^{q-1}](x,\xi)$ holds. Using the definition of $\Delta\vec{H}^q$, it follows that if the sum $\sum_{q=0}^{+\infty}\Delta\vec{H}^q(x,\xi)$ is finite, then one can obtain
\begin{align}\label{eq_Hconverge}
	\vec{H}(x,\xi) = \scalebox{0.9}{$\sum_{q=0}^{+\infty}$}\Delta\vec{H}^q(x,\xi).
\end{align}

\subsection{Convergence of the successive approximation series}
To prove convergence of the series, we look for a recursive upper bound, similarly to \cite{n+1},\cite{Heterodirectional},\cite{minimum}. First, we define
\begin{align*}
	&\scalebox{0.95}{$\overline{\lambda} = \lambda_3, \underline{\lambda} = \tfrac{1}{\lambda_1}, \bar{\mu} = \mathop{\max}\limits_{i,j} \{|\lambda_i - \lambda_p|\}$},\\
	&\scalebox{0.95}{$\bar{a} = \mathop{\max}\limits_{i,j}\{a_{ij}\}, \bar{d} = \mathop{\max}\limits_{i,j}\{d_{ij}\}, \bar{c} = \mathop{\max}\limits_{i,j}\{c_{ij}\}$},\\
	&\scalebox{0.95}{$\bar{f} = \mathop{\max}\limits_{i,j}\mathop{\max}\limits_{(x,\xi) \in \Gamma}\{f_{ij}^{++},f_{ij}^{+-},f_{ij}^{-+}, f_{ij}^{--}\}$},\\
	&\scalebox{0.95}{$\bar{\sigma} = \mathop{\max}\limits_{i,j}\mathop{\max}\limits_{x \in [0,1]}\{\sigma_{ij}^{++},\sigma_{ij}^{+-},\sigma_{ij}^{-+},\sigma_{ij}^{--}\}$},\\ 
	&\scalebox{0.95}{$\bar{\varepsilon} = \mathop{\max}\limits_{i,j}\mathop{\max}\limits_{x \in [0,1]}\{\varepsilon_{ij}^{++}(x),\varepsilon_{ij}^{+-}(x)\}$},~\scalebox{0.95}{$M_\lambda = \mathop{\max}\limits_{i=1,2,3}\{\frac{1}{\lambda_i}\} = \frac{1}{\lambda_1}$},
\end{align*}
and \scalebox{0.9}{$\bar{S} = \mathop{\max}\limits_{p>i,1\leq j \leq n}\{\Vert K_{pj} \Vert, \Vert L_{pj} \Vert, \Vert \tilde{\phi}_{pj} \Vert \}$}, which is well-defined according to the hypothesis $P(s-1)$. Moreover we set
$M = \bar{m}\bar{\varphi}M_\lambda$,
where $\bar{m} = 3\bar{\lambda}\underline{\lambda}\bar{c}(6\bar{\sigma}+3\bar{\mu}\bar{S}+3\bar{f})+3\bar{a}+3\bar{\lambda}\bar{d}+6\bar{\mu}\bar{S}+6\bar{\varepsilon}+6\bar{\sigma}+6\bar{f}$.
\begin{my_claim}\label{clm:1}
	For $q \in \mathbb{N}$, $(x,\xi) \in \Gamma$ and $s_{ij}^{F}$, $\eta_{ij}^{F}$, $\nu_{ij}^{F}$, $x_{ij}(x,\xi;\cdot)$, $\xi_{ij}(x,\xi;\cdot)$, $\kappa_{ij}(x,\xi;\cdot)$, $\iota_{ij}(x,\xi;\cdot)$, $\chi_{ij}(x,\xi;\cdot)$, $\zeta_{ij}(x,\xi;\cdot)$ defined as in \eqref{eq_lijx}, \eqref{eq_lijxi}, \eqref{eq_phiijx}, \eqref{eq_phiijxi}, \eqref{eq_kijx}, \eqref{eq_kijxi}, for $\forall 1 \leq i,j \leq 3$, the following inequalities holds:
	\begin{align}
			& \scalebox{0.95}{$\int_{0}^{s_{ij}^{F}(x,\xi)} x_{ij}^{q}(x,\xi;s) ds \leq M_\lambda \frac{x^{q+1}}{q+1}$},\label{x_ijq}\\
			& \scalebox{0.95}{$\int_{0}^{\eta_{ij}^{F}(x,\xi)} \kappa_{ij}^{q}(x,\xi;\eta) ds \leq M_\lambda \frac{x^{q+1}}{q+1}$},\label{kappa_ijq}\\
			& \scalebox{0.95}{$\int_{0}^{\nu_{ij}^{F}(x,\xi)} \chi_{ij}^{q}(x,\xi;\nu) ds \leq M_\lambda \frac{x^{q+1}}{q+1}$},\label{chi_ijq}
	\end{align}
	where $M_\lambda = \mathop{max}\limits_{i=1,2,3}\Big\{\frac{1}{\lambda_i}\Big\} = \frac{1}{\lambda_1}$.
\end{my_claim}
\begin{proof}\normalfont
	We first proof \eqref{x_ijq}. Consider the following change of integration variable, noting \eqref{eq_lijx} and \eqref{eq_lijxi}:
	\begin{align}
		\tau &= x_{ij}(x,\xi;s),\\
		d\tau &= \frac{d}{ds}x_{ij}(x,\xi;s) ds = -\lambda_i ds.
	\end{align}
	Thus, the left-hand-side of \eqref{x_ijq} rewrites as:
	\begin{align}
		\nonumber
		&\smallint\nolimits_{0}^{s_{ij}^{F}(x,\xi)} x_{ij}^{q}(x,\xi;s) ds= \smallint\nolimits_{x}^{x_{ij}^{F}(x,\xi)} \frac{-\tau^q}{\lambda_i} d\tau\\
		&  = \frac{x^{q+1} - (x_{ij}^{F})^{q+1}}{\lambda_i(q+1)}
		\leq M_\lambda\frac{x^{q+1}}{q+1}.
	\end{align}
	Inequality \eqref{x_ijq} is obtained. Inequalities \eqref{kappa_ijq} and \eqref{chi_ijq} are proved the same way using change of integration variables $\tau = \kappa_{ij}(x,\xi;\eta)$ and $\tau = \chi_{ij}(x,\xi;\nu)$.
\end{proof}
\begin{my_claim}\label{clm:2}
    \normalfont
	For $q \geq 1$, assume that, for $\forall (x,\xi)\in \Gamma$, $\forall i=1,\ldots,27$,
	\begin{align}\label{eq_deltaHq}
		|\Delta\vec{H}_i^q(x,\xi)| \leq \bar{\varphi} \frac{M^q x^q}{q!},
	\end{align}
	where $\Delta\vec{H}_i^q(x,\xi)$ denotes the $i-$th $(i = 1,\ldots,27)$ component of $\Delta\vec{H}^q(x,\xi)$, it follows that $\forall (x,\xi) \in \Gamma, \forall i = 1,2,3, \forall j = 1,2,3$:
	\begin{align}
		& |\Psi_{ij}^{1}[\Delta\vec{H}^q(x,\xi)]| \leq \bar{\varphi}\frac{M^{q+1}x^{q+1}}{(q+1)!},\\
		& |\Psi_{ij}^{2}[\Delta\vec{H}^q(x,\xi)]| \leq \bar{\varphi}\frac{M^{q+1}x^{q+1}}{(q+1)!},\\
		& |\Psi_{ij}^{3}[\Delta\vec{H}^q(x,\xi)]| \leq \bar{\varphi}\frac{M^{q+1}x^{q+1}}{(q+1)!},
	\end{align}
	where $M = [3\bar{\lambda}\underline{\lambda}\bar{c}(6\bar{\sigma}+3\bar{\mu}\bar{S}+3\bar{f})+3\bar{a}+3\bar{\lambda}\bar{d}+6\bar{\mu}\bar{S}+6\bar{\varepsilon}+6\bar{\sigma}+6\bar{f}]\bar{\varphi}M_\lambda$.
\end{my_claim}
\begin{proof}\normalfont
	Considering \eqref{eq_deltaHq}, using the expression of $\Psi_{ij}^1[\vec{H}](x,\xi)$ given by \eqref{eq_Psi1}, for all $i=1$,$\ldots,27$ and $(x,\xi) \in \Gamma$, one obtain
	\begin{align}
		\nonumber
		&|\Psi_{ij}^1[\Delta\vec{H}^q](x,\xi)| \leq \smallint\nolimits_{0}^{s_{ij}^{F}(x,\xi)} \Big|\scalebox{0.9}{$\sum_{k=1}^{3}$} \sigma_{kj}^{++}(\xi_{ij}(x,\xi;s))\\
		\nonumber
		& \times \Delta L_{ik}^{q}(x_{ij}(x,\xi;s),\xi_{ij}(x,\xi;s))\\
		\nonumber
		& + \scalebox{0.9}{$\sum_{p=1}^{3}$} \sigma_{pj}^{-+}(\xi_{ij}(x,\xi;s)) \Delta K_{ip}^{q}(x_{ij}(x,\xi;s),\xi_{ij}(x,\xi;s)).\\
		\nonumber
		& - \scalebox{0.9}{$\sum_{i<p}^{}$} L_{pj}(x_{ij}(x,\xi;s),\xi_{ij}(x,\xi;s))(\lambda_i - \lambda_p)\\
		\nonumber
		& \times \Delta K_{ip}^{q}(x_{ij}(x,\xi;s),\xi_{ij}(x,\xi;s))\\
		\nonumber
		& + \smallint\nolimits_{\xi_{ij}}^{x_{ij}} \scalebox{0.9}{$\sum_{p=1}^{3}$} (f_{pj}^{+-}(\tau,\xi_{ij}(x,\xi;s)) \Delta K_{ip}^{q}(x_{ij}(x,\xi;s),\tau)\\
		& + f_{pj}^{--}(\tau,\xi_{ij}(x,\xi;s))\Delta L_{ip}^{q}(x_{ij}(x,\xi;s),\tau)) d\tau \Big| ds,
	\end{align}
	using \eqref{x_ijq} and \eqref{eq_deltaHq}, which yields
	\begin{align}
		\nonumber
		&|\Psi_{ij}^1[\Delta\vec{H}^q](x,\xi)|\\
		\nonumber
		& \leq (6\bar{\sigma} + 3\bar{\mu}\bar{S}) \smallint\nolimits_{0}^{s_{ij}^{F}(x,\xi)} \bar{\varphi} \tfrac{M^q {x_{ij}(x,\xi;s)}^q}{q!} ds\\
		\nonumber
		&\quad + 6\bar{f} \smallint\nolimits_{0}^{s_{ij}^{F}(x,\xi)} \smallint\nolimits_{\xi_{ij}}^{x_{ij}} \bar{\varphi} \tfrac{M^q {x_{ij}(x,\xi;s)}^q}{q!} d\tau ds\\
		\nonumber
		& = (6\bar{\sigma} + 3\bar{\mu}\bar{S} + 6\bar{f}) \smallint\nolimits_{0}^{s_{ij}^{F}(x,\xi)} \bar{\varphi} \tfrac{M^q {x_{ij}(x,\xi;s)}^q}{q!} ds\\
		\nonumber
		&\leq (6\bar{\sigma} + 3\bar{\mu}\bar{S} + 6\bar{f})\tfrac{\bar{\varphi} M_\lambda}{q!} \tfrac{M^q x^{q+1}}{q+1}\\
		& \leq \bar{\varphi} \tfrac{M^{q+1} x^{q+1}}{(q+1)!}.
	\end{align}
	Similarly, for $1 \leq i,j \leq 3, (x,\xi) \in \Gamma$, one gets
	\begin{align}
		\nonumber
		&|\Psi_{ij}^2[\Delta\vec{H}^q](x,\xi)|\\
		\nonumber
		&\leq [3\bar{\lambda}\underline{\lambda}\bar{c}(6\bar{\sigma} + 3\bar{\mu}\bar{S}) + 3(\bar{a} + \bar{\lambda}\bar{d} + \bar{\mu}\bar{S}) + 3\bar{\lambda}\underline{\lambda}\bar{c}\cdot 6\bar{f}\\
		\nonumber
		&\quad + 6\bar{\varepsilon}]\cdot\bar{\varphi}M_\lambda\tfrac{M^q(\chi_{ij}^{F}(x,\xi))^{q+1}}{(q+1)!}\\
		&\quad + (6\bar{\sigma} + 3\bar{\mu}\bar{S} + 6\bar{f})\bar{\varphi}M_\lambda\tfrac{M^q x^{q+1}}{(q+1)!}
	\end{align}
	Using the fact that $\chi_{ij}^{F}(x,\xi) \leq x$, this yields
	\begin{align}
		\nonumber
		&|\Psi_{ij}^2[\Delta\vec{H}^q](x,\xi)|\\
		\nonumber
		&\leq [3\bar{\lambda}\underline{\lambda}\bar{c}(6\bar{\sigma} + 3\bar{\mu}\bar{S}) + 3(\bar{a} + \bar{\lambda}\bar{d} + \bar{\mu}\bar{S}) + 3\bar{\lambda}\underline{\lambda}\bar{c}\cdot 6\bar{f}\\
		\nonumber
		&\quad + 6\bar{\varepsilon} + 6\bar{\sigma} + 3\bar{\mu}\bar{S} + 6\bar{f}] \cdot\bar{\varphi}M_\lambda\tfrac{M^q x^{q+1}}{(q+1)!}\\
		\nonumber
		&= [3\bar{\lambda}\underline{\lambda}\bar{c}(6\bar{\sigma} + 3\bar{\mu}\bar{S} + 6\bar{f}) + 3\bar{a} + 3\bar{\lambda}\bar{d} + 6\bar{\mu}\bar{S} + 6\bar{\varepsilon}\\
		\nonumber
		&\quad + 6\bar{\sigma} + 6\bar{f} ] \cdot\bar{\varphi}M_\lambda\tfrac{M^q x^{q+1}}{(q+1)!}\\
		&\leq \bar{\varphi} \tfrac{M^{q+1} x^{q+1}}{(q+1)!}.
	\end{align}
	In the same way, for $1 \leq i,j \leq 3, (x,\xi) \in \Gamma$, using the fact that $\kappa_{ij}(x,\xi;\eta) \leq x$, one obtain
	\begin{align}
		\nonumber
		&|\Psi_{ij}^3[\Delta\vec{H}^q](x,\xi)|\\
		\nonumber
		&\leq (3\bar{a} + 3\bar{\lambda}\bar{d} + 3\bar{\mu}\bar{S} + 6\bar{\varepsilon}) \bar{\varphi}M_\lambda \tfrac{M^q x^{q+1}}{(q+1)!}\\
		&\leq \bar{\varphi} \tfrac{M^{q+1} x^{q+1}}{(q+1)!},
	\end{align}
	which concludes the proof.
\end{proof}	
Claim \ref{clm:2} directly leads to Theorem \ref{thm:1}, and one has that the following series normally converges on $\Gamma$ and we have the upper bound
\begin{align}
	|\vec{H}(x,\xi)| = \Big|\scalebox{0.9}{$\sum_{q=0}^{q=+\infty}$} \Delta\vec{H}^q(x,\xi)\Big| \leq \bar{\varphi} e^M.
\end{align}

\section{Proof of Lemma \ref{lem:2}: Well-posedness of the Kernel Equations of $M$ and $N$} \label{adx:observer}
\setcounter{equation}{0}
\renewcommand{\theequation}{D.\arabic{equation}}

Developing Eqs.\eqref{eq_obkernel}--\eqref{eq_P}, $for\text{ } 1 \leq i,j \leq 3$, we get the following set of kernel PDEs
\begin{align}
	\nonumber
	&\scalebox{0.95}{$\lambda_i \partial_x M_{ij}(x,\xi) - \lambda_j \partial_\xi M_{ij}(x,\xi) = \scalebox{0.9}{$\sum_{k=1}^{3}$} \sigma_{ik}^{++}(x) M_{kj}(x,\xi)$}\\
	\nonumber
	&\scalebox{0.95}{$ + \scalebox{0.9}{$\sum_{p=1}^{3}$} \sigma_{ip}^{+-}(x) N_{pj}(x,\xi) - \scalebox{0.9}{$\sum_{i<p}^{}$} M_{ip}(x,\xi)\omega_{pj}(\xi)$}\\
	\nonumber
	&\scalebox{0.95}{$ + f_{ij}^{-+}(x,\xi) + \smallint\nolimits_{\xi}^{x} \scalebox{0.9}{$\sum_{p=1}^{3}$} f_{ip}^{-+}(x,s)N_{pj}(s,\xi) ds$}\\
	&\scalebox{0.9}{$ + \smallint\nolimits_{\xi}^{x} \scalebox{0.9}{$\sum_{k=1}^{3}$} f_{ik}^{--}(x,s)M_{kj}(s,\xi) ds$},\\
	\nonumber
	&\scalebox{0.95}{$\lambda_i \partial_x N_{ij}(x,\xi) + \lambda_j \partial_\xi N_{ij}(x,\xi) = -\scalebox{0.9}{$\sum_{k=1}^{3}$} \sigma_{ik}^{--}(x) N_{kj}(x,\xi)$}\\
	\nonumber
	&\scalebox{0.95}{$ - \scalebox{0.9}{$\sum_{p=1}^{3}$} \sigma_{ip}^{-+}(x) M_{pj}(x,\xi) + \scalebox{0.9}{$\sum_{i<p}^{}$} N_{ip}(x,\xi)\omega_{pj}(\xi)$}\\
	\nonumber
	&\scalebox{0.95}{$ - f_{ij}^{++}(x,\xi) - \smallint\nolimits_{\xi}^{x} \scalebox{0.9}{$\sum_{p=1}^{3}$} f_{ip}^{++}(x,s)N_{pj}(s,\xi) ds$}\\
	&\scalebox{0.95}{$ - \smallint\nolimits_{\xi}^{x} \scalebox{0.9}{$\sum_{k=1}^{3}$} f_{ik}^{+-}(x,s)M_{kj}(s,\xi) ds$},
\end{align}
with the following set of boundary conditions for $\forall 1 \leq i,j \leq 3$:
\begin{align}
	&M_{ij}(x,x) = \tfrac{\sigma_{ij}^{+-}}{\lambda_i + \lambda_j},\\
	&N_{ij}(x,x) = -\tfrac{\sigma_{ij}^{--}}{\lambda_i - \lambda_j},\quad (j<i)\\
	&\omega_{ij}(x) = (\lambda_i - \lambda_j)N_{ij}(x,x) + \sigma_{ij}^{--}(x)\quad (j<i),
\end{align}
where the definition of $\lambda_i$, $\sigma_{ij}^{++}$, $\sigma_{ij}^{+-}$, $\sigma_{ij}^{-+}$, $\sigma_{ij}^{--}$, $\omega_{ij}$, $f_{ij}^{++}$, $f_{ij}^{+-}$, $f_{ij}^{-+}$ and  $f_{ij}^{--}$ are shown in Appendix-\ref{adx:wellposed}. Evaluating \eqref{eq_tibacz}, \eqref{eq_tibacy} at $x = 1$ yields
\begin{align}
	\forall 1 \leq i,j \leq 3,\quad N_{ij}(1,x) = r_i M_{ij}(1,x),
\end{align}
where $r_i$ denotes the $(i,i)$-th (diagonal) entry of matrix $R$, and $d_{ij}^{+}, d_{ij}^{-}$ are given by
\begin{align}
	\nonumber
	&d_{ij}^{+}(x,\xi) = -\scalebox{0.9}{$\sum_{k=1}^{3}$} M_{ik}(x,\xi)\sigma_{kj}^{-+}(y)\\
	& - \smallint\nolimits_{\xi}^{x} \scalebox{0.9}{$\sum_{k=1}^{3}$} M_{ik}(x,s) d_{kj}^{-}(s,\xi) ds + f_{ij}^{--}(x,\xi),\\
	\nonumber
	&d_{ij}^{-}(x,\xi) = -\scalebox{0.9}{$\sum_{k=1}^{3}$} N_{ik}(x,\xi)\sigma_{kj}^{-+}(y)\\
	& - \smallint\nolimits_{\xi}^{x} \scalebox{0.9}{$\sum_{k=1}^{3}$} N_{ik}(x,s) d_{kj}^{-}(s,\xi) ds + f_{ij}^{+-}(x,\xi),
\end{align}
provided the $M$ and $N$ kernels are well-defined. Finally, the observer gains are given by
\begin{align}
	p_{ij}^{+}(x) &= \lambda_j m_{ij}(x,0),\quad p_{ij}^{-}(x) = \lambda_j n_{ij}(x,0).
\end{align}
Indeed, considering the following alternate variables
\begin{align}
	&\bar{M}_{ij}(\chi,y) = M_{ij}(1-y,1-\chi) = M_{ij}(x,\xi),\\
	&\bar{N}_{ij}(\chi,y) = N_{ij}(1-y,1-\chi) = N_{ij}(x,\xi),\\
	&\bar{\omega}_{ij}(y) = \omega_{ij}(\xi),\\
	&\bar{f}_{ij}^{++}(\chi,y) = f_{ij}^{++}(1-y,1-\chi) = f_{ij}^{++}(x,\xi),\\
	&\bar{f}_{ij}^{+-}(\chi,y) = f_{ij}^{+-}(1-y,1-\chi) = f_{ij}^{+-}(x,\xi),\\
	&\bar{f}_{ij}^{-+}(\chi,y) = f_{ij}^{-+}(1-y,1-\chi) = f_{ij}^{-+}(x,\xi),\\
	&\bar{f}_{ij}^{--}(\chi,y) = f_{ij}^{--}(1-y,1-\chi) = f_{ij}^{--}(x,\xi),\\
	&\bar{\sigma}_{ij}^{++}(\chi) = \sigma_{ij}^{++}(x),\quad
	\bar{\sigma}_{ij}^{+-}(\chi) = \sigma_{ij}^{+-}(x),\\
	&\bar{\sigma}_{ij}^{-+}(\chi) = \sigma_{ij}^{-+}(x),\quad
	\bar{\sigma}_{ij}^{--}(\chi) = \sigma_{ij}^{--}(x),
\end{align}
one can prove that this system has the same structure as the controller kernel system in Appendix-\ref{adx:wellposed}, but differs in the absence of ODE, which does not affect the overall proof process. Using a similar proof, we can obtain its well-posedness.

\section{Proof of Theorem \ref{thm:2}: Lyapunov-Based Stability Analysis} \label{adx:lyapunov}
\setcounter{equation}{0}
\renewcommand{\theequation}{E.\arabic{equation}}

In this section, we use a Lyapunov function for the stability analysis of the target system for each mode $n$, to show exponential stability of the origin with a tunable convergence rate. For convenience, we will omit the subscript $n$ in the proof. Define
\begin{align}\label{eq_V1}
	\nonumber
	V &= \zeta_1 X^\top X + \zeta_2\smallint\nolimits_{0}^{1} e^{\delta x} \sigma^\top(t,x)\Sigma^{-1}\sigma(t,x) dx\\
	&\quad + \smallint\nolimits_{0}^{1} e^{-\delta x}\psi^\top(t,x)\Sigma^{-1}\psi(t,x) dx.
\end{align}
Differentiating \eqref{eq_V1} with respect to $t$, we get
\begin{align}\label{eq_dotV}
	\nonumber
	\dot{V} &= 2\zeta_1 X^\top X_t + 2\zeta_2\smallint\nolimits_{0}^{1} e^{\delta x}\sigma^\top(t,x)\Sigma^{-1}\sigma_t(t,x) dx\\
	&\quad + 2\smallint\nolimits_{0}^{1} e^{-\delta x}\psi^\top(t,x)\Sigma^{-1}\psi_t(t,x) dx.
\end{align}
Recalling that $E_1 = \Sigma\Phi(0) + A$ and \eqref{eq_ker} with $\delta_1, \delta_2, \delta_3 > 0$, $E_1$ is a diagonal matrix with entries $-\delta_1, -\delta_2$ and $-\delta_3$. Choosing parameter $c = \min\left\{\delta_1, \delta_2,\delta_3\right\}$, we have $X^\top E_1 X \leq -cX^\top X$.
Substituting \eqref{eq_target}--\eqref{eq_bc} into \eqref{eq_dotV}, we have
\begin{align}\label{eq_dotV1}
	\nonumber
	&\dot{V} \leq -2\zeta_1 c X^\top X + 2\zeta_1 X^\top \Sigma \sigma(t,0) - \zeta_2 \sigma^\top(t,0)\sigma(t,0)\\
	\nonumber
	& - \zeta_2\smallint\nolimits_{0}^{1} e^{\delta x} \sigma^\top(t,x)(\delta I - 2\Sigma^{-1}\Omega(x))\sigma(t,x) dx\\
	\nonumber
	& - \smallint\nolimits_{0}^{1} e^{-\delta x} \psi^\top(t,x)(\delta I - 2\Sigma^{-1}(F_{21}(x) - F_{22}(x)))\psi(t,x) dx\\
	\nonumber
	& + 2\smallint\nolimits_{0}^{1} e^{-\delta x}\psi^\top(t,x)\Sigma^{-1}\smallint\nolimits_{0}^{x} \Xi_2(x,y)\sigma(t,y) dydx\\
	\nonumber
	& +2\smallint\nolimits_{0}^{1} e^{-\delta x}\psi^\top(t,x)\Sigma^{-1}\smallint\nolimits_{0}^{x} \Xi_3(x,y)\psi(t,y) dydx\\
	\nonumber
	& + 2\smallint\nolimits_{0}^{1} e^{-\delta x}\psi^\top(t,x)\Sigma^{-1}((F_{21}(x) + F_{22}(x))\sigma(t,x)\\
	& + \Xi_1(x)X) dx + \psi^\top(t,0)\psi(t,0).
\end{align}
Regarding the last term of \eqref{eq_dotV1}, using $\psi(t,0) = E_2X + C\sigma(t,0)$, we have $\psi^\top(t,0)\psi(t,0) = X^\top{E_2}^\top E_2X + 2X^\top{E_2}^\top C\sigma(t,0)+\sigma^\top(t,0)C^\top C\sigma(t,0)$.Then, the first line and last term of \eqref{eq_dotV1} become
\begin{align}
	\nonumber
	& -X^\top(2\zeta_1 cI - {E_2}^\top E_2)X + 2X^\top(\zeta_1 \Sigma + {E_2}^\top C)\sigma(t,0)\\
	\nonumber
	& - \sigma^\top(t,0)(\zeta_2 I - C^\top C)\sigma(t,0)\\
	&\leq -(2\zeta_1 c - M_1)X^\top X - (\zeta_2 - M_2)\sigma^\top(t,0)\sigma(t,0),
\end{align}
with $M_1 = {\Vert E_2 \Vert}^2 + 1, M_2 = {\Vert \zeta_1 \Sigma + {E_2}^\top	C \Vert}^2 + {\Vert C \Vert}^2$.
The sixth line of \eqref{eq_dotV1} is bounded as follows:
\begin{align}
	\nonumber
	&2\smallint\nolimits_{0}^{1} e^{-\delta x}\psi^\top(t,x)\Sigma^{-1}((F_{21}(x) + F_{22}(x))\sigma(t,x)\\
	\nonumber
	& + \Xi_1(x)X) dx\\
	\nonumber
	& \leq (1 + M_4)\smallint\nolimits_{0}^{1} e^{-\delta x}\psi^\top(t,x)\psi(t,x) dx\\
	& + M_3\smallint\nolimits_{0}^{1} e^{\delta x} \sigma^\top(t,x)\sigma(t,x) dx + X^\top X,
\end{align}
with $M_3 = \mathop{\max}\limits_{x \in [0,1]}{\Vert \Sigma^{-1}(F_{21}(x) + F_{22}(x)) \Vert}^2$ and $M_4 = \mathop{\max}\limits_{x \in [0,1]}{\Vert \left(\Sigma^{-1}\Xi_1(x)\right)^\top \Vert}^2$. The fourth line of \eqref{eq_dotV1} is bounded as follows:
\begin{align}
	\nonumber
	&2\smallint\nolimits_{0}^{1} e^{-\delta x}\psi^\top(t,x)\Sigma^{-1}\smallint\nolimits_{0}^{x} \Xi_2(x,y)\sigma(t,y) dydx\\
	\nonumber
	&\leq \smallint\nolimits_{0}^{1} e^{-\delta x}\psi^\top(t,x)\psi(t,x) dx\\
	& + M_5\smallint\nolimits_{0}^{1} e^{\delta x}\sigma^\top(t,x)\sigma(t,x) dx,
\end{align}
with $M_5 = \mathop{\max}\limits_{x,y \in [0,1]}{\Vert \Sigma^{-1}\Xi_2(x,y) \Vert}^2$. The fifth line of \eqref{eq_dotV1} is also bounded:
\begin{align}
	\nonumber
	&2\smallint\nolimits_{0}^{1} e^{-\delta x}\psi^\top(t,x)\Sigma^{-1}\smallint\nolimits_{0}^{x} \Xi_3(x,y)\psi(t,y) dydx\\
	&\leq M_6\smallint\nolimits_{0}^{1} e^{-\delta x}\psi^\top(t,x)\psi(t,x) dx,
\end{align}
with $M_6 = 1 + \mathop{\max}\limits_{x,y \in [0,1]}{\Vert \Sigma^{-1}\Xi_3(x,y) \Vert}^2$. Thus,
\begin{align}
	\nonumber
	\dot{V}_1 & \leq -(2\zeta_1 c - M_1 - 1)X^\top X - (\zeta_2 - M_2)\sigma^\top(t,0)\sigma(t,0)\\
	\nonumber
	&\quad - \smallint\nolimits_{0}^{1} e^{\delta x}\sigma^\top(t,x)(\zeta_2\delta I - 2\zeta_2\Sigma^{-1}\Omega(x)\\
	\nonumber
	&\quad - M_3I - M_5I)\sigma(t,x) dx\\
	\nonumber
	&\quad + \smallint\nolimits_{0}^{1} e^{-\delta x}\psi^\top(t,x)(M_6I + 2\Sigma^{-1}(F_{21}(x) - F_{22}(x))\\
	&\quad + 2I + M_4I -\delta I)\psi(t,x) dx.
\end{align}
Choosing \scalebox{0.9}{$c = \tfrac{c^{\prime} + 1}{2}$}, \scalebox{0.9}{$\zeta_1 = M_1 + 1$}, \scalebox{0.9}{$\zeta_2 > \max\left\{M_3 + M_5, M_2\right\}$}, \scalebox{0.9}{$\delta > \max\{(2\mathop{\max}\limits_{x \in [0,1]}\Vert F_{21}(x) - F_{22}(x) \Vert + c^{\prime})\Vert \Sigma^{-1} \Vert + M_6 + M_4$}\\\scalebox{0.9}{$ + 2, (c^{\prime} + 2\mathop{\max}\limits_{x \in [0,1]}\Vert \Omega(x) \Vert)\Vert \Sigma^{-1} \Vert + 1\}$} with $c^{\prime} = 2\min\{\delta_1, \delta_2,$\\$\delta_3\} - 1 > 0$, one then obtain
\begin{align}
	\dot{V} \leq -c^{\prime}V.
\end{align}
\noindent
Recalling both direct transformation \eqref{eq_bactr},\eqref{eq_bactr2} and inverse transformation \eqref{eq_invtr},\eqref{eq_invtr2}, applying Cauchy-Schwarz inequality,
we have
\begin{align}
	\nonumber
	&{\Vert p(t,\cdot) \Vert}_{L^2}^{2} + {\Vert q(t,\cdot) \Vert}_{L^2}^{2} + {\Vert r(t,\cdot) \Vert}_{L^2}^{2} + {\Vert s(t,\cdot) \Vert}_{L^2}^{2}\\
	\nonumber
	& + {\Vert u(t,\cdot) \Vert}_{L^2}^{2} + {\Vert v(t,\cdot) \Vert}_{L^2}^{2} + {x_{1}^{2}}(t) + {x_{2}^{2}}(t) + {x_{3}^{2}}(t)\\
	\nonumber
	&\leq K e^{-c^{\prime}t}({\Vert p_0 \Vert}_{L^2}^{2} + {\Vert q_0 \Vert}_{L^2}^{2} + {\Vert r_0 \Vert}_{L^2}^{2} + {\Vert s_0 \Vert}_{L^2}^{2}\\
	& + {\Vert u_0 \Vert}_{L^2}^{2} + {\Vert v_0 \Vert}_{L^2}^{2} + {x_{1}^{2}}(0) + {x_{2}^{2}}(0) + {x_{3}^{2}}(0))
\end{align}
for some positive $K$. Considering
\begin{align}\label{convert}
	\left\{
	\begin{array}{l}
		\alpha_n(t,x) = \tfrac{1}{2}\left[\smallint\nolimits_{0}^{x} (r_n(t,y) + s_n(t,y)) dy + 2x_2(t)\right],\\
		\beta_n(t,x) = \tfrac{1}{2}\left[\smallint\nolimits_{0}^{x} (u_n(t,y) + v_n(t,y)) dy + 2x_3(t)\right],\\
		w_n(t,x) = \smallint\nolimits_{0}^{x} \tfrac{k_2(y)p_n(t,y) + k_1(y)q_n(t,y)}{k_1(y)\cdot k_2(y)} dy + x_1(t),\\
		\alpha_{n,t}(t,x) = \tfrac{r_n(t,x) - s_n(t,x)}{2\sqrt{\mu_1}},
		\beta_{n,t}(t,x) = \tfrac{u_n(t,x) - v_n(t,x)}{2\sqrt{\mu_2}},\\
		w_{n,t}(t,x) = \tfrac{k_2(x)p_n(t,x)-k_1(x)q_n(t,x)}{2\sqrt{\epsilon}(k_1(x)\cdot k_2(x))},\\
		\alpha_{n,x}(t,x) = \tfrac{r_n(t,x) + s_n(t,x)}{2}, \beta_{n,x}(t,x) = \tfrac{u_n(t,x) + v_n(t,x)}{2},\\
		w_{n,x}(t,x) = \tfrac{k_2(x)p_n(t,x) + k_1(x)q_n(t,x)}{2k_1(x)\cdot k_2(x)},
	\end{array}
	\right.
\end{align}
which are obtained from \eqref{eq_riemann}--\eqref{eq_rie}, where functions $k_1(x) = \exp(\sqrt{\epsilon}\bar{c}_1x)$, $k_2(x) = \exp(\sqrt{\epsilon}\bar{c}_2x)$, we thus obtain Theorem \ref{thm:2}.

\end{document}